\newtheorem{theorem}{Theorem} 
\newtheorem{postulate}{Postulate}
\newtheorem{corollary}{Corollary} 
\newtheorem{lemma}{Lemma} 
\theoremstyle{definition}
\newtheorem{remark}{Remark}
\newtheorem{definition}{Definition}
\title{Universal Function Approximation on Graphs} % using Multivalued Functions
\author{%
  Rickard Br\"uel-Gabrielsson \\
  %Stanford University, UnboxAI \\
  \texttt{rbg@cs.stanford.edu} \\
  % examples of more authors
  % \And
  % Coauthor \\
  % Affiliation \\
  % Address \\
  % \texttt{email} \\
  % \AND
  % Coauthor \\
  % Affiliation \\
  % Address \\
  % \texttt{email} \\
  % \And
  % Coauthor \\
  % Affiliation \\
  % Address \\
  % \texttt{email} \\
  % \And
  % Coauthor \\
  % Affiliation \\
  % Address \\
  % \texttt{email} \\
}
\begin{document}

\maketitle

\begin{abstract}
  In this work we produce a framework for constructing universal function approximators on graph isomorphism classes. We prove how this framework comes with a collection of theoretically desirable properties and enables novel analysis. We show how this allows us to achieve state-of-the-art performance on four different well-known datasets in graph classification and separate classes of graphs that other graph-learning methods cannot. Our approach is inspired by persistent homology, dependency parsing for NLP, and multivalued functions. The complexity of the underlying algorithm is $O(\text{\#edges} \times \text{\#nodes} )$ and code is publicly available\footnote{\url{https://github.com/bruel-gabrielsson/universal-function-approximation-on-graphs}}.
\end{abstract}

%%%%%%%%% BODY TEXT
\section{Introduction}

Graphs are natural structures for many sources of data, including molecular, social, biological, and financial networks. Graph learning consists loosely of learning functions from the set of graph isomorphism classes to the set of real numbers, and such functions include node classification, link prediction, and graph classification. Learning on graphs demands effective representation, usually in vector form, and different approaches include graph kernels \cite{kernelssurvey}, deep learning \cite{deepgraphsurvey}, and persistent homology \cite{persistentgraphsurvey}. Recently there has been a growing interest in understanding the discriminative power of certain frameworks \cite{powerful, understandinggnns, WLneural, Loukas2020What} which belongs to the inquiry into what functions on graph isomorphism classes can be learned. We call this the problem of function approximation on graphs. In machine learning, the problem of using neural networks (NNs) for function approximation on $\mathbb{R}^d$ is well-studied and the universal function approximation abilities of NNs as well as recurrent NNs (RNNs) are well known \cite{LESHNO1993861, SIEGELMANN1995132}. In this work, we propose a theoretical foundation for universal function approximation on graphs, and in Section \ref{sec:method} we present an algorithm with universal function approximation abilities on graphs. This paper will focus on the case of graph classification, but with minor modifications, our framework can be extended to other tasks of interest. We take care to develop a framework that is applicable to real-world graph learning problems and in Section \ref{sec:experiments} we show our framework performing at state-of-the-art on graph classification on four well known datasets and discriminating between graphs that other graph learning frameworks cannot.

%\bjn{State the problem first: what is a graph learning problem?  It is good to state the problems you want to solve (e.g. isomorphism) before getting into methods.  You should also define graph representation early on.}
Among deep learning approaches, a popular method is the graph neural network (GNN) \cite{GNNsurvey} which can be as discriminative as the Weisfeiler-Lehman graph isomorphism test \cite{powerful}. In addition, Long Short Term Memory models (LSTMs) that are prevalent in Natural Language Processing (NLP) have been used on graphs \cite{GraphRNN}. % where graphs are preprocessed with graph kernels (similar to word embeddings in NLP).
%and paths in the graphs are selected and processed by a LSTM to generate features of the graph.
%While results are good, these LSTM approaches are less end-to-end compared to GNNs, and, in general, comparisons of models that use node embeddings to those that do not have not been robust.
Using persistent homology features for graph classification \cite{GraphFiltration} also show promising results.
%With the growing use of persistent homology within machine learning \cite{topologylayer} it is worth investigating whether machine learning can be incorporated more deeply into the persistence framework. 
Our work borrows ideas from persistent homology \cite{persistentsurvey} and tree-LSTMs \cite{treelstm}.

% Our work is inspired by the filtration order that is used by persistent homology and partly explores if this order combined with tree-LSTMs [] can be used to improve on default persistent homology features. \bjn{I would motivate the use/inspiration of persistent homology more strongly, e.g. not "it is interesting to investigate..." but, "this work explores how deep learning techniques can be used to strengthen graph representations based on persistent homology"}
%For persistence over nodes and edges, nodes create connected components, while edges kill them. This computation requires a non-strict ordering of the edges and the sublevel filtration of a connected graph until a single connected component appears is equivalent to a minimal spanning tree of the graph. Thus, a tree-LSTM [] can be used to process the node-features and the connectivity of such a tree. However, the problem with this approach is that a minimum spanning tree, for many graphs, misses a lot of information since it might not process all edges. A related approach that is capable of capturing all connectivity simply sorts all edges and uses a tree-LSTM recursively to process them. This method creates features for the connected components that are constructed as more edges are processed, and ultimately we end up with a final set of features for the whole graph.

To be able to discriminate between any isomorphism classes, graph representation should be an injective function on such classes.
%Other things being equal, a graph representation should be an injective function on graph isomorphism classes to discriminate between any such classes. 
In practice this is challenging. 
%Any graph on $n$ vertices can be encoded by a string of length ${n \choose 2}$ that has $n!$ permutations in the same isomorphism class. Thus, a naive isomorphism-injective representation consists in encoding a graph into $n!$ strings of length ${n \choose 2}$. However, this is very computationally expensive and unlikely to be conducive to learning. 
%Unfortunately, 
Even the best known runtime \cite{babai} for such functions is too slow for most real world machine learning problems and their resulting representation is unlikely to be conducive to learning. To our knowledge, there exists no algorithm that produces isomorphism-injective graph representation for machine learning applications. 
We overcome several challenges by considering multivalued functions, with certain injective properties, on graph isomorphism classes instead of injective functions. 
%However, almost all our proofs are put in the Appendix.
%Naively encoding a graph isomorphism class $G$ into a string of length $n \choose 2$ is equiavelent to a multivalued function on graph isomorphism classes and with excellent runtime. Such an encoding can be used to discriminate between any graph isomorphism classes. We will use the approach of multivalued functions to reduce runtime but at the same time create flexible encodings that are conducive to learning in a way that a simple string encoding is not. Specifically, our method is structured to facilitate the detection of shared subgraph between graphs and to be as flexible as possible in its ability to learn a metric on the set of graphs. Ultimately, our method is a restricted universal function approximator on graph isomorphism classes, and it produces a collection of features that have desirable properties.  
% We build up basic theory for universal function approximation on graphs as well as present an algorithm with desirable universal approximation properties; however, almost all proofs are put in the Appendix. We also present some experiments and comparisons to gauge the usefulness of the algorithm.

Our main contributions: (i) Showing that graph representation with certain injective properties is sufficient for universal function approximation on bounded graphs and restricted universal function approximation on unbounded graphs. (ii) A novel algorithm for learning on graphs with universal function approximation properties, that allows for novel analysis, and that achieves state-of-the-art performance on four well known datasets. Our main results are stated and discussed in the main paper, while proof details are found in the Appendix.
	
\section{Theory}

An overview of this section:
(i) Multivalued functions, with injective properties, on graph isomorphism classes behave similarly to injective functions on the same domain. (ii) Such functions are sufficient for universal function approximation on bounded graphs, and (iii) for restricted universal function approximation on unbounded graphs. (iv) We postulate what representation of graphs that is conducive to learning. (v) We relate universal function approximation on graphs to the isomorphism problem, graph canonization, and discuss how basic knowledge about these problems affects the problem of applied universal function approximation on graphs. (vi) We present the outline of an algorithmic idea to address the above investigation.

% ; we explore how such approximation is restricted.
\subsection{Preliminaries}

\begin{figure}[h]
\centering
\begin{tikzpicture}
  \matrix (m) [matrix of math nodes,row sep=6em,column sep=24em,minimum width=2em]
  {
     \bm{\mathcal{G}}= \{[G], [H], ... \} & Y_1 \\
     \mathcal{G} = \{G_1,..., G_k, H_1, ..., H_l, ... \} & Y_2 \\};
  \path[-stealth]
    (m-1-1.east|-m-1-2) edge node [below] {\scriptsize{Injective: $f([G])=f([H]) \Rightarrow [G]=[H]$ (difficult)}}
            node [above] {$f$} (m-1-2)
    (m-2-1.east|-m-2-2) edge node [below] {\scriptsize{Iso-injective: $g(G)=g(H) \Rightarrow [G]=[H]$ (easy)}}
            node [above] {$g$} (m-2-2)
    (m-2-2) edge node [left] {$f \circ \bm{g}^{-1} $} (m-1-2)
    % (m-1-1) edge [double] node [above] { $\ \ \ \bm{g}$} (m-2-2);
    (m-2-1) edge node [above] { $(f \circ \bm{g}^{-1}) \circ g \ \ \ \ \ \ \ \ \ \ \ \ \ \ \ \ \ \ \ \ \ \ $} 
            node [below] {$\ \ \ \ \ \ \ \ \ \ \ \ \ =f \circ (/ \simeq)$} (m-1-2)
    (m-2-1) edge node [left] {$/\simeq$} (m-1-1);
\end{tikzpicture}
\caption{Diagram of the relations between injective functions on graph isomorphism classes, $\bm{\mathcal{G}}$, and iso-injective functions on graphs, $\mathcal{G}$. Constructing iso-injective functions on $\mathcal{G}$ is much easier than constructing injective functions on $\bm{\mathcal{G}}$, and by the existence of the well-defined function $f \circ \bm{g}^{-1}$ we do not lose much by switching our attention to iso-injective functions on $\mathcal{G}$. }
\label{fig:diagram}
\end{figure}
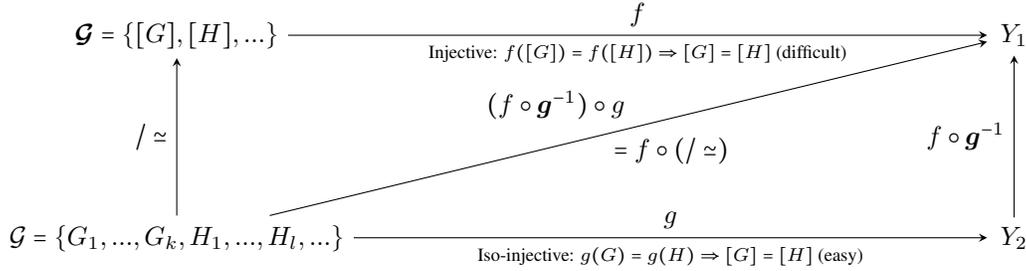

\begin{definition}
A \textit{graph} (undirected multigraph) $G$ is an ordered triple $G:=(V(G), E(G), l)$ with $V(G):=\{ 1,2,\dots,n\}$ a set of vertices or nodes, $E(G)$, a multiset of $m$ unordered pairs of nodes, called edges, and a label function $l:V(G) \rightarrow \mathbb{N}_+$ on its set of nodes. The \textit{size of graph} $G$ is $|G| := |V(G)|+|E(G)|+ \sup\{ l(v) \ | \ v \in V(G) \}$, and we assume all graphs are finite. % in terms of nodes, edges, and labels. 
\end{definition}

% \begin{definition}
% A \textit{subgraph} $S$ of a graph $G$, denoted $S \subset G$, is another graph formed from a subset of the vertices and edges of $G$. The vertex subset must include all endpoints of the edge subset, but may also include additional vertices. 
% \end{definition}

\begin{definition}
Two graphs $G$ and $H$ are \textit{isomorphic} ($G \simeq H$) if there exists a bijection $\phi:V(G) \rightarrow V(H)$ that preserves edges and labels, i.e. a graph \textit{isomorphism}.
\end{definition}

% \begin{definition}
% For a graph $G$ and any subgraphs $S_1,S_2 \subset G$, $S_1 \equiv S_2$ implies that $S_1$ and $S_2$ have the same set of nodes, $V(S_1)=V(S_2)$, and the same multiset of edges, $E(S_1)=E(S_2)$.
% \end{definition}

% Our algorithms use actual instantiations of graphs, and if two graphs $G$ and $H$ refers to the same instantiation we write $G \equiv H$ (this is stronger than the existence of an isomorphism). Similarly, for binary set operations on $G$ and $H$ such as $V(G) \cap V(H)$ we mean in terms of the instantiation of $G$ and $H$. We only use these on pairs of subgraphs $S_1,S_2$ of a graph $G$, and then they can be thought to be in terms of the indexing of the nodes of $G$ which then corresponds to its instantiation.

%%% REDUNDANT?
% \begin{remark}
% Functions on nodes $f:V(G) \rightarrow Y$, such as node labels, are functions of graphs too, because it makes no sense to compare indices or nodes between different graphs that are not subgraphs of the same graph. That is, each such function is different for each graph $G$, so if we abuse notation when having also a graph $H$ and $f:V(H) \rightarrow Y$ in a shared context with $G$, then $v_1=v_2$ implies $f(v_1)=f(v_2)$ only if $v_1,v_2 \in V(G)$ or $v_1,v_2 \in V(H)$.
% \end{remark}

\begin{definition} Let $\mathcal{G}$ denote the set of all finite graphs. For $b \in \mathbb{N}$ let $\mathcal{G}_b \subset \mathcal{G}$ denote the set of graphs whose size is bounded by $b$.
\end{definition}

\begin{definition}
Let $\bm{\mathcal{G}}$ denote the set of all finite graph isomorphism classes, i.e. the quotient space $\mathcal{G}/\simeq$. For $b \in \mathbb{N}$ let $\bm{\mathcal{G}}_b \subset \bm{\mathcal{G}}$ denote the set of graph isomorphism classes whose size is bounded by $b$, i.e. $\mathcal{G}_b/\simeq$. In addition, we denote the graph isomorphism class of a graph $G \in \mathcal{G}$ as $[G]$ (coset) meaning for any graphs $G,H \in \mathcal{G}$, $[G]=[H]$ if and only if $G \simeq H$.
\end{definition}

\begin{lemma}
\label{lemma:graphsetscardinality}
The sets $\mathcal{G}$ and $\bm{\mathcal{G}}$ are countably infinite, and the sets $\mathcal{G}_b$ and $\bm{\mathcal{G}}_b$ are finite.
\end{lemma}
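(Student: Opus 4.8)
The plan is to establish the two finiteness claims first, since the countability statements will follow almost immediately from them. The key observation is that a bound on the size $|G|$ simultaneously bounds the number of vertices, the number of edges, and the range of the label function, leaving only finitely many combinatorial choices.

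First I would prove that $\mathcal{G}_b$ is finite. Fix $b \in \mathbb{N}$ and let $G = (V(G), E(G), l) \in \mathcal{G}_b$. From $|G| = |V(G)| + |E(G)| + \sup\{l(v)\} \le b$ and nonnegativity of each summand, we get $|V(G)| \le b$, $|E(G)| \le b$, and $l(v) \le b$ for all $v$. Since by definition $V(G) = \{1, \dots, n\}$ with $n = |V(G)| \le b$, there are at most $b$ possible vertex sets. For each fixed $n$, the number of distinct unordered pairs of vertices (allowing loops) is at most $\binom{n+1}{2} \le \binom{b+1}{2}$, so $E(G)$ is a multiset of at most $b$ elements drawn from a fixed finite set of edge types; the number of such multisets is finite. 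Finally, each of the $n$ vertices receives a label in $\{1, \dots, b\}$, giving at most $b^{n} \le b^{b}$ label functions. Multiplying and summing these finite counts over the finitely many values $n \le b$ shows that $\mathcal{G}_b$ is finite.

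Next, $\bm{\mathcal{G}}_b = \mathcal{G}_b / \simeq$ is a quotient of the finite set $\mathcal{G}_b$, and a quotient of a finite set is finite, so $\bm{\mathcal{G}}_b$ is finite as well. For countability, note that every finite graph $G$ has a well-defined finite size $|G|$, hence lies in $\mathcal{G}_{|G|}$, so $\mathcal{G} = \bigcup_{b \in \mathbb{N}} \mathcal{G}_b$ is a countable union of finite sets and is therefore countable; the analogous decomposition $\bm{\mathcal{G}} = \bigcup_{b} \bm{\mathcal{G}}_b$ gives countability of $\bm{\mathcal{G}}$. To see that both are infinite, I would exhibit the explicit family $G_n$ for $n \ge 1$, where $G_n$ has vertex set $\{1, \dots, n\}$, no edges, and constant label $1$; these graphs have pairwise distinct vertex counts, so they are pairwise non-isomorphic and in particular pairwise distinct, giving infinitely many members of both $\mathcal{G}$ and $\bm{\mathcal{G}}$. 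Combining, $\mathcal{G}$ and $\bm{\mathcal{G}}$ are both countably infinite.

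The routine bookkeeping is the only real work here; the one place that needs a little care is confirming that a multiset of bounded cardinality over a finite alphabet of edge types yields only finitely many possibilities, so that multigraphs (not merely simple graphs) are covered, and that loops are accounted for in the count of admissible pairs. Everything else reduces to the elementary facts that quotients of finite sets are finite and that countable unions of finite sets are countable.
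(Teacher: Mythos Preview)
Your proof is correct and follows essentially the same approach as the paper: partition $\mathcal{G}$ by size, observe that each stratum is finite, and conclude countability from a countable union of finite sets, with the quotient inheriting the cardinality bound. Your version is in fact more complete, since you explicitly verify infiniteness via the family $G_n$ and carefully handle the multiset-of-edges count, whereas the paper's proof leaves both of these implicit.
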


% \begin{definition}
% If we write $f(A)$ where $A$ is a subset of the domain of $f$, we mean the multiset $f(A) := \{f(x) \ | \ x\in A \}_M$ (the subscript $M$ denotes multiset construction).
% \end{definition}

\begin{definition}
A function $f:\mathcal{G}\rightarrow Y$ is \textit{iso-injective} if it is injective with respect to graph isomorphism classes $\bm{\mathcal{G}}$, i.e. for $G,H \in \mathcal{G}$, $f(G)=f(H)$, implies $G \simeq H$.
\end{definition}

% \begin{proofatend}
% Suppose the isomorphism class $\bm{G}$ have $|V(\bm{G})|+|E(\bm{G})|+\max_{v\in V(\bm{G})}(l(v)) = n$, there are only finitely many such graphs $G$.
% \end{proofatend}

\begin{definition}
A \textit{multivalued function} $f:X \Rightarrow Y$ is a function $f:X \rightarrow \mathcal{P}(Y)$, i.e. from $X$ to the powerset of $Y$, such that $f(x)$ is non-empty for every $x \in X$. 
\end{definition}

\begin{definition}
Any function $f:\mathcal{G}\rightarrow Y$ can be seen as a multivalued function $\bm{f}:\bm{\mathcal{G}} \Rightarrow Y$ defined as
$ \bm{f}([G]) := \{ f(H) \ | \ H \in [G] \} $
and we call the size of the set $\bm{f}([G])$ the \textit{class-redundancy} of graph isomorphism class $[G]$.
%\bjn{$\{ f(H) \mid H\in \bm{G}\}$ ???}
\end{definition}

Let $Alg: \mathcal{G} \rightarrow \mathbb{R}^d$ be an iso-injective function. For a graph $G \in \mathcal{G}$ we call the output of $Alg(G)$ the \textit{encoding} of graph $G$. The idea is to construct a universal function approximator by using the universal function approximation properties of NNs. We achieve this by composing $Alg$ with NNs and constructing $Alg$ itself using NNs. Without something similar to an injective function $f:\bm{\mathcal{G}}\rightarrow Y$ we will not arrive at a universal function approximator on $\bm{\mathcal{G}}$. However, we do not lose much by using a multivalued function $\bm{g}:\bm{\mathcal{G}} \Rightarrow Y$ that corresponds to an iso-injective function $g:\mathcal{G} \rightarrow Y$.

\begin{theorem}
\label{thm:fromisotoinjective}
For any injective function $f:\bm{\mathcal{G}}\rightarrow Y$ and iso-injective function $g:\mathcal{G} \rightarrow Y$ there is a well-defined function $h:\operatorname{im}(g) \rightarrow Y$ such that $f = h \circ g$.
\end{theorem}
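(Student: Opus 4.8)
The plan is to define $h$ by inverting $g$ ``up to isomorphism.'' Reading the diagram, I interpret the claimed identity $f = h\circ g$ as an equality of functions $\mathcal{G}\to Y$: the map $G\mapsto f([G])$ (that is, $f\circ(/\!\simeq)$) should agree with $G\mapsto h(g(G))$, so concretely I want $f([G]) = h(g(G))$ for every graph $G$. The only facts I will actually use are that $g$ is iso-injective and that $f$ is an honest single-valued function on isomorphism classes.

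First I would construct $h$ on $\operatorname{im}(g)$ pointwise and then check it is well defined, which is really the only content of the argument. Fix $y\in\operatorname{im}(g)$; its fiber $g^{-1}(y)=\{G\in\mathcal{G}:g(G)=y\}$ is non-empty by definition of the image, so I may choose a representative $G_y$ with $g(G_y)=y$ and set $h(y):=f([G_y])$. To see that this is independent of the choice, suppose $G,H\in g^{-1}(y)$, so that $g(G)=g(H)$. Iso-injectivity of $g$ forces $G\simeq H$, hence $[G]=[H]$, and therefore $f([G])=f([H])$ because $f$ is a function on $\bm{\mathcal{G}}$. Thus $h(y)$ does not depend on the representative, and $h:\operatorname{im}(g)\to Y$ is a well-defined function.

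To finish, I would verify the factorization directly: for an arbitrary $G\in\mathcal{G}$, put $y:=g(G)\in\operatorname{im}(g)$; since $G$ is itself a representative of the fiber over $y$, the construction gives $h(g(G))=h(y)=f([G])$, which is exactly the desired identity. I expect no genuine obstacle here — the statement is really an instance of the universal property of the quotient map $/\!\simeq:\mathcal{G}\to\bm{\mathcal{G}}$, and the only subtlety is the notational one of reading the right-hand $f$ as $f\circ(/\!\simeq)$. It is worth flagging that injectivity of $f$ is never used: iso-injectivity of $g$ alone yields the factorization, and the injectivity hypothesis would become relevant only if one additionally wanted $h$ to be injective on $\operatorname{im}(g)$.
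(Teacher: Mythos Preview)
Your argument is correct and is essentially the paper's own proof: the paper simply packages your ``choose a representative and check well-definedness'' step into the iso-inverse $\bm{g}^{-1}:\operatorname{im}(g)\to\bm{\mathcal{G}}$ (defined in the appendix), and then sets $h=f\circ\bm{g}^{-1}$. Your observation that injectivity of $f$ is never needed is accurate and worth noting.
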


% \begin{figure}
%     \centering
%     \begin{align*}
%     \bm{\mathcal{G}}= \{[G], [H], ... \} &\xrightarrow[\text{(injective)}]{f} \mathbb{R}^d \text{ s.t. } f([G]) = f([H]) \implies [G] = [H] \text{ is hard, but } \\
%     \mathcal{G} = \{G_1,..., G_k, H_1, ..., H_l, ... \} &\xrightarrow[\text{(iso-injective)}]{g} \mathbb{R}^d \text{ s.t. } g(G) = f(H) \implies [G] = [H] \text{ is easy } \\
%     \text{ and } f=(f \circ \bm{g}^{-1})& \circ g \text{ where } f \circ \bm{g}^{-1} : \mathbb{R}^d \rightarrow \mathbb{R}^d \text{ is a well-defined function.}
%     \end{align*}
%     \caption{Injective functions on graph isomorphism classes versus iso-injective functions on graphs.}
%     \label{fig:inj-VS-iso}
% \end{figure}

See Figure \ref{fig:diagram} for a diagram relating these different concepts. For completeness, we also add the following theorem.

\begin{theorem}[recurrent universal approximation theorem \cite{SIEGELMANN1995132}]
\label{thm:recurrentuniversalapproximationtheorem}
For any recursively computable function $f:\{0,1\}^* \rightarrow \{0,1\}^*$ there is a RNN $\phi$ that computes $f$ with a certain runtime $r(|w|)$ where $w$ is the input sequence.
\label{thm:recur}
\end{theorem}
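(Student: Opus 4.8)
The plan is to prove the statement by the classical simulation of Turing machines by recurrent networks, following the construction of \cite{SIEGELMANN1995132}. Since $f$ is recursively computable, by the Church--Turing thesis there is a Turing machine $M$ that computes $f$, so the task reduces to building a single RNN $\phi$ whose recurrent update simulates one step of $M$ and which, after enough iterations, exposes $M$'s output. First I would replace $M$ by an equivalent two-stack machine, splitting the tape at the read/write head into a left stack and a right stack; this is a standard equivalence and lets us store the whole instantaneous configuration of $M$ (two stack contents together with the finite control state) inside the state vector of a recurrent net.

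The heart of the construction is the encoding of each stack as a single bounded rational number. A binary stack $b_1 b_2 \cdots b_k$ (with $b_1$ on top) is encoded in a Cantor-set fashion, e.g. $q = \sum_{i=1}^{k} (2b_i + 1)\, 4^{-i} \in [0,1]$, so that distinct stacks map to distinct, uniformly separated values. Under this encoding, reading the top symbol is a piecewise-linear threshold test on $q$; popping is the affine map $q \mapsto 4q - (2b_1 + 1)$; and pushing a symbol $b$ is the affine map $q \mapsto (q + 2b + 1)/4$. Each of these is realizable by an affine transformation followed by the saturated-linear activation $\sigma(x) = \max(0, \min(1, x))$, which is exactly what a recurrent neuron computes.

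Next I would encode the finite control state of $M$ in a one-hot block of neurons and implement $M$'s transition table --- a finite lookup from (state, observed top symbols) to (new state, stack actions) --- with a fixed finite layer of such neurons. Composing the read, transition, and write stages, one recurrent update of $\phi$ simulates one step of $M$. Iterating $\phi$ for as many steps as $M$ takes on input $w$ leaves, in the stack neurons, an encoding of $M$'s output, which a final decoding layer converts back into $f(w)$; the RNN runtime $r(|w|)$ is then proportional to the running time of $M$ on $w$, which establishes the claimed existence of such an $r$.

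The main obstacle is making the stack encoding cooperate with the restricted neuron update rule. One must choose the base and digit set (the Cantor-set spacing) so that three invariants hold simultaneously across unboundedly many steps: distinct configurations never collide, the top-of-stack symbol is separated from the remaining value by a gap wide enough to be isolated by piecewise-linear thresholds, and the push/pop maps remain exact affine operations that never drive the neuron values outside the region where $\sigma$ behaves linearly. Verifying these arithmetic invariants, and checking that one fixed-size network realizes an \emph{arbitrary} finite transition table, is the technical crux of the argument.
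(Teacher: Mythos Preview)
Your proposal is correct and sketches precisely the two-stack/Cantor-encoding construction of \cite{SIEGELMANN1995132}; the paper itself does not give an independent argument but simply refers the reader to that source, so your approach coincides with (indeed, unpacks) what the paper invokes.
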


%\begin{remark}
Unfortunately Theorem \ref{thm:recur} requires a variable number of recurrent applications that is a function of the input length, which can be hard to allow or control. Furthermore, the sets of graphs we analyze are countable. This makes for a special situation, since a lot of previous work focuses on NNs' ability to approximate Lebesgue integrable functions, but countable subsets of $\mathbb{R}$ have measure zero, rendering such results uninformative. Thus, we focus on pointwise convergence. 
%\end{remark}

\subsection{Bounded Graphs}

With an iso-injective function, universal function approximation on bounded graphs is straightforward.

\begin{theorem}[finite universal approximation theorem]
For any continuous function $f$ on a finite subset $X$ of $\mathbb{R}^d$, there is a NN $\varphi$ with a finite number of hidden layers containing a finite number $n$ of neurons that under mild assumptions on the activation function can approximate $f$ perfectly, i.e.
$ ||f - \varphi||_{\infty} = \sup_{x \in X}|f(x)-\varphi(x)| = 0 $.
\label{thm:finiteuniversal}
\end{theorem}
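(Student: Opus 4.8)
The plan is to prove this by explicit construction, exploiting the fact that $X$ is a \emph{finite} set of points in $\mathbb{R}^d$. Since $X = \{x_1, \dots, x_n\}$ is finite, the target function $f$ is nothing more than a finite list of desired output values $y_i := f(x_i)$, and continuity imposes no real constraint here because every point of a finite set is isolated. So the task reduces to building a network that interpolates exactly: $\varphi(x_i) = y_i$ for each $i$, which then forces $\sup_{x \in X} |f(x) - \varphi(x)| = 0$.

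First I would separate the points. Because the $x_i$ are distinct, I can choose a direction $w \in \mathbb{R}^d$ such that the scalar projections $t_i := \langle w, x_i \rangle$ are all distinct (a generic $w$ works, since the finitely many hyperplanes $\{w : \langle w, x_i - x_j\rangle = 0\}$ do not cover $\mathbb{R}^d$). This collapses the $d$-dimensional interpolation problem onto the real line, reducing everything to interpolating a one-dimensional data set $(t_1, y_1), \dots, (t_n, y_n)$ with distinct nodes $t_i$. The first hidden layer thus computes the single affine feature $t = \langle w, x\rangle$ (plus whatever bias shifts are convenient).

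Next I would build bump-like functions from the activation $\sigma$ that isolate each node $t_i$, so that a linear combination reproduces the values $y_i$. With a step-like or otherwise non-polynomial activation, one can assemble, for each $i$, a function $b_i$ that takes a prescribed value at $t_i$ and can be combined to hit all targets; concretely, one builds $\varphi(x) = \sum_{i} c_i\, \sigma(a_i t + d_i)$ and chooses the coefficients so that the interpolation conditions are satisfied. This is where the phrase \emph{``under mild assumptions on the activation function''} does its work: the assumption must guarantee that the relevant matrix of activation values $[\sigma(a_i t_j + d_i)]$ can be made invertible, or equivalently that the $\sigma(a_i t + d_i)$ are rich enough to span all value-assignments on the $n$ distinct nodes. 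The standard sufficient condition is that $\sigma$ is non-polynomial (or at least not locally a polynomial of bounded degree), which is exactly the hypothesis underlying the classical universal approximation results cited earlier.

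The main obstacle is this last invertibility/spanning step: one must show that finitely many shifted-and-scaled copies of $\sigma$ can realize an \emph{arbitrary} assignment of $n$ values on $n$ distinct real nodes, and this is precisely the place where the activation assumption cannot be dropped (a low-degree polynomial activation would fail, since its span on $n$ nodes has insufficient dimension). I expect the cleanest route is to exhibit, for each node $t_i$, an explicit combination of activations approximating an indicator/bump concentrated near $t_i$ and then take an exact interpolant; alternatively, invoke a Vandermonde-style or Chebyshev-system argument showing the $n \times n$ activation matrix is nonsingular for suitably chosen parameters. Once exact interpolation on the nodes is in hand, the claimed $\sup$-error of $0$ over the finite set $X$ is immediate, and the finiteness of both the number of layers and the neuron count $n$ follows by construction.
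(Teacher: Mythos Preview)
Your argument is correct, but the paper takes a much shorter route. The paper's entire proof is a two-line appeal to a known result: any continuous piecewise-linear function on $\mathbb{R}^d$ is \emph{exactly} representable by a ReLU network (citing \cite{arora2018understanding}), and any function on a finite set $X$ trivially extends to some continuous piecewise-linear function on $\mathbb{R}^d$ (just interpolate). Composing these two facts gives exact representation immediately, with ReLU playing the role of the ``mild assumption.'' Your approach is more self-contained and more general in the activation: by projecting to a generic line and then solving an $n\times n$ interpolation system in shifted/scaled copies of $\sigma$, you cover any activation for which that system can be made nonsingular (essentially any non-polynomial $\sigma$), rather than leaning on the ReLU/piecewise-linear correspondence. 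The trade-off is that you have to do real work at the invertibility step, whereas the paper offloads everything to the cited representation theorem.
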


From Theorem \ref{thm:fromisotoinjective} and since $\mathcal{G}_b$ is finite we arrive at the following:

\begin{theorem}
\label{thm:finiteboundediso}
Any function $f: \bm{\mathcal{G}}_b \rightarrow \mathbb{R}$ can be perfectly approximated by any iso-injective function $Alg:\mathcal{G}_b \rightarrow \mathbb{R}^d$ composed with a NN $\varphi:\mathbb{R}^d \rightarrow \mathbb{R}$.
\end{theorem}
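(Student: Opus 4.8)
The plan is to exhibit a single neural network $\varphi$ for which $\varphi\circ Alg$ reproduces $f$ exactly on every bounded graph, by first factoring $f$ through the encoding $Alg$ and then applying Theorem \ref{thm:finiteuniversal} to the resulting finite-domain function. Since $f$ lives on isomorphism classes while $Alg$ lives on graphs, the first move is to pass to the pullback $\tilde f := f\circ(/\simeq):\mathcal{G}_b\to\mathbb{R}$, i.e. $\tilde f(G)=f([G])$; proving ``perfect approximation'' then means producing $\varphi$ with $\varphi(Alg(G))=\tilde f(G)$ for all $G\in\mathcal{G}_b$.

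Next I would factor $\tilde f$ through $Alg$. This is exactly the content of Theorem \ref{thm:fromisotoinjective}: because $Alg$ is iso-injective, $Alg(G)=Alg(H)$ forces $G\simeq H$, hence $[G]=[H]$ and $f([G])=f([H])$, so the value $\tilde f(G)$ depends only on the encoding $Alg(G)$. This yields a well-defined $h:\operatorname{im}(Alg)\to\mathbb{R}$ with $\tilde f=h\circ Alg$. I would note that this factorization needs only iso-injectivity of $Alg$ and works for arbitrary $f$ (the injectivity hypothesis in Theorem \ref{thm:fromisotoinjective} is not needed merely to obtain well-definedness of $h$); class-redundancy of $Alg$ is harmless, since all encodings of a single class are simply assigned the common value $f([G])$.

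The third step invokes finiteness. By Lemma \ref{lemma:graphsetscardinality} the set $\mathcal{G}_b$ is finite, so its image $X:=\operatorname{im}(Alg)\subset\mathbb{R}^d$ is a finite set of points. A finite subset of $\mathbb{R}^d$ is discrete in the subspace topology, so every function on it, in particular $h$, is automatically continuous. Theorem \ref{thm:finiteuniversal} then applies to $h$ on $X$ and produces a neural network $\varphi:\mathbb{R}^d\to\mathbb{R}$ with $\sup_{x\in X}|h(x)-\varphi(x)|=0$, i.e. $\varphi|_X=h$. Composing, $\varphi(Alg(G))=h(Alg(G))=\tilde f(G)=f([G])$ for every $G\in\mathcal{G}_b$, which is the claimed perfect approximation.

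I do not expect a deep obstacle here, as the result is essentially a bookkeeping assembly of the three cited facts; the only points that require care are typechecking the composition (keeping the quotient map explicit so that $f$ on $\bm{\mathcal{G}}_b$ and $\varphi\circ Alg$ on $\mathcal{G}_b$ are compared through the same pullback) and verifying the hypotheses of Theorem \ref{thm:finiteuniversal}, namely that $X$ is finite (from Lemma \ref{lemma:graphsetscardinality}) and that $h$ is continuous (free, since $X$ is discrete). The one conceptual step worth emphasizing is the well-definedness of $h$, as this is the sole place where the iso-injectivity of $Alg$ is actually used.
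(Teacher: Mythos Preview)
Your proposal is correct and follows essentially the same route as the paper: define $h=f\circ\bm{Alg}^{-1}$ on $\operatorname{im}(Alg)$ (your factorization via Theorem~\ref{thm:fromisotoinjective} is the same map), observe that $\operatorname{im}(Alg)$ is finite by Lemma~\ref{lemma:graphsetscardinality}, and apply Theorem~\ref{thm:finiteuniversal}. Your write-up is in fact a bit more careful than the paper's, explicitly tracking the quotient map and noting that continuity of $h$ is automatic on a finite domain.
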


%%%%%%%%%% UNBOUNDED GRAPHS
\subsection{Unbounded Graphs}

For a function to be pointwise approximated by a NN, boundedness of the function and its domain is essential. Indeed, in the Appendix we prove (i) there is no finite NN with bounded or piecewise-linear activation function that can pointwise approximate an unbounded continuous function on an open bounded domain, and (ii) there is no finite NN with an activation function $\sigma$ and $k \geq 0$ such that $\frac{d^k \sigma}{d x^k}=0$ that can pointwise approximate all continuous functions on unbounded domains.

% \begin{theorem}
% \label{thm:noboundedplinear}
% There is no finite width and depth NN with bounded or piecewise-linear activation function that can pointwise approximate an unbounded continuous function on an open bounded domain.
% \end{theorem}

% \begin{theorem}
% \label{thm:nofinitenoneinifintediff}
% There is no finite width and depth NN with an activation function $\sigma$ and $k \geq 0$ such that $\frac{d^k \sigma}{d x^k}=0$ that can pointwise approximate all continuous functions on unbounded domains.
% \end{theorem}

\begin{theorem}[universal approximation theorem \cite{LESHNO1993861}]
For any $\epsilon > 0$ and continuous function $f$ on a compact subset $X$ of $\mathbb{R}^d$ there is a NN $\varphi$ with a single hidden layer containing a finite number $n$ of neurons that under mild assumptions on the activation function can approximate $f$, i.e.
$ ||f - \varphi||_{\infty} = \sup_{x \in X}|f(x)-\varphi(x)| < \epsilon $.
\label{thm:universal}
\end{theorem}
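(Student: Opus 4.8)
This is the classical universal approximation theorem; since it is quoted from \cite{LESHNO1993861}, the plan is to recall the structure of the standard proof rather than to reprove it from scratch. The object to analyze is the linear span $\mathcal{M}(\sigma) := \operatorname{span}\{ \sigma(w \cdot x + b) : w \in \mathbb{R}^d,\ b \in \mathbb{R}\}$ of the functions computable by a single hidden layer with activation $\sigma$, and the goal is to show that its closure in the topology of uniform convergence on compact sets equals $C(\mathbb{R}^d)$; restricting to the compact set $X$ and unwinding the definition of $\|\cdot\|_\infty$ then yields the stated $\epsilon$-approximation by a finite combination, i.e. a network with some finite number $n$ of hidden neurons. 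The "mild assumptions" are precisely that $\sigma$ be locally bounded, locally (Riemann) integrable, and \emph{not} a polynomial; the non-polynomiality is what drives the entire argument.

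First I would reduce to dimension one. Finite linear combinations of ridge functions, i.e. of functions of the form $x \mapsto g(a \cdot x)$ with $g \in C(\mathbb{R})$, are dense in $C(\mathbb{R}^d)$ uniformly on compacta (each monomial $x_1^{k_1}\cdots x_d^{k_d}$ is a linear combination of powers $(a \cdot x)^k$, so polynomials already lie in their span, and Weierstrass finishes). Hence it suffices to approximate each ridge function $g(a \cdot x)$ by elements of $\mathcal{M}(\sigma)$; fixing the direction $a$ and writing $w = \lambda a$ collapses this to the single-variable density of $\operatorname{span}\{\sigma(\lambda t + \theta) : \lambda, \theta \in \mathbb{R}\}$ in $C(\mathbb{R})$.

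Second, I would remove the smoothness obstruction by mollification. For $\phi \in C_c^\infty(\mathbb{R})$ the convolution $\sigma_\phi := \sigma * \phi$ is smooth, and since its value at $\lambda t + \theta$ is a limit of Riemann sums $\sum_i \phi(y_i)\,\Delta y_i\,\sigma(\lambda t + \theta - y_i)$, the function $\sigma_\phi(\lambda t + \theta)$ lies in the closure of the span. A short argument shows that if $\sigma$ is not a polynomial then $\phi$ can be chosen so that $\sigma_\phi$ is not a polynomial either. Working now with the smooth non-polynomial $\sigma_\phi$, I would differentiate in $\lambda$: since $\frac{\partial^k}{\partial \lambda^k}\sigma_\phi(\lambda t + \theta)\big|_{\lambda = 0} = t^k\, \sigma_\phi^{(k)}(\theta)$, each such derivative is a limit of finite difference quotients that are themselves elements of the span, and non-polynomiality guarantees some $\theta_k$ with $\sigma_\phi^{(k)}(\theta_k) \neq 0$. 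Therefore every monomial $t^k$ lies in the closure of the span, hence so do all polynomials, and Weierstrass closes the one-dimensional case.

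The step I expect to be the real work is the interchange of limits: justifying, using only local boundedness and local integrability of $\sigma$, that the closure of $\mathcal{M}(\sigma)$ is stable both under the convolution above and under passing from difference quotients to derivatives, with convergence uniform on compacta. Once that functional-analytic bookkeeping is in place, combining the univariate density with the ridge-function reduction gives density of $\mathcal{M}(\sigma)$ in $C(X)$, and selecting a finite linear combination within $\epsilon$ produces the single-hidden-layer network $\varphi$ with $n$ neurons claimed in the statement.
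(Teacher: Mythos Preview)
Your sketch is correct and follows the original Leshno--Lin--Pinkus--Schocken argument faithfully: reduce to ridge functions, mollify to get smoothness, differentiate in the scale parameter to recover all monomials from a non-polynomial activation, and close with Weierstrass. The paper, however, does not prove this theorem at all; its ``proof'' consists solely of the sentence ``Proof can be found in \cite{SONODA2017233} and \cite{universalthesis} for a large family of activation functions.'' So you have not diverged from the paper's approach so much as supplied one where the paper simply defers to the literature; your outline is in fact closer to the cited source \cite{LESHNO1993861} than anything the paper itself contains.
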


Though universal approximation theorems come in different forms, we use Theorem \ref{thm:universal} as a ballpark of what NNs are capable off. As shown above, continuity and boundedness of functions are prerequisites. This forces us to take into account the topology of graphs. Indeed, any function $f:\bm{\mathcal{G}} \rightarrow \mathbb{R}^d$ with a \textit{bounded} co-domain will have a convergent subsequence for each sequence in $\bm{\mathcal{G}}$, by Bolzano-Weierstrass. Since a NN $\varphi:\mathbb{R}^d \rightarrow \mathbb{R}^d$ may only approximate continuous functions on $\operatorname{im}(f)$, the same subsequences will be convergent under $\varphi \circ f$. Thus, since $\bm{\mathcal{G}}$ is countably infinite and due to limiting function approximation abilities of NNs, we always, for any $f$, have a convergent \textit{infinite} sequence \textit{without repetition} of graph isomorphism classes. Furthermore, $f$ \textit{determines} such convergent sequences independent of $\varphi$ and should therefore be learnable and flexible so that the convergent sequences can be adapted to the specific task at hand.
%; in fact, for an iso-injective function $Alg:\mathcal{G} \rightarrow \mathbb{R}^d$, since for every graph $G$, $[G]$ is finite, a sequence $G_i$ is convergent in $\mathcal{G}$ if and only if a corresponding sequence $[G_i]$ is convergent in $\bm{\mathcal{G}}$.
See Appendix for more details. This leads to the following remark: \\

\begin{remark}
An injective function $f:\bm{\mathcal{G}} \rightarrow \mathbb{R}^d$ determines a non-empty set of convergent \textit{infinite} sequences \textit{without repetition} in $ \bm{\mathcal{G}} $ under the composition $g =\varphi \circ f$ with any NN $\varphi$. Meaning that $f$ affects which functions $g$ can approximate. Thus, for flexible learning, $f$ should be flexible and learnable to maximize the set of functions that can be approximated by $g$. Hopefully then, we can learn an $f$ such that two graphs $[G]$ and $[H]$ that are close in $||f([G])-f([H])||$ are also close according to some useful metric on $\bm{\mathcal{G}}$. The same holds for iso-injective functions $Alg:\mathcal{G} \rightarrow \mathbb{R}^d$.
%and for any $f$ there is a corresponding $Alg$ such that $Alg|_{\bm{\mathcal{G}}}=f$ and a sequence $G_i$ is convergent for $Alg$ if and only if a corresponding sequence $[G_i]$ is convergent for $f$. I.e. the topoloy on $\bm{\mathcal{G}}$  
\label{remark:converge}
\end{remark}

We are left to create a function $Alg: \mathcal{G} \rightarrow \mathbb{R}^d$ that is bounded but we cannot guarantee it will be closed so that we may use Theorem \ref{thm:universal}; however, we add this tweak:

\begin{theorem}
\label{thm:2ndapproxthm}
For any $\epsilon > 0$ and bounded continuous function $f$ on a bounded subset $X$ of $\mathbb{R}^d$ there is a NN $\varphi$ with a single hidden layer containing a finite number $n$ of neurons that under mild assumptions on the activation function can approximate $f$, i.e.
$ ||f - \varphi||_{\infty} = \sup_{x \in X}|f(x)-\varphi(x)| < \epsilon $.
\end{theorem}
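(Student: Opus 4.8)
The plan is to deduce this from the compact-domain statement, Theorem \ref{thm:universal}, by passing to the closure of the domain. Since $X$ is a bounded subset of $\mathbb{R}^d$, its closure $\bar X$ is closed and bounded, hence compact by the Heine--Borel theorem. The strategy is then threefold: (1) extend $f$ to a continuous function $\bar f : \bar X \to \mathbb{R}$; (2) apply Theorem \ref{thm:universal} to $\bar f$ on the compact set $\bar X$ to obtain a single-hidden-layer network $\varphi$ with a finite number $n$ of neurons satisfying $\sup_{x \in \bar X} |\bar f(x) - \varphi(x)| < \epsilon$; (3) restrict the supremum from $\bar X$ to $X \subseteq \bar X$ and use $\bar f|_X = f$ to conclude $\sup_{x \in X}|f(x) - \varphi(x)| \le \sup_{x \in \bar X}|\bar f(x) - \varphi(x)| < \epsilon$. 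Steps (1)--(3) together yield the claimed network.

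Steps (2) and (3) are routine once (1) is in hand: step (2) is a verbatim application of the already-established compact case, and step (3) is just monotonicity of the supremum under the inclusion $X \subseteq \bar X$. The entire difficulty is therefore concentrated in the extension step (1), and this is the step I expect to be the main obstacle.

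Producing the continuous extension $\bar f$ is not automatic from boundedness alone: a bounded continuous function on a bounded non-closed set need not extend continuously to its closure, as $\sin(1/x)$ on $(0,1)$ shows, so a naive appeal to Tietze (which would require $f$ already continuous on the \emph{closed} set $\bar X$) is unavailable. What makes the extension possible is \emph{uniform} continuity of $f$: a uniformly continuous $f$ maps Cauchy sequences to Cauchy sequences, so for each $p \in \bar X \setminus X$ and any $x_n \to p$ with $x_n \in X$ the values $f(x_n)$ converge to a limit independent of the chosen sequence, and setting $\bar f(p)$ equal to that limit defines a well-defined continuous extension on $\bar X$. I would thus discharge step (1) by first establishing uniform continuity of the function being approximated --- this is exactly the kind of "mild" regularity already assumed elsewhere, and in the intended application $f$ is the restriction to the bounded set $X = \operatorname{im}(Alg)$ of a map that is uniformly continuous there --- and then invoking the Cauchy-continuation argument above. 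Once $\bar f$ is defined, its continuity on the compact set $\bar X$ is immediate, and steps (2)--(3) close the proof.
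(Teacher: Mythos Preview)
Your three-step plan---extend $f$ continuously to the compact closure $\bar X$, invoke Theorem~\ref{thm:universal} there, then restrict---is exactly the route the paper takes. The paper's proof splits into the closed case (immediate) and the non-closed case, where it writes: ``since $f$ is bounded we know that each limit point is finite. Thus, we may just add them and define $g$ as $f$ extended with the limit points.'' In other words, the paper carries out your step~(1) by asserting that boundedness alone guarantees the existence of the boundary limits.

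You are right to be uneasy about that step, and your $\sin(1/x)$ example is precisely the obstruction: boundedness does not force the limits to exist, so the paper's extension argument is, as written, incomplete for the theorem as stated. Your proposed repair via uniform continuity (Cauchy continuity $\Rightarrow$ well-defined continuous extension to $\bar X$) is the standard and correct way to make step~(1) rigorous, but it imports a hypothesis the statement does not provide. So your proposal follows the paper's strategy while being more honest about its weak point; the gap you anticipated is in fact present in the paper's own proof, and your uniform-continuity assumption is the natural patch---though it means you are proving a slightly different (and true) statement rather than the one printed.
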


For example, we can bound any iso-injective function $Alg: \mathcal{G} \rightarrow \mathbb{R}^d$ by composing (this simply forces the convergent sequences to be the values in $\mathbb{R}^d$ with increasing norm) with the injective and continuous Sigmoid function $ \sigma(x) = \frac{1}{1+e^{-x}} $.

% We will show how our framework can approximate such a function, that is both bounded and flexible in approximating what graphs are mapped close to each other. Next, we will look at graphs of bounded size.

\subsection{Learning and Graph Isomorphism Problems}

\begin{definition}
The \textit{graph isomorphism problem} consists in determining whether two finite graphs are isomorphic, and \textit{graph canonization} consists in finding, for graph $G$, a canonical form $Can(G)$, such that every graph that is isomorphic to $G$ has the same canonical form as $G$. 
\end{definition}

% \begin{remark}
% In essence, a canonical form $Can:\mathcal{G} \rightarrow Y$ is an injective function on graph isomorphism classes. Thus, from a solution to the graph canonization problem, one could also solve the problem of graph isomorphism.
% \end{remark}

% \begin{remark}
% For an iso-injective function $f:\mathcal{G} \rightarrow Y$, the set $\bm{f}([G])$ of a graph $G$ is a canonical form.
% \end{remark}

The universal approximation theorems say nothing about the ability to learn functions through gradient descent or generalize to unseen data. Furthermore, a class of graphs occurring in a learning task likely contains non-isomorphic graphs. Therefore, to direct our efforts, we need a hypothesis about what makes learning on graphs tractable. 
%We postulate the following:

\begin{postulate}
A representation (encoding) of graphs that facilitates the detection of shared subgraphs (motifs) between graphs is conducive to learning functions on graphs.
\end{postulate}

With this in mind, an ideal algorithm produces for each graph a representation consisting of the multi-set of canonical forms for all subgraphs of the graph. Even better if the canonical representations of each graph are close (for some useful metric) if they share many isomorphic subgraphs. However, there is a few challenges: (i) The fastest known algorithm for the graph canonization problem runs in quasipolynomial $2^{O((\log n)^c)}$ time \cite{babai}, and (ii) a graph has exponentially $\Omega(n!)$ many distinct subgraphs. 
% \end{enumerate}

% \begin{enumerate}
%     \item The fastest known algorithm for the graph canonization problem runs in quasipolynomial $2^{O((\log n)^c)}$ time.
%     \item A graph has exponentially $O(n!)$ many distinct subgraphs. 
% \end{enumerate}

First, obtaining a canonical form of a graph is expensive and there is no guarantee that two graphs with many shared subgraphs will be close in this representation. Second, obtaining a canonical form for each subgraph of a graph is even more ungainly. We approach these challenges by only producing iso-injective encodings of a graph and a sample of its subgraphs.
% \begin{definition} % (Good for learning)
% We say a \textit{multi-set of encodings} for a graph is \textit{iso-injective} if each feature of the multi-set corresponds to a subgraph of the graph and each encoding is \textit{iso-injective}.
% \end{definition}
Iso-injective encodings of graphs are easily obtained in polynomial time. However, we still want small class-redundancy and flexibility in learning the encodings.

\subsection{Algorithmic Idea}

We construct a \textit{universal function approximator on graph isomorphism classes of finite size} by constructing a multi-set of encodings that are \textit{iso-injective}. Ideally, for efficiency, an algorithm when run on a graph $G$ constructs iso-injective encodings for subgraphs of $G$ as a subprocess in its construction of an iso-injective encoding of $G$. Thus, a recursive local-to-global algorithm is a promising candidate. Consider Algorithm \ref{alg:outline2}; the essence of subset parsing is the following:

% Thus, an algorithm that formulate the graph isomorphism problem as a search problem or a Constraint Satisfaction Problem is not bound to be very helpful for learning. 

% Now, we can preprocess a training set of graphs to get a representation for learning, but ideally we still want the algorithm that produces the representation to be fast during evaluation on unseen graphs. 

%(Differentiability?)

\begin{algorithm}[h]
\caption{Subset Parsing Algorithm}
\label{alg:outline2}
%\SetAlgoLined
\begin{algorithmic}
\STATE {\bfseries Input:} Graph $G$, \\
% \STATE Let $G(V(G),E(G),l)$ be a undirected graph with vertices $V(G)=\{v_1,\dots,v_n\}$, edges $E(G)=\{e_1,\dots,e_m\}$, and label function $l:V(G) \rightarrow \mathbb{N}_+$\;
%  $x = X_{\bm{G}}(\omega_t)$ is the evaluation of the random variable $X_{\bm{G}}$ where $\omega_t$ is the sample of this run on graph isomorphism class $\bm{G}$\; 
\hspace{0.25cm} set $A$ of subgraphs of $G$, and  functions $c:A \rightarrow \mathbb{R}^{d_c}$, \ \ $r:\{\mathbb{R}^{d_c}, \mathbb{R}^{d_c}\}\times \mathcal{P}(h(V)) \times \mathbb{N} \rightarrow \mathbb{R}^{d_c}$
%\hspace{0.25cm} functions $c:A \rightarrow \mathbb{R}^{d_c}$,
%\STATE Let $h:V \rightarrow \mathbb{R}^{d_h}$ be a label function on the nodes\;
%\ \ $r:\{\mathbb{R}^{d_c}, \mathbb{R}^{d_c}\}\times \mathcal{P}(h(V)) \times \mathbb{N} \rightarrow \mathbb{R}^{d_c}$
\STATE {\bfseries Output:} Extended function $c:A \rightarrow \mathbb{R}^{d_c}$
\FOR{$S_1, S_2 \in A$}
    \STATE Let $S_{1,2} = S_1 \cup S_2$\;
    \STATE $c(S_{1,2}) = r(\{c(S_1), c(S_2)\}, \{ l(v) \ | \ v \in V(S_1) \cap V(S_2) \}, 
    |V(S_{1,2})|+|E(S_{1,2})|)$\; % already encoded edges in intersection
   \STATE $A = A \cup \{S_{1,2}\}$ %(A - \{S_1,S_2\}) \cup \{S_{1,2}\} $\;
\ENDFOR

 \end{algorithmic}
\end{algorithm}

%\bjn{explicitly say what the algorithm returns}

\begin{theorem}
For Algorithm \ref{alg:outline2} the encoding $c(S_{1,2})$ with $S_{1,2} = S_1 \cup S_2$ and $|V(S_{1,2})|+|E(S_{1,2})|=p>1$ is iso-injective if we have on input graph $G$
\begin{enumerate}
  \item for all $S \in A \subset G$, with $|V(S)|+|E(S)| < p$ %, with $p > 1$
  \begin{enumerate}
      \item the encoding $c(S)$ is iso-injective
      \item each label $l(v)$ for $v \in V(S)$ is unique 
    \end{enumerate}
  \item $r$ is an injective function
\end{enumerate}
\label{thm:outline2}
\end{theorem}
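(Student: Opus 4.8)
\section*{Proof proposal for Theorem \ref{thm:outline2}}

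The plan is to read the statement as the inductive step of the local-to-global construction: I will show that whenever two size-$p$ subgraphs $S_{1,2}=S_1\cup S_2$ and $T_{1,2}=T_1\cup T_2$ produced by the loop receive the same encoding, they must be isomorphic. First I would invoke hypothesis (2): since $r$ is injective and the encodings coincide,
\[
r\!\left(\{c(S_1),c(S_2)\},\,L_S,\,p\right)=r\!\left(\{c(T_1),c(T_2)\},\,L_T,\,p\right),
\]
where $L_S=\{l(v)\mid v\in V(S_1)\cap V(S_2)\}$ and $L_T$ is defined analogously, forces equality of all three arguments: the multisets of child encodings agree, $L_S=L_T=:L$, and the size coordinate is the common value $p$. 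The children, being the strictly smaller pieces from which a size-$p$ union is assembled, have size below $p$, so hypotheses (1a) and (1b) apply to each of them.

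Next I would convert matched encodings into matched subgraphs. From $\{c(S_1),c(S_2)\}=\{c(T_1),c(T_2)\}$, after possibly swapping $T_1$ and $T_2$ (harmless, since $T_1\cup T_2=T_2\cup T_1$), I may assume $c(S_1)=c(T_1)$ and $c(S_2)=c(T_2)$. Hypothesis (1a) then supplies isomorphisms $\phi_i\colon S_i\to T_i$ for $i=1,2$. The crucial refinement comes from (1b): since every label occurring in $S_i$ (resp.\ $T_i$) occurs exactly once and any isomorphism preserves labels, each $\phi_i$ is forced to be the \emph{unique} label-matching bijection, sending the vertex of $S_i$ carrying a given label to the vertex of $T_i$ carrying that same label.

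The heart of the argument --- and the step I expect to be the main obstacle --- is gluing $\phi_1$ and $\phi_2$ into a single isomorphism $\phi\colon S_{1,2}\to T_{1,2}$ across the shared vertices. For $v\in V(S_1)\cap V(S_2)$ its label $\ell=l(v)$ lies in $L$, so by $L_S=L_T$ there is a vertex $w\in V(T_1)\cap V(T_2)$ with $l(w)=\ell$, unique by (1b); label-matching then forces $\phi_1(v)=w=\phi_2(v)$, so the maps agree on the overlap and $\phi(v):=\phi_i(v)$ is well defined. I would then verify that $\phi$ is a graph isomorphism: it is a bijection $V(S_1)\cup V(S_2)\to V(T_1)\cup V(T_2)$ (injectivity across the seam again reduces, via uniqueness of labels, to two vertices sharing a label in $L$, which must already lie in the overlaps, where $\phi$ coincides with $\phi_1$), it preserves labels since each $\phi_i$ does, and it carries $E(S_1)\cup E(S_2)$ bijectively onto $E(T_1)\cup E(T_2)$ because every edge lies in some $E(S_i)$ with $\phi_i$ edge-preserving and overlap edges mapped consistently by the agreement on shared vertices. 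This yields $S_{1,2}\simeq T_{1,2}$ and hence iso-injectivity of $c$ on the size-$p$ subgraphs. The delicate point throughout is exactly the compatibility of the two partial isomorphisms on $V(S_1)\cap V(S_2)$: hypothesis (1b) is the device that removes the ambiguity in each $\phi_i$ and makes the gluing forced rather than merely possible, while the multiset and edge-multiplicity bookkeeping for multigraphs is routine.
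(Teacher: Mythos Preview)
Your proposal is correct and follows essentially the same route as the paper's proof: unpack $r$ by injectivity to match the unordered pair of child encodings and the intersection label-multiset, invoke (1a) to obtain child isomorphisms $\phi_1,\phi_2$, and use the label uniqueness of (1b) to force $\phi_1$ and $\phi_2$ to agree on $V(S_1)\cap V(S_2)$ so that they glue to a single isomorphism. The only cosmetic difference is that the paper verifies bijectivity of the glued map by a cardinality count on $V(S_2)\setminus I$ versus $V(S_2^{\,*})\setminus I^{*}$, whereas you argue injectivity across the seam directly by chasing labels back into $L$; both work.
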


%\begin{remark}
% If we consider making this into a implementable algorithm, we can injectively encode input $h(V(s_1) \cap V(s_2))$ to $r$ as a $\sum_{s \in V(s_1) \cap V(s_2)}  f(h(s))$ (such a function exists if the size of the intersection bounded). However, if we know the intersection is bounded, we don't need the number of nodes to be bounded. We also need to make sure that all $h$-values in $s_{1,2}$ are unique after we go onto the next step. 

We envision an algorithm that combines encodings of subgraphs $S_{1}, \dots, S_{n}$ into an encoding of graph $S_{1,\dots, n}$, such that if $c(S_{1}), \dots, c(S_{n})$ are iso-injective so is $c(S_{1,\dots, n})$. However, we need to make sure all labels are unique within each subgraph and to injectively encode pairwise intersections. 
%We present one viable approach.

%  Arbitrary intersection is not a problem per se, but handling it properly appears complicated, at least on the surface. At the same time, this reveals that there are ways of using subset parsing to make GNNs extract iso-injective features.
 
%  Instead, we will construct an algorithm that combines two subgraphs an edge at a time, since as we will see, this will simplify the encoding of intersection as well as making sure that $h$-values are unique within each processed subgraph.

%%%%%%%%%%%%%
% METHOD
%%%%%%%%%%%%%
\section{Method}
\label{sec:method}

Methods such as GNNs successfully aggregate label and edge information in a local-to-global fashion; however, GNNs lack sufficiently unique node identification to extract fully expressive representations \cite{powerful}. The quickly growing number (unbounded for graphs in $\mathcal{G}$) of intersections in GNNs' processing of subgraphs complicates analysis. Our method keeps processed subgraphs disjoint (Lemma \ref{lemma:disjoint}) which allows for comparatively simple inductional analysis. We ensure that within a processed subgraph each node-encoding is unique, which together with some additional properties proves sufficient to produce iso-injective encodings for graphs (Theorem \ref{thm:NPAtheorem}). Parsing disjoint subgraphs by adding one edge at a time is inspired by $0$-dimensional persistent homology \cite{persistentsurvey}; the idea being that our method may revert to computing 0-dimensional persistence based on increasing node degrees, and should therefore (neglecting overfitting) perform no worse than certain persistence based kernels \cite{persistentgraphsurvey, GraphFiltration}. See Figure \ref{fig:message_passing} for how message (or information) passing occurs in Node Parsing (Algorithm \ref{alg:NPA}) versus in GNNs.
\begin{figure}[h]
\includegraphics[width=\textwidth]{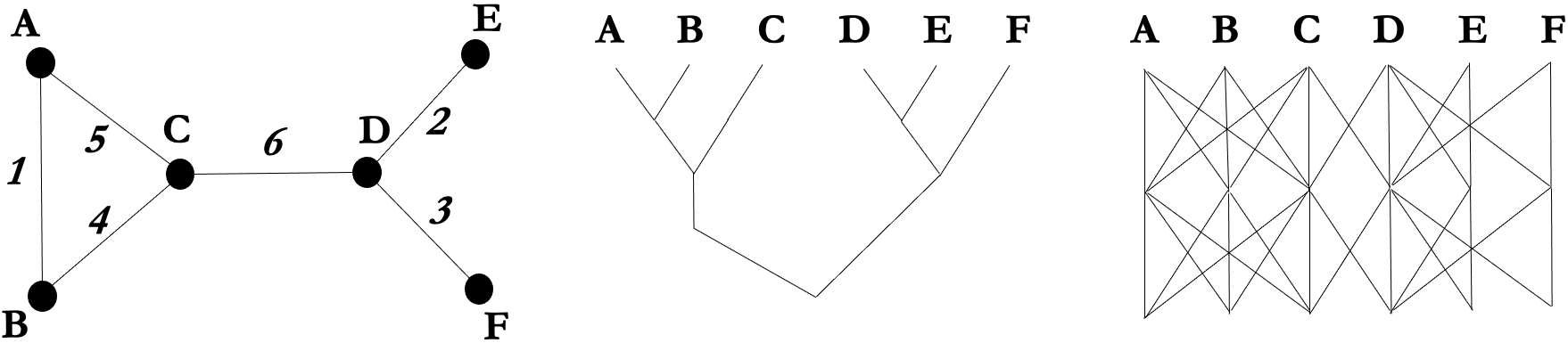}
\caption{Left to right: Graph with edge-ordering. Message passing in Node Parsing on graph. Message passing in GNN on same graph. }
\label{fig:message_passing}
\end{figure}

%For subset parsing methods it is quite easy to construct counter examples if the encoding of nodes is not unique within a processed subgraph. 

In this section we present Algorithm \ref{alg:NPA} and show how with the use of NNs it is a \textit{universal function approximator on graphs} (Theorem \ref{thm:NPAUniversal}). 
%We also present additional analysis as well as the weaker baseline Algorithm \ref{alg:NPBA} for comparison. 
This section is outlined as follows: (i) A description of the Node Parsing Algorithm (NPA). (ii) Proving that, under certain requirements on the functions that NPA make use of, NPA produces iso-injective representations of graphs. (iii) Proving the existence of functions with the prerequisite requirements. (iv) Proving NNs can approximate such functions. (v) Presenting a weaker baseline model for comparison. (vi) Analysis of class-redundancy, parallelizability, and introducing the concept of subgraph droupout.

\subsection{The Algorithm}

% \begin{algorithm}[tb]
%   \caption{Bubble Sort}
%   \label{alg:example}
% \begin{algorithmic}
%   \STATE {\bfseries Input:} data $x_i$, size $m$
%   \REPEAT
%   \STATE Initialize $noChange = true$.
%   \FOR{$i=1$ {\bfseries to} $m-1$}
%   \IF{$x_i > x_{i+1}$}
%   \STATE Swap $x_i$ and $x_{i+1}$
%   \STATE $noChange = false$
%   \ENDIF
%   \ENDFOR
%   \UNTIL{$noChange$ is $true$}
% \end{algorithmic}
% \end{algorithm}

\begin{algorithm}[h]
\caption{Node Parsing Algorithm (NPA)}
\label{alg:NPA}
%\SetAlgoLined
% \KwResult{ A sequence of feature vectors $w_1,...,w_n,w_{n+1},...,w_{n+m}$. }

\begin{algorithmic}
\STATE {\bfseries Input:} Graph $G$,
%\STATE Let $G(V(G),E(G), l) \in \mathcal{G}$ be a undirected graph with vertices $V(G)=\{v_1,\dots,v_n\}$, edges $E(G)=\{e_1,\dots,e_m\}$, and label function $l:V(G) \rightarrow \mathbb{N}_+$\; 
%  $x = X_{\bm{G}}(\omega_t)$ is the evaluation of a random variable $X_{\bm{G}}$ and where $\omega_t$ represents this run of the algorithm on graph isomorphism class $\bm{G}$\;
%  For each $v \in V$, let $l_{v}$ be the label vector associated with each vertex\; \\
%\STATE A label function $l:V(G) \rightarrow \mathbb{N}_+$\; 
\STATE \hspace{0.25cm} functions $s_e:E \times \mathcal{G} \rightarrow \mathbb{R}$, \ \ $s_v:V \times \mathcal{G} \rightarrow \mathbb{R}$, \ \ $h_{init}:\mathbb{N}_+ \rightarrow \mathbb{R}^{d_v}$, \ \ $c_{init}:\mathbb{R}^{d_v} \rightarrow \mathbb{R}^{d_c-1}$
%\STATE \hspace{0.25cm} functions $h_{init}:\mathbb{N}_+ \rightarrow \mathbb{R}^{d_v}$, \ \ $c_{init}:\mathbb{R}^{d_v} \rightarrow \mathbb{R}^{d_c-1}$, 
\STATE \hspace{0.25cm} functions $r_c: \mathbb{R}^{2d_c+2d_v}\times \{0,1\} \rightarrow \mathbb{R}^{d_c}$, \ \ $r_{v}: \mathbb{R}^{d_c+d_v} \times \{0,1\} \rightarrow \mathbb{R}^{d_v}$,
\STATE \hspace{0.25cm} special symbol $zero$
% \STATE {\bfseries Output:} Multisets $W(G):=\{w_i \ | \ i = 1,\dots,n+m \}_M$, \ $C(G):=\{c(S) \ | \ S \in A_{m+1} \}_M \subset W(G)$
\STATE {\bfseries Output:} Multisets $W(G):=[w_i \ | \ i = 1,\dots,n+m ]$, \ $C(G):=[c(S) \ | \ S \in A_{m+1} ] \subset W(G)$
%, set $A_{m+1}$, map $c:A_{m+1}\rightarrow \mathbb{R}^{h_c}$
\STATE Let $A_1 =  V(G)$\; // Where each node is seen as a subgraph of $G$ %\{ v \ | \ v \in V(G) \}$\;
\FOR{$i=1, \dots, n$}
\STATE $h^1(v_i) = h_{init}(l(v_i))$\; 
\STATE $w_{i} = c(v_i) = c_{init}(h^1(v_i))$.append($zero$) // step $0$ encode \;
%\STATE $w_{i} = c(v_i)$\; 
\ENDFOR
\STATE Sort $E$ with $s_e(\cdot,G)$ so $s_e(e_1,G), \dots, s_e(e_m,G)$ are in ascending order\; 
 
\FOR{$i=1, \dots, m$}
\STATE Let $(v_a,v_b) = e_i$ and sort $(v_a,v_b)$ ascendingly with $s_v(\cdot,G)$\;  
%\STATE Sort $(v_a,v_b)$ ascendingly with $s_v(\cdot,G)$
\STATE Let $S_1, S_2 \in A_i$ be subgraphs with $v_a \in S_1$ and $v_b \in S_2$\;
%\STATE Let $S_2 \in A_i$ be subgraph with $v_b \in S_2$\; 
\STATE Let $S_{1,2} = S_1 \cup S_2 \cup (v_a,v_b)$\;
\STATE $w_{n+i} = c(S_{1,2}) = r_c(\{ (c(S_1), h^{i}(v_a)), (c(S_2), h^{i}(v_b)) \}, \mathbbm{1}_{S_1 = S_2})$ // step $i$ encoding of $S_{1,2}$\;
\STATE $h^{i+1} = h^{i}$ // inheriting previous $h$-values\; 
 
\FOR{$v \in V(S_{1,2})$}
\STATE $h^{i+1}(v) = r_{v}(c(S_{1,2}), h^{i}(v), \mathbbm{1}_{v \in V(S_1)})$;\
\ENDFOR
  
\STATE $A_{i+1} = (A_{i} - \{S_1,S_2\}) \cup \{S_{1,2}\} $\;
  
%\STATE $w_{n+i} = c(S_{1,2})$\;
\ENDFOR

\end{algorithmic}
\end{algorithm}

%\bjn{explicitly say what algorithm returns} \bjn{there's a lot going on in this algorithm block.  Is there a way to break it up or simplify?}
%Note how in NPA (Algorithm \ref{alg:NPA}) the specific instantiation of a graph $G$ can affect the output of the algorithm via $s_e$ and $s_v$. This reflects the uncertainty of the actual representation (ordering of the nodes) of the graph isomorphism class $[G]$ as it is fed to our algorithm in the form of $G$.
%, e.g. the indexing of the nodes which we do not control. For the version of NPA we present, the number of possible encodings for a graph isomorphism class $[G]$ is typically much less than all possible orderings of its nodes. Either way, 
%However, for any finite graph isomorphism class $[G]$ the number of orderings of the nodes are finite, so the number of possible outputs is finite. This means that for any graph $G \in \mathcal{G}$ we can always compute the set $\bm{Alg}([G])$ in a finite number of steps.

% \begin{definition}
% For graph $G$ in $\mathcal{G}$ we define the multiset $W(G):=\{w_i \ | \ i = 1,\dots,n+m \}$ after running NPA on $G$.
% \end{definition}

%We want to prove that there exist functions under which NPA constructs iso-injective representation. First, we equip ourselves with a lemma:
\begin{lemma}
In Algorithm \ref{alg:NPA}, an edge (in the second for loop) is always between two disjoint subgraphs in $A_i$ or within the same (with respect to $=$) subgraph in $A_i$. Also, each subgraph in $A_i$ is disjoint and connected. \\ 
\label{lemma:disjoint}
\end{lemma}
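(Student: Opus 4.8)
The plan is to prove both assertions simultaneously by induction on the loop index $i$, maintaining the strengthened invariant that $A_i$ is a \emph{partition} of $V(G)$ into pairwise vertex-disjoint connected subgraphs. Once this invariant is in place, the two claims of the lemma follow immediately: pairwise disjointness and connectedness are literally the invariant, and the edge dichotomy is a formal consequence of $A_i$ being a partition of the vertex set.

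For the base case $i=1$, recall that $A_1 = V(G)$ consists of the $n$ singleton subgraphs, one per node. A single vertex is (vacuously) connected, distinct singletons share no vertex and are therefore pairwise disjoint, and together they cover $V(G)$. Hence the invariant holds initially.

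For the inductive step, I would assume $A_i$ is a partition of $V(G)$ into pairwise disjoint connected subgraphs and examine the edge $e_i=(v_a,v_b)$ processed at step $i$. Since $A_i$ partitions $V(G)$, each endpoint lies in exactly one cell, say $v_a \in S_1$ and $v_b \in S_2$; being cells of a partition, $S_1$ and $S_2$ are either identical or disjoint, which is precisely the first assertion of the lemma. The algorithm then forms $S_{1,2} = S_1 \cup S_2 \cup (v_a,v_b)$ and updates $A_{i+1} = (A_i - \{S_1,S_2\}) \cup \{S_{1,2}\}$, and I would verify the invariant for $A_{i+1}$ in two cases. If $S_1 = S_2$, then $S_{1,2}$ has the same vertex set as $S_1$ with a single edge added, so it remains connected; $A_{i+1}$ simply replaces $S_1$ by a subgraph on the same vertices, so disjointness and coverage are preserved. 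If $S_1 \neq S_2$, the new edge joins a vertex of the connected $S_1$ to a vertex of the connected $S_2$, so $S_{1,2}$ is connected with vertex set $V(S_1)\cup V(S_2)$; any other cell $S' \in A_i$ is disjoint from both $S_1$ and $S_2$ by hypothesis and hence disjoint from $S_{1,2}$, while coverage is preserved because we only merged two existing cells.

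The argument is routine once the correct invariant is isolated; the single point deserving care is the strengthening to include the covering property, which the lemma leaves implicit but which is essential for asserting that $v_a$ and $v_b$ each lie in a well-defined cell $S_1, S_2$ at every step. The key structural fact keeping coverage and disjointness intact under each merge is that the processed edge never introduces a new vertex: its endpoints $v_a, v_b$ already belong to $S_1$ and $S_2$, so $V(S_{1,2}) = V(S_1)\cup V(S_2)$ and no vertex is ever created or lost.
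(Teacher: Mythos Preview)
Your proof is correct and follows essentially the same inductive approach as the paper: both argue by induction on the loop index $i$, verify the base case $A_1=V(G)$ of singleton subgraphs, and in the inductive step split on whether $S_1=S_2$ or $S_1\cap S_2=\emptyset$ to check that $A_{i+1}$ again consists of pairwise disjoint connected subgraphs. Your explicit strengthening of the invariant to a full partition of $V(G)$ (so that $v_a,v_b$ are guaranteed to lie in well-defined cells) is a nice clarification that the paper leaves implicit, but it does not constitute a different method.
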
 

% \begin{remark}
% NPA produces a sequence of encodings for a graph $G$ but when finished, set $A_{m+1}$ contains each of the largest (by inclusion) disjoint connected subgraphs of $G$. Since NPA builds encodings recursively from disjoint subgraphs, NPA constructs encodings for each such largest subgraph independently as if it is run once for each of them. Thus, proving that NPA produces iso-injective encodings for connected graphs, implies each multiset $W(G)$ and $C(G)$ is iso-injective also for disconnected graphs. 
% \end{remark}

%Now we are ready to prove an important theorem:

\begin{theorem}
For Algorithm \ref{alg:NPA}, each produced $c$-encoding is iso-injective, if $h_{init}$, $c_{init}$, and $r_c$ are injective, if for all subgraphs $S_1, S_2 \in A_{i}$ that appear at step $i$ when run on input graph $G$
%in the algorithm we have:
\begin{itemize}
    \item each value $r_v(c(S_{1,2}), \tilde{h}, \mathbbm{1}_{v \in V(S_1)})$ for $\tilde{h} \in  h^{i}(V(S_1) \cup V(S_2))$ is unique,
\end{itemize}
and if for all graphs $S_{1,2}, S^*_{1,2}$ with $c:=c(S_{1,2})=c(S^{*}_{1,2})$, encoded at step $i$ run $G$ and step $j$ run $H$ respectively, 
%of the algorithm we have:
\begin{itemize}
    \item $r_{v}(c, \cdot, \mathbbm{1}_{v \in V(S_1)})$ is injective across $\{ h^{i}(v) \ | \ v \in V(S_{1,2}) \}$ and $\{ h^{j}(v) \ | \ v \in V(S_{1,2}^*) \}$ 
\end{itemize}
% \begin{itemize}
%     \item $r_{v}(c, \cdot, \mathbbm{1}_{v \in V(S_1)})$ is injective across domains $\operatorname{im}(h^{i}|_{V(S_{1,2})})$ and $\operatorname{im}(h^{j}|_{V(S_{1,2}^*)})$ 
% \end{itemize}
% then each produced $c$-value is an iso-injective encoding for graph isomorphism classes in the sense that if two connected graphs $G_1$ and $G_2$ have $c(G_1)=c(G_2)$ then $G_1 \simeq G_2$. 

% Consider Algorithm \ref{alg:NPA}. If $h_{init}$, $c_{init}$, and $r_c$ are injective and if for all subgraphs $S_1, S_2 \in A_{i}$ that appear at step $i$ on run $G$ in the algorithm, we have:
% \begin{itemize}
%     \item each value in the image $r_v(c(S_{1,2}), h^{i}(V(S_1) \cup V(S_2)), \mathbbm{1}_{v \in V(S_1)})$ is unique,
% \end{itemize}
% and for all graphs $S_{1,2}, S^*_{1,2}$ with $c=c_i=c(S_{1,2})=c^*_j=c(S^{*}_{1,2})$ that are encoded at step $i$ run $G$ and step $j$ run $H$, respectively, of the algorithm we have:
% \begin{itemize}
%     \item $r_{v}(c, \cdot, \mathbbm{1}_{v \in V(S_1)})$ is injective across domains $h^{i}(V(S))$ and $h^{j}(V(S^*))$ 
% \end{itemize}
% then each produced $c$-value is an iso-injective encoding for graph isomorphism classes in the sense that if two connected graphs $G_1$ and $G_2$ have $c(G_1)=c(G_2)$ then $G_1 \simeq G_2$. 
\label{thm:main}
\end{theorem}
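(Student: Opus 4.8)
The plan is to argue by strong induction on the size $p := |V(S)| + |E(S)|$ of a produced subgraph $S$, proving simultaneously a statement strong enough to survive the recursion. Iso-injectivity of the encodings alone is too weak to feed back into the induction, because gluing two subgraphs along the single new edge requires the child isomorphisms to carry the two \emph{specific} endpoints to the correct places; I will therefore strengthen the hypothesis so that the node states $h$ act as \emph{rooted} iso-certificates. Concretely, I carry the invariant: whenever two produced subgraphs $S$ (formed at step $i$ on $G$) and $S^*$ (formed at step $j$ on $H$) have $c(S) = c(S^*)$, then not only is $S \simeq S^*$, but every coincidence of node states $h^{i+1}(u) = h^{j+1}(u^*)$ with $u \in V(S)$, $u^* \in V(S^*)$ is realized by a single isomorphism $\phi : S \to S^*$ satisfying $h^{i+1} = h^{j+1} \circ \phi$ throughout; in particular $\phi(u) = u^*$. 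The base case $p = 1$ is a lone node: here $c(v) = c_{init}(h_{init}(l(v)))$ with the symbol $zero$ appended, so injectivity of $c_{init}$ and $h_{init}$ forces $l(v) = l(v^*)$ and hence $\{v\} \simeq \{v^*\}$, while the appended $zero$ keeps these step-$0$ encodings disjoint from every encoding that $r_c$ emits at steps $i \ge 1$.

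For the inductive step take $c(S_{1,2}) = c(S^*_{1,2})$ with $p > 1$, where $S_{1,2} = S_1 \cup S_2 \cup (v_a, v_b)$ and similarly for $S^*_{1,2}$. Since $r_c$ is injective, the unordered pairs it consumes and the flags $\mathbbm{1}_{S_1 = S_2}$ must agree, so the set $\{(c(S_1), h^i(v_a)), (c(S_2), h^i(v_b))\}$ equals $\{(c(S^*_1), h^j(v^*_a)), (c(S^*_2), h^j(v^*_b))\}$. By Lemma~\ref{lemma:disjoint} the four pieces are disjoint and connected and each has size strictly below $p$, so the induction hypothesis applies to each matched pair $c(S_k) = c(S^*_{k'})$. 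The accompanying endpoint coincidence $h^i(v_a) = h^j(v^*_{k'})$ is exactly a state collision between the matched children, so the invariant supplies an isomorphism of those children that matches states; combined with the first displayed hypothesis, which forces the values $r_v(c(S_{1,2}), \cdot, \cdot)$ to stay distinct and hence guarantees that every subgraph carries pairwise-distinct node states, this pins the child isomorphism so that it sends $v_a$ to the matched endpoint. Gluing the two child isomorphisms yields a bijection $V(S_{1,2}) \to V(S^*_{1,2})$ preserving all internal edges and carrying $(v_a, v_b)$ to the new edge of $S^*_{1,2}$; this is the required isomorphism, so $c(S_{1,2})$ is iso-injective. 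The degenerate case $S_1 = S_2$ (an edge inside one piece, recorded by $\mathbbm{1}_{S_1 = S_2}$) runs the same way with a single child.

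The subtle part, and where I expect the real effort to concentrate, is that $r_c$ reads its two children as an \emph{unordered} pair, so the match between $\{S_1, S_2\}$ and $\{S^*_1, S^*_2\}$ is pinned down only up to a swap, whereas the state update $r_v$ is \emph{not} symmetric in the two children because it reads the membership flag $\mathbbm{1}_{v \in V(S_1)}$. To propagate the strengthened (state-matching) half of the invariant I must reconcile this: in the aligned pairing the flags line up and the child state-match lifts verbatim to the parent, but in the swapped pairing the flags are crossed and a naive lift fails. Here the second displayed hypothesis is decisive — injectivity of $r_v(c, \cdot, \mathbbm{1}_{v \in V(S_1)})$ across the state sets of the two colliding subgraphs lets me read back, from any parent state coincidence, both the child state it came from and the membership flag, and thus show that a genuine parent-level state collision is incompatible with the swapped pairing. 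Verifying this incompatibility cleanly (so that the state-matching invariant, and not merely the weaker iso-injectivity, is maintained) is the crux; once the endpoints are correctly pinned the gluing itself is routine.
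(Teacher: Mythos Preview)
Your proposal is correct and follows essentially the same route as the paper: both strengthen the bare iso-injectivity statement to an invariant in which the isomorphism $\phi$ also matches the node states $h$, so that the endpoint coincidences $h^{i}(v_a)=h^{j}(v^*_a)$ recovered from the injectivity of $r_c$ pin the child isomorphisms to send the edge endpoints to the right places, after which the gluing (split into the cases $S_1=S_2$ and $S_1\cap S_2=\emptyset$ via Lemma~\ref{lemma:disjoint}) is routine. The only differences are cosmetic: you run strong induction on the size $|V(S)|+|E(S)|$ whereas the paper runs a double induction on the step indices $(i,j)$ of the two runs; the paper phrases the state-matching invariant in the one-directional form ``identical $h$-values are mapped to each other'' (equivalent to your $h^{i+1}=h^{j+1}\circ\phi$ under the uniqueness hypothesis); and the swapped-pairing subtlety you single out as the crux is dispatched in the paper by the one-line ``else we can just relabel the graphs''.
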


% \begin{definition}
% With a function $f:X \rightarrow Y$ being injective across domains $X_1$ and $X_2$ with $X_1, X_2 \subset X$, we mean that for all $x_1 \in X_1, x_2 \in X_2$ with $f(x_1)=f(x_2)$ we have $x_1=x_2$.
% \end{definition}

By Lemma \ref{lemma:disjoint}, intersection is encoded by $\mathbbm{1}_{S_1 = S_2}$ and uniqueness of $h$-values is established by properties of $r_{v}$ (specifically, $\mathbbm{1}_{S_1 = S_2}$ allows us to discern whether a new edge is between two disjoint isomorphic subgraphs, with identical $c$-encodings, or within the same subgraph). Thus, the proof follows almost immediately from Theorem \ref{thm:outline2}. Furthermore, and critically, $r_v(c(S_{1,2}), \cdot, \mathbbm{1}_{v \in V(S_1)})$ being injective across $\{ h^{i}(v) \ | \ v \in V(S_{1,2}) \}$ and $\{ h^{j}(v) \ | \ v \in V(S_{1,2}^*) \}$ ensures that if we find that two graphs are isomorphic after having applied $r_v$ they were also isomorphic before the application of $r_v$, all the way back to the original node-labels. The special $zero$-symbol allows us to assert whether an encoded graph has zero edges, as we otherwise want to deconstruct an encoded subgraph by considering two earlier encoded subgraphs connected by an edge. 

% \begin{remark}
% From the proof we can also conclude that we do not need to update $h$-values when an edge is within the same connected component because then all the inductive hypotheses are already achieved.
% \end{remark}

\subsection{Existence of Required Functions}
\label{sec:reqfuncs}

%In providing functions with the prerequisite properties we rely on the fact that $\mathcal{G}$ is countably infinite, i.e. there is a bijection between $\mathcal{G}$ and $\mathbb{N}_0$. 
In providing functions with the prerequisite properties we rely on the fact that our labels live in $\mathbb{N}_+$. This is necessary since we want to be able to use NNs, which can only approximate continuous functions, while at the same time our method injectively compresses label and connectivity information. In particular, there exists a continuous and bounded function from $\mathbb{R}^2$ to $\mathbb{R}$ that is injective in $\mathbb{N}^2$, while there exists no continuous function from $\mathbb{R}^2$ to $\mathbb{R}$ that is injective in $\mathbb{R}^2$. %See Appendix for details.

Suppose the $c$-encoding of a subgraph $S_k$ consists of $c(S_k)=(y_k, m^1_k, m^2_k)$ and consider functions
\begin{align*}
h_{init}(l(v))=l(v) \in \mathbb{N}_+, \ \ \ c_{init}(h) = (0, 0, h+1)
\end{align*}
and for subgraphs $S_1$ and $S_2$ with $S_{1,2} = S_1 \cup S_2 \cup (v_a,v_b)$
\begin{align*}
c(S_{1,2}):=r_c(&\{(c(S_1), h(v_a)),(c(S_2), h(v_b))\}, \mathbbm{1}_{S_1 = S_2}) = \\
\big(r(\{(y_1, h(&v_a), m^1_1, m^2_1),(y_2, h(v_b), m^1_2, m^2_2)\}, \mathbbm{1}_{S_1 = S_2}), \ m^2_1+m^2_2+1, \ 2m^2_1 + 2m^2_2 + 2 \big) \\
 %&(1- \mathbbm{1}_{s_1\equiv s_2})e_1+e_2+1]
 %m_1^2 bigger than all h(V(s_1)) 
 %m_2^2 bigger than all h(V(s_2)) 
 %m_{1,2}^1 what to add at application
 %m_{1,2}^2 bigger than all h(V(s_{1,2})) after application
r_v(c(S_{1,2}),h&(v), \mathbbm{1}_{v \in V(S_1)}) =
\left\{\begin{array}{lr}
        h(v)+m^1_{1,2}, \ \ & \text{if } \mathbbm{1}_{v \in V(S_1)}=1 \\
        h(v), \ \ & \text{else} 
        \end{array} \right\}
\end{align*}
% \begin{remark}
% If $r: \{ \mathbb{N}_0^4, \mathbb{N}_0^4 \} \times \mathbb{N}_0 \rightarrow \mathbb{N}_0 $ then all $c(S_k) \in \mathbb{N}_0^3$. Also, in the Appendix we show that $m^1_k=0$ if and only if $|E(S_k)|=0$. Thus, it serves as the required $zero$-symbol.
% \end{remark}
%Consider the following functions:
where
\begin{align*}
&\tau(i,j) = \frac{(i+j)(i+j+1)}{2}+j, \ \ \rho(i,j) = (i+j,ij) \\
&r(y_1,h_1,m_1,n_1,y_2,h_2,m_2,n_2,b) = \tau\big(\tau\big(\rho(\tau^4(y_1,h_1,m_1,n_1), \tau^4(y_2,h_2,m_2,n_2))\big), b\big)
\end{align*}
% \begin{lemma}
% In the above setup, there exists a continuous and bounded function $r:\mathbb{R}^9 \rightarrow \mathbb{R}$ that is injective in $\{ \mathbb{N}^4, \mathbb{N}^4 \}\times \mathbb{N}$. Namely,
% \begin{align*}
% r(&y_1,h_1,m_1,n_1,y_2,h_2,m_2,n_2,b) = \tau\big(\tau\big(\rho(\tau^4(y_1,h_1,m_1,n_1), \tau^4(y_2,h_2,m_2,n_2))\big), b\big)
% \end{align*}
% \label{lemma:injective_r}
% \end{lemma}
In the Appendix we prove that the functions presented in this section satisfy the requirements in Theorem \ref{thm:main}, which allows us to arrive at the following:
\begin{theorem}[NPA Existence Theorem]
There exists functions for Algorithm \ref{alg:NPA} such that every produced graph encoding is iso-injective.
\label{thm:NPAtheorem}
\end{theorem}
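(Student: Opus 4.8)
The plan is to prove existence by exhibiting the explicit functions from Section~\ref{sec:reqfuncs} and verifying that they satisfy every hypothesis of Theorem~\ref{thm:main}; the conclusion then follows immediately from that theorem. Concretely I would take $h_{init}$, $c_{init}$, $r_c$, $r_v$ exactly as displayed, and let $s_e, s_v$ be any fixed tie-breaking orderings, whose particular choice is irrelevant since the structural disjointness of Lemma~\ref{lemma:disjoint} and the injectivity conditions below do not depend on the processing order. The verification splits into three parts: (a) injectivity of $h_{init}$, $c_{init}$, $r_c$; (b) the within-step uniqueness of the $r_v$-outputs; and (c) the cross-step injectivity of $r_v(c,\cdot,b)$.

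For (a), $h_{init}$ is the identity on $\mathbb{N}_+$ and $c_{init}(h)=(0,0,h+1)$ is injective in its last coordinate, so both are injective. For $r_c$ I would first record the integrality invariant that all components $(y_k,m^1_k,m^2_k)$ and all $h$-values remain in $\mathbb{N}$ throughout any run, which is immediate by induction from the initialisation and the arithmetic of the update. On $\mathbb{N}$ the Cantor pairing $\tau(i,j)=\tfrac{(i+j)(i+j+1)}{2}+j$ is a bijection $\mathbb{N}^2\to\mathbb{N}$, so the iterate $\tau^4$ injectively encodes each $4$-tuple as a single natural number; $\rho(i,j)=(i+j,ij)$ injectively encodes an \emph{unordered} pair, since $\{i,j\}$ is recovered as the roots of $x^2-(i+j)x+ij$ by Vieta, which is exactly what is needed to absorb the multiset $\{(c(S_1),h(v_a)),(c(S_2),h(v_b))\}$; and the final $\tau(\cdot,b)$ injectively appends the intersection bit. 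As $r$ is a composition of injective maps, its first output coordinate is already injective, hence $r_c$ is injective.

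The substantive step is (b), which I would prove by maintaining the invariant that within any subgraph $S$ with $c(S)=(y,m^1,m^2)$ the $h$-values of its nodes are pairwise distinct and strictly below $m^2$. This holds at initialisation, where the single value $l(v)$ lies below $m^2=l(v)+1$. For the inductive step at a merge $S_{1,2}=S_1\cup S_2\cup(v_a,v_b)$, the nodes of $S_2$ keep their values, all $<m^2_2$, while the nodes of $S_1$ are translated by $m^1_{1,2}=m^2_1+m^2_2+1$, landing in $[m^1_{1,2},\,m^1_{1,2}+m^2_1)$; since $m^2_2\le m^1_{1,2}$ the two blocks are disjoint, distinctness is preserved within each block by the inductive hypothesis because translation is injective, and the largest value stays below $2m^2_1+2m^2_2+2=m^2_{1,2}$, reestablishing the invariant. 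The degenerate case $S_1=S_2$ translates all nodes uniformly and is handled by the same bound. This yields uniqueness of the $r_v$-outputs over $h^i(V(S_1)\cup V(S_2))$.

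Finally, (c) is essentially automatic: for fixed $c$ and bit $b$ the map $r_v(c,\cdot,b)$ is the translation $h\mapsto h+b\,m^1$, with $m^1$ determined by $c$, hence globally injective on $\mathbb{R}$ and in particular injective across any two value-sets. Assembling (a)--(c), all hypotheses of Theorem~\ref{thm:main} are met, so every produced $c$-encoding is iso-injective, which is the claim. I expect the main obstacle to be the bookkeeping in (b): one must pin down precisely that $m^2$ is a \emph{strict} upper bound and that the shift $m^1_{1,2}$ exceeds the $S_2$-block, so that the specific constants $m^1_{1,2}=m^2_1+m^2_2+1$ and $m^2_{1,2}=2m^2_1+2m^2_2+2$ simultaneously guarantee separation and propagate the bound; the pairing-function injectivity in (a) and the translation argument in (c) are routine.
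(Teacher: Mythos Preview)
Your proposal follows essentially the same route as the paper: exhibit the explicit functions of Section~\ref{sec:reqfuncs}, record the integrality invariant, check injectivity of $h_{init}$, $c_{init}$, $r_c$ via the Cantor pairing and Vieta's formulas, maintain the invariant that the $h$-values inside each encoded subgraph are distinct and strictly bounded by its $m^2$-component, and then feed all of this into Theorem~\ref{thm:main}. Parts (a) and (b) match the paper's lemmas almost verbatim, including the key arithmetic that $m^1_{1,2}=m^2_1+m^2_2+1$ separates the two blocks and $m^2_{1,2}=2m^2_1+2m^2_2+2$ propagates the strict upper bound.

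There is one gap in (c). You fix the indicator bit $b=\mathbbm{1}_{v\in V(S_1)}$ and argue that $h\mapsto h+b\,m^1$ is a translation, hence globally injective. But the ``injective across'' requirement of Theorem~\ref{thm:main} compares nodes whose bits may \emph{differ}: one may have $v\in V(S_1)$ on the $G$-run (bit $1$) and $w\in V(S^*_2)$ on the $H$-run (bit $0$), and then $r_v(c,h^i(v),1)=h^i(v)+m^1_{1,2}$ could a priori coincide with $r_v(c,h^j(w),0)=h^j(w)$ even though $h^i(v)\neq h^j(w)$; translation injectivity for a \emph{fixed} bit does not exclude this. The paper handles exactly this mixed-bit situation by a short case analysis that invokes the invariant you already proved in (b): injectivity of $r_c$ forces $c(S^*_2)=c(S_2)$, so $h^j(w)<m^2_2<m^1_{1,2}\le h^i(v)+m^1_{1,2}$, and the collision is impossible (with a symmetric argument for the other mixed case, and a trivial reduction when $S_1=S_2$). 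So your plan is sound and the repair is one line drawn from (b); just do not treat (c) as ``automatic'' translation injectivity.
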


\subsection{Corollaries}
\label{sec:corollaries}

% \begin{proof}
% This follows from Theorem \ref{thm:main} and Lemma \ref{lemma:exist}.
% \end{proof}

In our discussion of Algorithm \ref{alg:NPA} we will assume that it uses functions such that Theorem \ref{thm:NPAtheorem} holds. See Appendix for additional corollaries and remarks.

\begin{corollary}
For Algorithm \ref{alg:NPA}, given graphs $G,H \in \mathcal{G}$, $G \simeq H$ if and only if $ \bm{C}([G]) \cap \bm{C}([H]) \neq \emptyset $. I.e. it solves the graph isomorphism problem and canonization.
\end{corollary}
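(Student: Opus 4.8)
The plan is to reduce the corollary to a single structural fact: the multiset-valued map $C:\mathcal{G}\to(\text{finite multisets of }\mathbb{R}^{d_c})$ is itself iso-injective, i.e. $C(G')=C(H')$ implies $G'\simeq H'$. Granting this, both directions fall out quickly. For the forward direction, if $G\simeq H$ then $[G]=[H]$, and since $\bm{C}$ is by definition a function of the isomorphism class, $\bm{C}([G])=\bm{C}([H])$; as this common set contains $C(G)$ it is non-empty, so the intersection is non-empty. For the backward direction, $\bm{C}([G])\cap\bm{C}([H])\neq\emptyset$ means there is a common output value, i.e. there exist $G'\in[G]$ and $H'\in[H]$ with $C(G')=C(H')$; iso-injectivity of $C$ then gives $G'\simeq H'$, whence $G\simeq G'\simeq H'\simeq H$.

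To establish iso-injectivity of $C$ I would first identify what $A_{m+1}$ contains. By Lemma \ref{lemma:disjoint} every element of each $A_i$ is a disjoint, connected subgraph; since the algorithm begins with $A_1=V(G)$, never discards a vertex (a merge forms $S_{1,2}=S_1\cup S_2\cup(v_a,v_b)$), and incorporates every edge, the subgraphs in $A_{m+1}$ are precisely the connected components of $G$, together covering all of $V(G)$ and $E(G)$. Hence $C(G)=[\,c(S)\mid S\in A_{m+1}\,]$ is exactly the multiset of $c$-encodings of the connected components of $G$.

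Now suppose $C(G')=C(H')$ as multisets. Multiset equality yields a value-preserving bijection $\beta$ between the components of $G'$ and those of $H'$, so that $c(S)=c(\beta(S))$ for every component $S$ of $G'$. By Theorem \ref{thm:NPAtheorem} each produced $c$-encoding is iso-injective, so $c(S)=c(\beta(S))$ forces $S\simeq\beta(S)$ (as labeled graphs, since iso-injectivity is with respect to label- and edge-preserving isomorphism). Because a disjoint union of graphs is determined up to isomorphism by the multiset of isomorphism types of its components, matching each $S$ with the isomorphic $\beta(S)$ glues into a global isomorphism $G'=\bigsqcup_S S\ \simeq\ \bigsqcup_S\beta(S)=H'$. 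This proves $C$ iso-injective and completes the iff. Finally, the isomorphism-test and canonization claims follow from the iff: testing $G\simeq H$ reduces to checking whether the (finite) sets $\bm{C}([G])$ and $\bm{C}([H])$ meet, and selecting a fixed representative of $\bm{C}([G])$, e.g. the lexicographically least output over the class, yields a canonical form, as isomorphic inputs produce the same set and hence the same representative.

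The step I expect to be the main obstacle is the passage from multiset equality of component encodings to a genuine graph isomorphism: one must argue carefully that $A_{m+1}$ captures exactly the connected components and that the component-wise isomorphisms supplied by iso-injectivity assemble into an isomorphism of the full graphs. The delicacy is that iso-injectivity only gives $c(S)=c(T)\Rightarrow S\simeq T$ and \emph{not} its converse, so the argument must route entirely through the value-preserving bijection furnished by multiset equality, rather than through any (false in general, owing to class-redundancy) claim that isomorphic components always receive equal encodings.
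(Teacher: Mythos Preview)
Your proposal is correct and follows the same line as the paper. The paper does not spell out a formal proof of this corollary; its justification is the remark (in the appendix, preceding Lemma~\ref{lemma:hvaluesnotchange}) that $A_{m+1}$ consists exactly of the connected components of $G$ and that iso-injectivity of the individual $c$-encodings therefore lifts to iso-injectivity of the multiset $C(G)$, which is precisely the structural fact you isolate and prove. Your argument is more careful than the paper's sketch---in particular your observation that one must route through the value-preserving bijection given by multiset equality (rather than any converse of iso-injectivity) is a point the paper glosses over---but the underlying idea is the same.
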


\begin{corollary}
For graphs $G,H \in \mathcal{G}$ consider multiset $ I = W(G) \cap W(H) $. Each $w \in I$ corresponds to a shared subgraph between $G$ and $H$, and $|I|$ is a lower bound to the number of shared subgraphs. The graph corresponding to $I$ is a lower bound (by inclusion) to the largest shared subgraph.
\end{corollary}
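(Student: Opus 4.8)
The plan is to run everything through iso-injectivity. By Theorem~\ref{thm:NPAtheorem} every entry of $W(G)$ is an encoding $c(S)$ of an actual connected subgraph $S\subseteq G$ visited during the run of Algorithm~\ref{alg:NPA} on $G$ (the first $n$ entries are the single nodes $c(v_i)$, and the last $m$ are the $c(S_{1,2})$ created as edges are added), and the analogous statement holds for $H$; moreover, because each produced encoding is iso-injective, for any two such subgraphs $S\subseteq G$ and $S'\subseteq H$ we have $c(S)=c(S')$ exactly when $S\simeq S'$. I would record this correspondence first, since all three assertions reduce to reading it off in different ways.

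For the first assertion, fix $w\in I=W(G)\cap W(H)$: then $w=c(S_G)=c(S_H)$ for some $S_G\subseteq G$ and $S_H\subseteq H$, so $S_G\simeq S_H$ and the common class $[S_G]=[S_H]$ is realized inside both graphs, i.e. a shared subgraph. For the second assertion I would note that this sends each of the $|I|$ copies of the multiset intersection to a shared subgraph, and that distinct copies witness distinct shared occurrences; since $W(G)$ and $W(H)$ record only the subgraphs actually traversed by the parse --- a proper sample of the $\Omega(n!)$ possible subgraphs --- any shared subgraph missing from the sample of either graph is absent from $I$. Hence $I$ can only undercount, giving $|I|\le(\text{number of shared subgraphs})$.

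The main obstacle is the third assertion, where the encodings in $I$ must be turned into a single honest common subgraph and compared against the largest shared subgraph. A naive union of all subgraphs witnessed by $I$ fails, because $W(G)$ contains nested and overlapping subgraphs (e.g. a node together with a larger subgraph that contains it), so their union can be far bigger than any common subgraph. The fix is to invoke Lemma~\ref{lemma:disjoint}: within any single step the members of $A_i$ are pairwise disjoint connected subgraphs, so a family of subgraphs of $G$ whose encodings lie in $I$ and which occur disjointly assembles into a genuine subgraph of $G$; by iso-injectivity the matching family in $H$ is componentwise isomorphic and, being drawn from the same disjoint structure, assembles into an isomorphic subgraph of $H$. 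The resulting graph is then a bona fide common subgraph of $G$ and $H$, so its size cannot exceed that of the largest shared subgraph, which is exactly the claimed lower bound by inclusion. The delicate bookkeeping --- ensuring the chosen subgraphs are simultaneously disjoint on the $G$ side and on the $H$ side, so that the two assemblies stay isomorphic --- is where Lemma~\ref{lemma:disjoint}, rather than iso-injectivity alone, carries the argument.
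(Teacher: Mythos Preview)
Your handling of the first two assertions is correct and is essentially what the paper intends: the corollary is stated without proof as a direct consequence of the iso-injectivity established in Theorem~\ref{thm:NPAtheorem}, and your unpacking of that for ``each $w\in I$ is a shared subgraph'' and ``$|I|$ undercounts'' is exactly the implicit argument.

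The gap is in the third assertion. You rightly observe that a naive union of everything witnessed by $I$ may overshoot, and you correctly use Lemma~\ref{lemma:disjoint} to select a disjoint family $S_1,\dots,S_k\subseteq G$ whose encodings lie in $I$. The step that fails is the claim that ``the matching family in $H$ is \ldots\ drawn from the same disjoint structure.'' Lemma~\ref{lemma:disjoint} controls only the merge forest of the run on $G$; it says nothing about where the matching $T_i\subseteq H$ with $c(T_i)=c(S_i)$ sit inside $H$'s merge forest. Two such $T_i,T_j$ can be nested in $H$ even though $S_i,S_j$ are disjoint in $G$, in which case $T_1\cup\cdots\cup T_k$ is not isomorphic to $S_1\sqcup\cdots\sqcup S_k$ and your assembled graph is not a common subgraph. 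You explicitly flag this bookkeeping as ``delicate'' but then hand it off to Lemma~\ref{lemma:disjoint}, which does not address the $H$ side at all. The paper's phrase ``the graph corresponding to $I$'' is itself informal; the reading that makes the claim immediate is to take a single maximal subgraph witnessed by an element of $I$, which is already shared and hence trivially a lower bound by inclusion. Your more ambitious assembly of several pieces would require an additional argument that neither you nor the paper supplies.
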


\begin{lemma}
\label{lemma:fcountinjective}
Assume $\mathcal{X}$ is countable. There exists a function $f : \mathcal{X} \rightarrow \mathbb{R}^n$ so that $h(X) = \sum_{x \in X} f(x)$ is unique for each multiset $X \subset \mathcal{X}$ of bounded size. Moreover, any multiset function $g$ can be decomposed as $g (X ) = \phi( \sum_{x \in X}f (x))$ for some function $\phi$.
\end{lemma}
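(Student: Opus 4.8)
The plan is to prove Lemma \ref{lemma:fcountinjective} in two parts, mirroring its two sentences. For the first part I would fix an enumeration of the countable set $\mathcal{X} = \{x_1, x_2, \dots\}$, which exists by countability, and let $b$ denote the bound on the multiset size. The idea is to encode each element $x_j$ as a single real number $f(x_j) \in \mathbb{R}$ whose base-$(b+1)$ expansion places a ``$1$'' in a dedicated digit slot, so that summing over a multiset records the multiplicity of each element in its own slot without any carrying. Concretely I would set $f(x_j) = (b+1)^{-j}$; then for a multiset $X$ with multiplicities $m_j \le b$, the sum $h(X) = \sum_j m_j (b+1)^{-j}$ is a base-$(b+1)$ fraction whose $j$-th digit is exactly $m_j$ (since each $m_j \le b$ is a valid digit and no carrying occurs). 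Uniqueness of base-$(b+1)$ representations then recovers the multiplicity sequence, hence $X$, from $h(X)$. This gives $n = 1$.

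For the ``moreover'' part, I would argue that since $h$ is injective on the set of bounded multisets, it admits a left inverse on its image: define $\psi : \operatorname{im}(h) \to \{\text{multisets}\}$ by $\psi(h(X)) = X$, which is well-defined precisely because $h$ is injective. Then for any target multiset function $g$, setting $\phi := g \circ \psi$ on $\operatorname{im}(h)$ (and extending $\phi$ arbitrarily off the image) yields $g(X) = \phi(h(X)) = \phi\bigl(\sum_{x \in X} f(x)\bigr)$, which is the claimed decomposition. The role of injectivity of $h$ is exactly to guarantee that $\phi$ is a genuine function rather than a relation.

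The main obstacle, and the point requiring care, is the interaction between the ``bounded size'' hypothesis and the no-carrying requirement in the digit-encoding argument. The bound $b$ must be known in advance to choose the base $b+1$; without a uniform bound, arbitrarily large multiplicities would cause carrying and destroy injectivity. I would make explicit that the construction of $f$ depends on $b$, and note that each $m_j$ is a legitimate digit since $0 \le m_j \le b < b+1$. A secondary subtlety is confirming that the infinite sum converges and that distinct multiplicity sequences give distinct sums; both follow from standard facts about base-$(b+1)$ expansions once one observes that only finitely many $m_j$ are nonzero (a multiset is finite), so $h(X)$ is in fact a finite sum of rationals. I would also remark that the statement as phrased allows $\mathbb{R}^n$, so one has freedom to use $n > 1$ if a more geometric or symmetric encoding is preferred, but the one-dimensional digit construction already suffices.
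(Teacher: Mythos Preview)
Your proposal is correct and essentially identical to the paper's proof: the paper enumerates $\mathcal{X}$ via a map $Z:\mathcal{X}\to\mathbb{N}$, picks $N$ with $|X|<N$ for all multisets $X$, and sets $f(x)=N^{-Z(x)}$, which is exactly your base-$(b+1)$ digit encoding with $N=b+1$. Your treatment of the second part via a left inverse of $h$ also matches the paper's argument that $\phi$ is well-defined because $h$ is injective, and your discussion of carrying and convergence is if anything more careful than the original.
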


% \begin{proof}
% Mapping $Z : \mathcal{X} \rightarrow \mathbb{N}$, and $|X| < N$. Consider $f(x) = N^{-Z(x)}$. See Appendix.
% \end{proof}

\begin{corollary}
If $\mathcal{G}_* \subset \mathcal{G}$ and $\{ |C(G)| \ | \ G \in \mathcal{G}_* \}$ is bounded (number of connected components is bounded), there exists a function $f$ such that any two graphs $G$ and $H$ in $\mathcal{G}_*$ are isomorphic if $\sum_{c \in C(G)} f(c)=\sum_{c \in C(H)} f(c)$.
\label{cor:sum}
\end{corollary}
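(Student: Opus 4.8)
The plan is to reduce the statement to the injectivity of the additive (sum-of-features) representation supplied by Lemma~\ref{lemma:fcountinjective}, with the iso-injectivity of the component encodings guaranteed by Theorem~\ref{thm:NPAtheorem} doing the work of turning encoding collisions into genuine isomorphisms. First I would pin down what $C(G)$ actually is: by Lemma~\ref{lemma:disjoint} every subgraph in each $A_i$ is disjoint and connected, so once the last edge has been processed the elements of $A_{m+1}$ are exactly the connected components of $G$, and $C(G)$ is the multiset of their $c$-encodings with $|C(G)|$ equal to the number of components. I would then form the ground set $\mathcal{X}\subset\mathbb{R}^{d_c}$ consisting of all $c$-encodings of connected subgraphs that NPA (with its fixed functions) can produce over $\mathcal{G}$. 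This set is countable: $\mathcal{G}$ is countable by Lemma~\ref{lemma:graphsetscardinality}, and each finite graph admits only finitely many sorting tie-breaks and hence yields only finitely many encodings.

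Next I would use the hypothesis. Since the number of connected components is bounded on $\mathcal{G}_*$, there is a uniform bound $B$ with $|C(G)|\le B$ for every $G\in\mathcal{G}_*$, so each $C(G)$ is a multiset in $\mathcal{X}$ of size at most $B$. Applying Lemma~\ref{lemma:fcountinjective} to the countable set $\mathcal{X}$ produces a function $f:\mathcal{X}\to\mathbb{R}^n$ for which $h(X)=\sum_{x\in X}f(x)$ is injective on multisets of size at most $B$. Now suppose $G,H\in\mathcal{G}_*$ satisfy $\sum_{c\in C(G)}f(c)=\sum_{c\in C(H)}f(c)$; because both $C(G)$ and $C(H)$ have size at most $B$, injectivity of $h$ forces $C(G)=C(H)$ as multisets. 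Writing $G$ as the disjoint union of its components $S_1,\dots,S_k$ and $H$ as $T_1,\dots,T_k$, equality of the multisets gives a bijection $\pi$ with $c(S_i)=c(T_{\pi(i)})$ for all $i$; by iso-injectivity of the $c$-encoding (Theorem~\ref{thm:NPAtheorem}) each $S_i\simeq T_{\pi(i)}$, and gluing these component isomorphisms along the disjoint union yields $G\simeq H$.

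The countability of $\mathcal{X}$ and the disjoint-union assembly of the component isomorphisms are routine. The step deserving the most care — and where the real content of the corollary sits — is the passage from equal sums to equal multisets: it is precisely here that the bounded-component hypothesis is indispensable, since Lemma~\ref{lemma:fcountinjective} only guarantees the additive map to be collision-free on multisets of \emph{bounded} size, and dropping the bound would allow the sums to coincide for non-isomorphic graphs. Finally, I would remark that only the direction ``equal sums $\Rightarrow$ isomorphic'' is asserted in Corollary~\ref{cor:sum}; the converse can fail because $C(\cdot)$ depends on the algorithm's ordering choices (its class-redundancy), so I would not attempt to establish it.
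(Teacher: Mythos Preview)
Your proposal is correct and takes essentially the same approach as the paper: the corollary is stated without its own proof and is meant to follow directly from Lemma~\ref{lemma:fcountinjective} (applied to the countable set of possible $c$-encodings with the uniform bound on multiset size coming from the bounded-components hypothesis) together with the iso-injectivity of the encodings from Theorem~\ref{thm:NPAtheorem}. Your write-up simply spells out these intended steps, including the identification of $A_{m+1}$ with the connected components via Lemma~\ref{lemma:disjoint}, and your closing remark that only the ``equal sums $\Rightarrow$ isomorphic'' direction is claimed is apt given the algorithm's class-redundancy.
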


In the Appendix we show, given a graph isomorphism class $[S]$ and using NPA, a Turing-decidable function for detecting the presence of $[S]$ within a graph $G$; however, if we only have one global encoding for all of $G$ such a Turing-decidable function might not exist. Unless there is some subgraph-information in the encoding we are left to enumerate an infinite set, which is Turing-undecidable. This points to the strength of having the encoding of a graph $G$ coupled with encodings of its subgraphs.

%For bounded graph isomorphism classes $\mathcal{G}_b$ we have: 

% \begin{proofatend}
% For bounded graphs $\mathcal{G}_b$, the multiset $\{w_i \}$ is bounded.
% \end{proofatend}

\subsection{Use of Neural Networks}

\begin{theorem}[NPA Universal Approximation Theorem]
\label{thm:NPAUniversal}
Functions $r_v, r_c, h_{init}, c_{init}$ that satisfies requirements of Theorem \ref{thm:main}, and a function $f_3$ enabling Lemma \ref{lemma:fcountinjective} from Section \ref{sec:corollaries}, can be perfectly approximated by NNs for graphs in $\mathcal{G}_b$ and pointwise approximated for graphs in $\mathcal{G}$.
% Functions $r_v, r_c, h_{init}, c_{init}$ presented in Section \ref{sec:reqfuncs} that satisfies requirements of Theorem \ref{thm:main}, and function $f_3(i)=N^{-i}$ from Section \ref{sec:corollaries}, can be perfectly approximated by NNs for graphs in $\mathcal{G}_b$ and pointwise approximated for graphs in $\mathcal{G}$.
\end{theorem}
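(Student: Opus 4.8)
\section*{Proof proposal for Theorem \ref{thm:NPAUniversal}}

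The plan is to reduce the statement to the approximation results already available, namely Theorem \ref{thm:finiteuniversal} and Theorem \ref{thm:2ndapproxthm}, by exploiting the cardinality structure of Lemma \ref{lemma:graphsetscardinality}. The central observation I would use is that NPA never evaluates $r_v, r_c, h_{init}, c_{init}$, or the multiset function $f_3$ on the whole of their real domains: on any fixed graph the second loop runs once per edge, so the recursion has finite depth and only finitely many arguments are ever produced. Consequently the set of inputs on which each function must be reproduced, taken over a class of graphs, is no larger than that class.

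First I would handle the bounded case. Since $\mathcal{G}_b$ is finite (Lemma \ref{lemma:graphsetscardinality}), running NPA over every $G \in \mathcal{G}_b$ feeds each of $r_v, r_c, h_{init}, c_{init}, f_3$ only finitely many distinct arguments; this includes the discrete indicator inputs in $\{0,1\}$ and the integer labels (bounded by $b$ through the size bound), which contribute finitely many values. Restricted to this finite set $X \subset \mathbb{R}^d$, each function is automatically continuous, since a finite set carries the discrete topology, so Theorem \ref{thm:finiteuniversal} supplies for each a finite NN agreeing with it exactly on $X$, i.e.\ with $\|\cdot\|_\infty = 0$. Because the approximation is exact, inserting these NNs into NPA introduces no error to propagate through the finite-depth recursion, and the composed computation reproduces the iso-injective encodings of Theorem \ref{thm:NPAtheorem} exactly. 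This is perfect approximation on $\mathcal{G}_b$.

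Next I would treat the unbounded case by exhaustion: write $\mathcal{G} = \bigcup_b \mathcal{G}_b$ as an increasing union of finite classes. Applying the bounded-case construction at each level $b$ yields, for every required function, a finite NN $\varphi_b$ that is exact on the finite input set arising from $\mathcal{G}_b$; since the classes are nested and the target function is fixed, any argument first appearing at level $b_0$ is reproduced exactly by $\varphi_b$ for all $b \ge b_0$. Hence $\{\varphi_b\}$ converges pointwise to each function, and by the same finite-depth composition argument the NPA-with-NN output converges pointwise, indeed is eventually exact, to the true encoding of every graph in $\mathcal{G}$. I would note this is the strongest sense available: by Remark \ref{remark:converge} any single finite NN with bounded codomain forces convergent infinite subsequences and so cannot be iso-injective on all of $\mathcal{G}$, which is exactly why a single uniform approximant is impossible and a family is required. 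If one instead keeps all encodings in a bounded region, composing with the injective Sigmoid as in the bounding tweak so that Theorem \ref{thm:2ndapproxthm} governs each level, the same conclusion holds provided a non-polynomial activation is used, which the negative result on unbounded domains cited above shows to be necessary.

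The main obstacle is precisely the unbounded regime. The difficulty is twofold: the function domains are countably infinite and unbounded, so none of the stated approximation theorems applies directly; and once the outputs are bounded to make them tractable, the injective encodings of distinct graphs accumulate, so by Remark \ref{remark:converge} no fixed finite NN can separate them all. The resolution is to replace a single approximant by the exhaustion over the finite classes $\mathcal{G}_b$, reducing everything to the finite case and extracting pointwise convergence; the remaining work is bookkeeping, namely verifying that the indicator and integer-label inputs range only over finite sets, so continuity at the evaluated points is automatic, and that the fixed finite recursion depth on each graph lets exactness at a given level pass unharmed through the composition.
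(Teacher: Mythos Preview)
Your bounded case coincides with the paper's: both invoke the finiteness of $\mathcal{G}_b$ (Lemma~\ref{lemma:graphsetscardinality}) and Theorem~\ref{thm:finiteuniversal} to get exact agreement on the finitely many arguments that ever arise.

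For unbounded $\mathcal{G}$ your route diverges from the paper's. You argue by exhaustion, producing a family $\{\varphi_b\}$ of NNs that is eventually exact at every argument arising from any fixed graph; this yields pointwise convergence of the family and, since each level is exact, sidesteps any accumulation of error through the recursion. The paper instead exploits that the \emph{specific} functions built in Section~\ref{sec:reqfuncs} are continuous on $\mathbb{R}^*$, composes each with Sigmoid as $f^{*}=\sigma\circ f\circ\sigma^{-1}$ (and uses $h_{init}=\sigma\circ l$) so that the resulting maps are bounded and continuous on the bounded domain $(0,1)^k$, and then applies Theorem~\ref{thm:2ndapproxthm} to obtain, for each $\epsilon>0$, a \emph{single} NN approximating to within $\epsilon$ on that whole domain; it then flags the recursive error-accumulation issue separately. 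So your argument is more elementary and does not rely on the continuity of the explicit construction, but it delivers only a pointwise-convergent family; the paper's argument buys a single approximant per $\epsilon$ uniform over the Sigmoid-transformed domain, at the cost of depending on the concrete continuous formulas and leaving the recursion error as a caveat. Your closing nod to the Sigmoid trick is in the right spirit but stops short of the paper's actual construction $f^{*}=\sigma\circ f\circ\sigma^{-1}$, which is what makes Theorem~\ref{thm:2ndapproxthm} directly applicable.
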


%%%%%%%%%%%%%%%%%%%%%%%%%%%%%%%%%%%%%%%%%%%%%%%%%%%%
%%%%%%%%%%%%%%% PUT IN APPENDIX %%%%%%%%%%%%%%%%%%%%
%%%%%%%%%%%%%%%%%%%%%%%%%%%%%%%%%%%%%%%%%%%%%%%%%%%%
% We make use of the following functions:
% \begin{align*}
%     c&_{init}(i) = i+1\\ 
%     f&_1(i,j) = i+j+1 \\
%     f&_2(i,j) = 2i+2j+2 \\
%     r&(y_1,h_1,m_1,n_1,y_2,h_2,m_2,n_2,b) = \\
% & \tau\big(\tau\big(\rho(\tau^4(y_1,h_1,m_1,n_1), \tau^4(y_2,h_2,m_2,n_2))\big), b\big) \nonumber \\
% r&_v(\dots,m,h,1_{ind}) = h+1_{ind}m 
% \end{align*}
% and to a lesser extent
% $$f_3(i) = N^{-i}$$

By Theorem \ref{thm:finiteuniversal}, NNs can perfectly approximate any function on a finite domain so the case of $\mathcal{G}_b$ is straightforward. 
% We know from Theorem [] that NNs can perfectly approximate any function on a finite domain. Since the set of bounded graphs $\mathcal{G}_b$ is finite, we know from previous results that NNs can be used to perfectly approximate any functions on $\mathcal{G}_b$, including $c_{init}$, $f_1$, $f_2$, $r$, $r_v$, and $f_3$.
% \begin{remark}
% For unbounded graphs, we will assume that a bounded injective function like Sigmoid or Tanh can be pointwise approximated by a NN. This is because they are often used as activation functions and help us bound our domains.
% \end{remark}
However, for countably infinite $\mathcal{G}$ the situation is different. Consider functions from Section \ref{sec:reqfuncs} and \ref{sec:corollaries} (Lemma \ref{lemma:fcountinjective}). They are continuous (in $\mathbb{R}^*$) but not bounded, we are applying these functions recursively and would want both the domain and the image to be bounded iteratively. Without losing any required properties we can compose these functions with an injective, bounded, and continuous function with continuous inverse such as Sigmoid, $\sigma$,
%in the following way $f^*=\sigma \circ f \circ \sigma^{-1}$, 
and use $h_{init}(l(v)) = \sigma(l(v))$. Then these functions can be pointwise approximated by NNs. However, recursive application of a NN might increase the approximation error. We use NNs for all non-sort functions. For $r_c$ we use a tree-LSTM \cite{treelstm} and for $r_v$ we use a LSTM. See Appendix for details.

\subsection{A Baseline}

To gauge how conducive our approach is to learning and how important the strict isomorphic properties are, we present a simpler and non iso-injective baseline model which is the same as Algorithm \ref{alg:NPA} but the second outer for-loop has been replaced by Algorithm \ref{alg:NPBA}. Some results of this algorithm can be seen in Table \ref{table:results} and it performs at state-of-the-art.

\begin{algorithm}[h]
\caption{Node Parsing Baseline Algorithm (NPBA)}
\label{alg:NPBA}
%\SetAlgoLined
\begin{algorithmic}
% \KwResult{ A sequence of feature vectors $w_1,...,w_m$. }
 \FOR{i=1, \dots, m}
 \STATE Let $(v_a,v_b) = e_i$ and let $S_1, S_2 \in A_i$ be subgraphs with $v_a \in S_1$ and $v_b \in S_2$\; 
 %\STATE Let $S_1, S_2 \in A_i$ be subgraphs with $v_a \in S_1$ and $v_b \in S_2$\;
 %\STATE Let $S_2 \in A_i$ be subgraph with $v_b \in S_2$\;
 \STATE $c(S_{1,2}) = r_c(\{c(S_1), c(S_2) \}) $\;
 \ENDFOR
 \end{algorithmic}
\end{algorithm}

%\begin{theorem}
%(OPEN PROBLEM) If for all orderings of edges for two graphs $G$ and $H$, Algorithm \ref{alg:NPBA} gives the same result, does it imply that $G \simeq H$?
%\end{theorem}

\subsection{Class-Redundancy, Sorting, Parallelize, and Subgraph Dropout}

% In our algorithmic approach we do not construct a function on isomorphism classes $\bm{\mathcal{G}}$ because two graphs $G,H \in \mathcal{G}$ with $G\simeq H$ might have different encondings: $C(G) \neq C(H)$. Rather we are dealing with a multivalued function on $\bm{\mathcal{G}}$ and the size of set $\bm{Alg}(\bm{G})$ is called the class-redundancy of $\bm{G}$.

The class-redundancy in the algorithm and functions we propose enters at the sort functions $s_e$ (sorts edges) and $s_v$ (sorts nodes within edges). 
%Since identical ordering of the edges and the nodes within edges give the same encoding, 
Thus, a loose upper bound on the class-redundancy is $O((m!)2^m)$.
%Alternatively, we can simply use an ordering on the nodes $0(n!)$ if it is smaller. 
%However, a more exact upper bound is $O((t_1!)(t_2!)\dots(t_k!)(2^p))$, where $t_i$ are the sizes of the consecutive ties for the sorted edges, and $p$ (bounded by $m$) is the number of ties for the sorting of nodes within edges. 
A better upper bound is $O((t_{1,1}!)\dots(t_{1,l_1}!)(t_{2,1}!)\dots(t_{k,l_k}!)(2^p))$
where each $t_{i,j}$ is the number of ties within group $j$ of groups of subgraphs that could be connected within the tie $i$. The order in between disconnected tied subgraph groups does not affect the output. 
%Again, $p$ is the number of edges where $s_v$ has ties .
See Appendix for \#edge-orders, i.e. $O((t_{1,1}!)\dots(t_{1,l_1}!)(t_{2,1}!)\dots(t_{k,l_k}!))$, on some datasets.  

% We could also make the sorting of the nodes in the edges be a function of more variables, e.g. the number of edges in each of the two components, or an injective function of the $c$-values, so that there will only be a tie for two components that are isomorphic. The analysis however gets tricker.

We focus on function $s_e$. Each edge can be represented by the following vector [\textit{deg1}, \textit{deg2}, \textit{label1}, \textit{label2}]. We assume \textit{deg1}, \textit{deg2} as well as \textit{label1}, \textit{label2} are in descending order, and that ties are broken randomly. This work makes use of four $s_e$ functions: (i) \textit{none}: Does not sort at all. (ii) \textit{one-deg}: Sorts by \textit{deg1}. (iii) \textit{two-degs}: Sorts lexicographically by \textit{deg1}, \textit{deg2}. (iv) \textit{degs-and-labels}: Sorts lexicographically by \textit{deg1}, \textit{deg2}, \textit{label1}, \textit{label2}.

%\subsubsection{Parallelize and Levels}

Since the encodings of subgraphs that share no subgraph  
%(under a certain edge-ordering)
do not affect each other, we can parellalize our algorithm to encode such subgraphs in parallel. For example, a graph of just ten disconnected edges can be parellalized to run in one step. We call the number of such parellalizable steps for a graph's \textit{levels}. See Appendix for \#levels on some datasets.

%\subsubsection{Subgraph Dropout}

%The iso-injective encodings produced for a graph $G$ may depend on the ordering of the edges. % and nodes. 
%This is suboptimal but a NN could learn to map the finite number of different encodings  $\{ Alg(H) \ | \ H \in \mathcal{G}, \bm{H} = \bm{G} \}$ to the same encoding. 
%At the same time,
%, and thus produces encodings for different subgraphs that may depend on what exact order of edges that was used.
% Different orderings of edges causes the algorithm to visit and encode different subgraphs. f the algorithm is run on all possible orderings on the edges, it will produce an encodings for each possible subgraph.

In most cases, one run of NPA on graph $G$ computes features for a very small portion of all subgraphs of $G$. We could run NPA on all possible orders to make sure it sees all subgraphs, but this is very costly. Instead, we use the random sample of featurized subgraphs as a type of dropout \cite{dropout}. During training, at each run of the algorithm we use only one ordering of the edges, which discourages co-adaptation between features for different subgraphs. At testing, we let the algorithm run on a sample of $K$ orderings, and then average over all these runs. We call this technique \textit{subgraph dropout}. 

% \subsection{Benefits of the features}

% The algorithm construct a feature for each processed subgraph, and each new features is a combination of two previous features. This local-to-global procedure arguably is conducive to learning. 

% \subsubsection{Interpretability}

% Since each feature corresponds to an isomorphism class of a subgraph, we can get insights into what subgraphs make a graph belong to a certain class in a dataset. 

%See which subgraps/motifs that algorithm picks up on.

% \subsection{Neural Networks}

% We use NNs for all non-sort functions. For NPA and NPBA we use for $r_c$ a tree-LSTM similar to those in \cite{treelstm} and for $r_v$ we use a LSTM. See Appendix for details.

\section{Experiments}
\label{sec:experiments}

See Table \ref{table:results} for results on graph classification benchmarks.
We report average and standard deviation of validation accuracies across the 10 folds within the cross-validation.
%We use the Adam optimizer with initial learning rate 0.01 and decay the learning rate by 0.5 every 50 epochs.
%We tune the number of epochs as a hyper-parameter, i.e., a single epoch with the best cross-validation accuracy averaged over the 10 folds was selected.
%Note that due to the small dataset sizes, an alternative setting, where hyper-parameter selection is done using a validation set, is extremely unstable, e.g., for MUTAG, the validation set only contains 18 data points.
In the experiments, the $W(G)$ features are summed and passed to a classifier consisting of fully connected NNs. %classify-NN consisting of fully connected NNs.
% either one fully-connected layer and a readout layer (for MUTAG, PTC, and PROTEINS) or two fully-connected layers and a readout layer (for NCI1), where the hidden-dim of the fully connected layers is of size $d_{hidden}$. For $h_{init}$ we use a linear-layer followed by a batchnorm (for MUTAG, PTC, and PROTINES) or a linear-layer followed by activation function and batchnorm (for NCI1). In addition, for NCI1 we used dropout=0.2 after each layer in the classify-net and on the elemtns of $W(G)$ before summing them.
% Also, in our experiments we skipped including the $w_i$ features for the single nodes. In fact, all datasets consist of connected graphs.
% For the NPBA tree-lstm the dimensions of $c^1$ and $c^2$ is $d_{hidden}$. For the NPA the dimensions of $c^1$ and $c^2$ is $d_{hidden}$ and the dimension of $h$ is $d_{hidden}/2$. 
% We used the following settings for $d_{hidden}$ and batch size:
% \begin{itemize}
%     \item PTC, PROTEINS, and MUTAG we used $d_{hidden}=16$, and batch-size=$32$.
%     \item NCI1 we used $d_{hidden}=64$, and batch-size=$128$.
% \end{itemize}
For NPA, $s_v$ sorts randomly, but with "-S", $s_v$ sorts based on the levels of subgraphs $S_1$ and $S_2$. For subgraph dropout "-D" we use $K=5$. 
The four bottom rows of Table \ref{table:results} compare different functions for sorting edges ($s_e$).
%For details, see Appendix.

\begin{table*}%[h]
\caption{GNN is best performing variant from \cite{powerful}. 
%PatchySan \cite{Patchysan}. DCNN \cite{DCNN}. DGCNN \cite{DGCNN}. 
%The "-D" is short for "with subgraph dropout." The "-S" is short for "with $s_v$ level-sort". 
*: Best result with and without subgraph dropout.}
\begin{center}
\begin{tabular}{l l l l l}
\hline
Datasets: & NCI1 & MUTAG & PROTEINS & PTC  \\
\# graphs: & 4110 & 188 & 1113 & 344 \\
\# classes: & 2 & 2 & 2 & 2 \\
%Avg \# nodes: & 30 & 18 & 39 & 26 \\
%Avg \# edges: & 32 & 20 & 74 & 26 \\
%$O($avg \# edge-orders$)$: & 10^{46} & 10^8 & 10^{143} & 10^{69} \\
%$O($avg \# class-redundancy$)$: & 10^{54} & 10^{13} & 10^{211} & 10^{79} \\
% $O($avg \# edge-orders$)$: & 10^{30} & 10^9 & 10^{143} & 10^{25} \\
% $O($avg \# class-redundancy$)$: & 10^{39} & 10^{14} & 10^{211} & 10^{37} \\
%$O($median \# edge-orders$)$: & 4 & 2 & 512 & 2 \\
%$O($median \# class-redundancy$)$: & 4096 & 256 & 10^{6} & 128 \\

\hline
%\hline
%Theirs & 0 & 0 & 0 & 0  \\
PatchySan \cite{Patchysan} & 78.6$\pm$1.9 & 92.6$\pm$4.2 & 75.9$\pm$2.8 & 60.0$\pm$4.8 \\
DCNN \cite{DCNN} & 62.6 & 67.0 & 61.3 & 56.6 \\ 
DGCNN \cite{DGCNN} & 74.4$\pm$4.7 & 85.8$\pm$1.6 & 75.5$\pm$0.9 & 58.6$\pm$2.5 \\ 
GNN \cite{powerful} & 82.7$\pm$1.7 & 90.0$\pm$8.8 & 76.2$\pm$2.8 & 66.6$\pm$6.9 \\
NPBA (ours) & 81.0$\pm$1.1 & 92.8$\pm$6.6 & 76.6$\pm$5.7 & 67.1$\pm$5.9 \\
NPBA-D (ours) & 83.7$\pm$1.5 & 92.2$\pm$7.9 & \textbf{77.1$\pm$5.3} & 65.5$\pm$6.8 \\
NPA (ours) & 81.8$\pm$1.9 &  92.8$\pm$7.0 & 76.9$\pm$3.0 & \textbf{67.6$\pm$5.9} \\
NPA-D (ours) & \textbf{84.0$\pm$2.2} & 92.8$\pm$7.5 & 76.8$\pm$4.1 & 67.1$\pm$6.9 \\
NPA-S (ours) & 81.5$\pm$1.6 &  \textbf{93.3$\pm$6.0} & 76.5$\pm$5.0 & 65.9$\pm$8.3 \\
NPA-D-S (ours) & 83.0$\pm$1.2 & \textbf{93.3$\pm$6.0} & 76.3$\pm$4.5 & 66.2$\pm$7.7 \\
\hline
NPA* (degs-and-labels) & 83.2$\pm$1.6 & 88.9$\pm$10.5 & 75.9$\pm$5.4 & 63.2$\pm$6.3  \\
NPA* (two-degs)  & \textbf{84.0$\pm$2.2} & 91.7$\pm$6.7 & 76.2$\pm$4.6 & \textbf{67.6$\pm$5.9}  \\
NPA* (one-deg)  & 79.2$\pm$1.9 & 92.8$\pm$7.0 & 76.5$\pm$4.9 & 64.7$\pm$7.0  \\
NPA* (none)  & 77.7$\pm$3.0 & 92.8$\pm$7.5 & 76.9$\pm$3.0 &  65.3$\pm$5.9 \\
\hline
\end{tabular}
\end{center}
\label{table:results}
\end{table*}

\begin{table*}%[h]
\caption{(Train-accuarcy). Comparing NPA against other methods for certain types of graphs.
%GNN (GIN) from \cite{powerful}.
}
\begin{center}
\begin{tabular}{l l l l l l}
\hline
Datasets: & GNN-Hard & NPBA-Hard & Erdos & Erdos-Labels & Random-Regular  \\
\# graphs: & 32 & 36 & 30 & 100 & 10 \\
\# classes: & 2 & 2 & 30 & 100 & 10 \\
Avg \# nodes: & 17$\pm$9 & 1.5$\pm$0.5 & 10$\pm$0 & 10$\pm$0 & 8$\pm$0 \\
Avg \# edges: & 34$\pm$19 & 21$\pm$10 & 45$\pm$7 & 45$\pm$7 & 16$\pm$0 \\
$O$(median \\ \# edge-orders): & $10^{35}$ & $10^{21}$ & $10^{19}$ & $10^{8}$ & $10^{10}$ \\
\hline
%\hline
%Theirs & 0 & 0 & 0 & 0  \\
GNN (GIN) \cite{powerful} & 50 & \textbf{100} & \textbf{100} & \textbf{100} & 10 \\
NPBA (ours) & \textbf{100} & 50 & 83 & \textbf{100} & 70 \\
NPA (ours) & \textbf{100} & \textbf{100} & \textbf{100} & \textbf{100} & \textbf{90} \\
\hline
\end{tabular}
\end{center}
\label{table:synthetic}
\end{table*}

\subsection{Synthetic Graphs}

We showcase synthetic datasets where the most powerful GNNs are unable to classify the graphs, but NPA is. See Appendix for related discussion and Table \ref{table:synthetic} where
\begin{enumerate}
    \itemsep0em 
    \item GNN-Hard: Class 1: Two disconnected cycle-graphs of $n/2$ vertices. Class 2: One single cycle-graph of $n$ vertices. ($n=2,4,6,\dots,32$)
    \item NPBA-Hard: Class 1: Two nodes with $m$ edges in between. Class 2: Two nodes, with $m$ self-edges from one of the nodes. ($m=2,3,4,\dots,19$)
    \item Erdos: Random Erdos-Renyi graphs.
    \item Random-Regular: Each node has the same degree with configuration model from \cite{randomreg}.
\end{enumerate}

% \begin{enumerate}
%     \item GNN-Hard: Deciding weather a graph of $n$ vertices consists of two disconnected cycle-graphs of $n/2$ vertices or of one single cycle-graph of $n$ vertices.
%     \item NPBA-Hard: Class 1: Graphs consisting of 2 nodes with $n$ edges in between. Class 2: Graphs of two nodes, with $m$ self-edges from one of the nodes.
%     \item Erdos-Renyi Graphs: Random Erdos-Renyi graphs.
%     \item Random Regular Graphs: Each node has the same degree with configuration model from [].
% \end{enumerate}

% GNNs work very well on Erdos Random Graphs, while the NPA could get 100\% training accuracy, it took many more epochs to do so compared to the GNNs.

\section{Discussion}

In this paper, we develop theory and a practical algorithm for universal function approximation on graphs. Our framework is, to our knowledge, theoretically closest to a universal function approximator on graphs that performs at the state-of-the-art on real world datasets. It is also markedly different from other established methods and presents new perspectives such as subgraph dropout. In practice, our framework shares weaknesses with GNNs on regular graphs, and we do not scale as well as some other methods. Future work may reduce the class-redundancy, explore bounds on expected class-redundancy, modify GNNs to imbue them with iso-injective properties, or combine iso-injective encodings (from NPA) with invariant encodings (from GNNs) to enable the best of both worlds.
%approximately injective
% At the same time, there are several weaknesses to our framework. We took great care to consider the set of unbounded finite graph isomorphism classes, but our method does not scale as well as several other graph learning frameworks. We compare our method to GNNs which are powerful graph learners, but in the current form, our framework share some of its weakness when it comes to regular graphs. 
% %That is, when all nodes have the same degree, the sorting of edges will be less strict, leading to a greater class-redundancy. 

% There are aspect of the work that deserves greater inspection. Our assumptions on the functions that NNs can approximate are fairly general and understanding the gap this and the functions that the actual NNs we use in our experiments can learn needs to be more carefully understood.

% Future work could focus on decreasing the class-redundancy and perhaps explore bounds on expected class-redundancy. One could also attempt to modify GNNs to imbue them with iso-injective properties. In addition, combining iso-injective encodings (such as from NPA) with approximately injective encodings (such as from GNNs) might enable, as it were, the best from both worlds. 

\section{Broader Impact}

This work helps advance the fields of machine learning and AI, which as a whole is likely to have both positive and negative societal consequences \cite{broad1, Brundage2016}; many of which might be unintended \cite{unintended}. The coupling of application and theory in this work aims at improving human understanding of AI which is related to efforts within for example explainable AI \cite{explainables}. Such efforts may reduce unintended consequences of AI. 

%This work does not present any foreseeable societal consequence.

%This work has the following potential positive impact in the society…. At the same time, this work may have some negative consequences because… Furthermore, we should be cautious of the result of failure of the system which could cause..

\section{Acknowledgements}

This work was supported by Altor Equity Partners AB through Unbox AI
(www.unboxai.org). I am grateful for Bradley J. Nelson's help in reading the paper and for his suggestions on how to make it clearer. I also want to express my greatest gratitude to Professor Gunnar Carlsson and Professor Leonidas Guibas for their unwavering support and belief in me.

%%%%%%%%%%%%%%%%%%%%%%%%%%%%%%%%%%%%%%%%%%%%%%%%%%%%%%%%%%%%%%%%%%%%%%%%%%%%%%%
%%%%%%%%%%%%%%%%%%%%%%%%%%%%%%%%%%%%%%%%%%%%%%%%%%%%%%%%%%%%%%%%%%%%%%%%%%%%%%%

\bibliography{ref}
% \bibliography{ICML}
% \bibliographystyle{icml2020}

%\end{document}

\clearpage
\section*{Appendices}
\appendix

\section{Theory}

\subsection{Preliminaries: Additional Definitions, Remarks, and Proofs}

\subsubsection{Additional Definitions and Remarks}
We add the following definitions:

\begin{definition}
A \textit{subgraph} $S$ of a graph $G$, denoted $S \subset G$, is another graph formed from a subset of the vertices and edges of $G$. The vertex subset must include all endpoints of the edge subset, but may also include additional vertices. 
\end{definition}

\begin{definition}
We denote the disjoint union between two sets $A,B$ as $A \sqcup B$.  %\bjn{in topology, it is standard to use $\sqcup$}
\end{definition}

\begin{definition}
We denote the set-builder notation for multisets as $[x \ | \ Predicate(x) ]$, i.e. with brackets to emphasize it constructs a multi-set.   %\bjn{in topology, it is standard to use $\sqcup$}
\end{definition}

\begin{definition}
If we write $f(A)$ where $A$ is a subset of the domain of $f$, we mean the multiset $f(A) := [f(x) \ | \ x\in A ]$. % (the subscript $M$ denotes multiset construction).
\end{definition}

\begin{definition}
Let $f:X\to Y$ be a function from a set $X$ to a set $Y$. If a set $A$ is a subset of $X$, then the restriction of $f$ to $A$ is the function
$$f|_{A}:A\to Y$$
given by $f|_{A}(x) = f(x)$ for $x$ in $A$. Informally, the restriction of $f$ to $A$ is the same function as $f$, but is only defined on $A \cap dom(f)$.
\end{definition}

\begin{definition}
For an iso-injective function $f:\mathcal{G} \rightarrow Y$ we define the \textit{iso-inverse} as the function $\bm{f}^{-1}:\operatorname{im}(f) \rightarrow \bm{\mathcal{G}}$, where $\operatorname{im}(f) = \{ y \ | \ y\in Y, \exists G \in \mathcal{G}, f(G)=y \}$, as
$$ \bm{f}^{-1}(y) = [G], \exists G \in \mathcal{G}, f(G)=y  $$
%which is well-defined by the iso-injective property of $f$.
\end{definition}

\begin{definition}
The \textit{subgraph isomorphism problem} consists in, given two graphs $G$ and $H$, determining whether $G$ contains a subgraph that is isomorphic to $H$.
\end{definition}

\begin{definition}
With a function $f:X \rightarrow Y$ being injective across domains $X_1$ and $X_2$ with $X_1, X_2 \subset X$, we mean that for all $x_1 \in X_1, x_2 \in X_2$ with $f(x_1)=f(x_2)$ we have $x_1=x_2$.
\end{definition}

\begin{definition}
In some proofs we say \textit{subgraph $S$ encoded at step $j$} of Algorithm \ref{alg:NPA} (NPA), with which we mean that if $j=0$ then $S$ is a single node that is encoded in the first for loop of NPA, and if $j>0$ then $S$ contains an edge and is encoded in the second for loop of NPA with $j=i$.
\end{definition}

We also add the following remarks:

%%% REDUNDANT?
\begin{remark}
Functions on nodes $f:V(G) \rightarrow Y$, such as node labels, are functions of graphs too, because it makes no sense to compare indices or nodes between different graphs that are not subgraphs of the same graph. That is, each such function is different for each graph $G$, so if we abuse notation when having also a graph $H$ and $f:V(H) \rightarrow Y$ in a shared context with $G$, then $v_1=v_2$ implies $f(v_1)=f(v_2)$ only if $v_1,v_2 \in V(G)$ or $v_1,v_2 \in V(H)$. Similarly, intersection between edges or nodes of two graphs $S_1$ and $S_2$ is only interesting to us if $S_1, S_2$ are subgraphs of some graph $G$.
\end{remark}

\begin{remark}
We can bound any iso-injective function $Alg: \mathcal{G} \rightarrow \mathbb{R}^d$ by composing (this simply forces the convergent subsequence to be the values in $\mathbb{R}^d$ with increasing norm) with the injective and continuous Sigmoid function $ \sigma(x) = \frac{1}{1+e^x} $.
\end{remark}
\subsubsection{Proof of Lemma \ref{lemma:graphsetscardinality}}
\begin{proof}%[Proof of Lemma \ref{lemma:graphsetscardinality}]
For each $n \in \mathbb{N}_+$ there is a finite number of graphs $G$ with $|V(G)|+|E(G)|+\sup_{v\in V(G)}(l(v)) = n$, and a countable union of countable sets is countable. Similarly, bounded graphs means that such a $n$ is bounded by $b$, and a finite union of finite sets is finite. Furthermore, $|\bm{\mathcal{G}}|\leq |\mathcal{G}|$ and $|\bm{\mathcal{G}}_b|\leq |\mathcal{G}_b|$.
\end{proof}

\subsubsection{Proof of Theorem \ref{thm:fromisotoinjective}}
\begin{proof}
Consider, $h = (f \circ \bm{g}^{-1}): \bm{\mathcal{G}}\rightarrow Y$ which is well defined since $\bm{g}^{-1}$ is a function on $\operatorname{im}(g)$, and $f=h \circ g$.
\end{proof}

\subsubsection{Proof of Theorem \ref{thm:recurrentuniversalapproximationtheorem}}
\begin{proof}
See \cite{SIEGELMANN1995132} for proof.
\end{proof}

\subsection{Bounded Graphs}

\subsubsection{Proof of Theorem \ref{thm:finiteuniversal}}
\begin{proof}
In \cite{arora2018understanding} it is proven that any continuous piecewise linear function is representable by a ReLU NN, and any finite function can be perfectly approximated by a continuous piecewise linear function.
\end{proof}

\subsubsection{Proof of Theorem \ref{thm:finiteboundediso}}
\begin{proof}
Consider the function $g:\operatorname{im}(Alg) \rightarrow \mathbb{R}^d$:
$$ g(x) = (f \circ \bm{Alg}^{-1})(x)  $$
Which is well-defined because both $f$ and $\bm{Alg}^{-1}$ are functions on their respective domains. Since $\operatorname{im}(Alg)$ is a finite subset of $\mathbb{R}^d$ we know there is a NN $\varphi$ that perfectly approximates $g$, and thus we have 
$$f = \varphi \circ Alg$$
\end{proof}

\subsection{Unbounded Graphs}

%\subsubsection{Proof of Theorem \ref{thm:noboundedplinear}}

\subsection{On Remark \ref{remark:converge}}

Suppose $Alg:\mathcal{G} \rightarrow \mathbb{R}^d$ is an iso-injective function and $\varphi: \mathbb{R}^d \to \mathbb{R}$ is a NN. We analyze the functions $f:\bm{\mathcal{G}} \rightarrow \mathbb{R}$ that $\varphi \circ Alg$ can approximate. By Theorem \ref{thm:universal}, if $\operatorname{im}(Alg) \subset \mathbb{R}^d$ is bounded, then $\varphi$ can approximate all continuous functions on the closure $\overline{\operatorname{im}(Alg)}$. Since $\mathcal{G}$ is countably infinite, we may consider the sequence $\operatorname{im}(Alg) =(\bm{Alg}([G]_i)_{j=0}^{k_i})_{i=0}^{\infty} = ((a_i)_{j=0}^{k_i})_{i=0}^{\infty} \subset \mathbb{R}^d$. From the Bolzano-Weierstrass Theorem we know \textit{every bounded sequence of real numbers has a convergent subsequence}. 
If $\operatorname{im}(Alg)$ is bounded then so is $((a_i)_{j=0}^{k_i})_{i=0}^{\infty}$, and thus it has a convergent subsequence. Similarly, the subsequence $Alg([G]_{i=0}^{\infty})$ with $Alg([G]_i) = Alg(H), H \in [G]_i$, corresponding to a sequence over the graph isomorphism classes $[G]_i \in \bm{\mathcal{G}}$, has a convergent subsequence. Meaning that for every $\delta > 0$ there is a countably infinite set $A \subset \bm{\mathcal{G}}$ such that $[G]_i,[G]_j \in A$ implies $||Alg([G]_i)-Alg([G]_j)|| < \delta$. Let $L$ denote the limit point of one such convergent subsequence. By Theorem \ref{thm:universal}, we assume that $\varphi$ can approximate only continuous functions, this means for every $\epsilon > 0$ there exists a $\delta > 0$ such that that $||L-Alg([G])|| < \delta$ with $[G] \in \bm{\mathcal{G}}$ implies $||\varphi(L) - \varphi(Alg([G]))|| < \epsilon$. 
% I.e. the properties of the iso-injective function $Alg$ deeply affects the set of functions on graph isomorphism classes that can be approximated by $\varphi \circ Alg$. 
Note that the same holds for an injective function $h:\bm{\mathcal{G}} \rightarrow \mathbb{R}^d$, because the sequences $\operatorname{im}(h) = h([G]_{i=0}^{\infty})$ and $((a_i)_{j=0}^{k_i})_{i=0}^{\infty}$ have the same cardinality.

\subsection{Theorems and Proofs}

\begin{theorem}
\label{thm:noboundedplinear}
There is no finite width and depth NN with bounded or piecewise-linear activation function that can pointwise approximate an unbounded continuous function on an open bounded domain.
\end{theorem}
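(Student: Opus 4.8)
The plan is to show that any network of the stated type is necessarily \emph{bounded} on the bounded domain, and then to observe that a bounded function cannot stay within a fixed tolerance of an unbounded target at all points. Let $X \subset \mathbb{R}^d$ be open and bounded and let $f$ be continuous but unbounded on $X$; here openness is exactly what allows such an $f$ to exist (e.g.\ $1/\mathrm{dist}(x,\partial X)$ blowing up near the boundary). Let $\varphi$ be the function computed by a finite-width, finite-depth network whose activation $\sigma$ is either bounded or piecewise-linear, written as an alternating composition of affine maps $x \mapsto A_i x + b_i$ and coordinatewise applications of $\sigma$, followed by a final affine readout.

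First I would prove that $\varphi$ is bounded on $X$, splitting into the two activation cases. If $\sigma$ is bounded, say $|\sigma| \le M$, then the output of the last hidden layer lies in a fixed cube $[-M,M]^{w}$ for \emph{every} input, so the final affine readout maps a bounded set into a bounded set; hence $\varphi$ is bounded on all of $\mathbb{R}^d$, in particular on $X$. If $\sigma$ is piecewise-linear with finitely many pieces, then $\varphi$ is a continuous piecewise-linear function with finitely many linear regions, since this class is closed under composition with affine maps and with coordinatewise piecewise-linear maps. Such a function has a finite global Lipschitz constant $L$, namely the maximum over its finitely many affine pieces of the operator norm of the gradient, so fixing any $x_0 \in X$ gives $|\varphi(x)| \le |\varphi(x_0)| + L\,\mathrm{diam}(X) < \infty$ for all $x \in X$. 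In both cases set $B := \sup_{x \in X}|\varphi(x)| < \infty$.

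Next, because $f$ is unbounded on $X$ there is a sequence $(x_n) \subset X$ with $|f(x_n)| \to \infty$, whence $|f(x_n) - \varphi(x_n)| \ge |f(x_n)| - B \to \infty$ and therefore $\sup_{x \in X}|f(x) - \varphi(x)| = \infty$. Consequently, for every $\epsilon > 0$ the network $\varphi$ violates $|f(x) - \varphi(x)| < \epsilon$ at infinitely many points of $X$; since $\varphi$ was an arbitrary finite network of the stated type, no such network approximates $f$ everywhere to a fixed tolerance, which is the claimed impossibility.

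I expect the main obstacle to be the piecewise-linear case: making precise that composing finitely many affine maps with a finitely-pieced activation yields a continuous piecewise-linear function with only finitely many regions, so that a single finite global Lipschitz constant exists (this is the structural fact underlying \cite{arora2018understanding} already invoked for Theorem~\ref{thm:finiteuniversal}). The bounded-activation case is essentially immediate once one notes the readout sees only bounded hidden activations, and the final error estimate is routine once boundedness of $\varphi$ is established.
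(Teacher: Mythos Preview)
Your proposal is correct and follows the same approach as the paper: the paper's proof is the single sentence ``Such NNs must be bounded on bounded domains,'' and you have simply supplied the details behind that assertion (splitting into the bounded-activation and piecewise-linear cases) together with the immediate consequence that a bounded $\varphi$ cannot stay uniformly close to an unbounded $f$.
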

\begin{proof}
Such NNs must be bounded on bounded domains.
\end{proof}

\begin{theorem}
\label{thm:nofinitenoneinifintediff}
There is no finite width and depth NN with an activation function $\sigma$ and $k \geq 0$ such that $\frac{d^k \sigma}{d x^k}=0$ that can pointwise approximate all continuous functions on unbounded domains.
\end{theorem}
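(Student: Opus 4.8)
The plan is to reduce the whole question to a statement about polynomials. First I would unpack the hypothesis: if $\frac{d^k\sigma}{dx^k}=0$ identically for some $k\geq 0$, then $\sigma$ is a polynomial of degree at most $k-1$ (with the degenerate case $k=0$ forcing $\sigma\equiv 0$). Now fix any architecture of finite width and finite depth using this activation. For any choice of weights, each layer applies an affine map followed by $\sigma$, so the network computes a composition and linear combination of polynomials, which is again a (multivariate) polynomial. Crucially, its total degree is bounded by a number $D$ that depends only on the depth and on $\deg\sigma$ (on the order of $(\deg\sigma)^{\mathrm{depth}}$), and \emph{not} on the weights. Hence every function the fixed architecture can realize lies in $\mathcal{P}_D$, the finite-dimensional space of polynomials of total degree $\leq D$, of dimension $N=\binom{d+D}{d}$ on a domain in $\mathbb{R}^d$.

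The technical heart is a closure lemma: if a sequence $(p_n)\subset\mathcal{P}_D$ converges pointwise on the (infinite) unbounded domain $X$ to a function $f$, then $f\in\mathcal{P}_D$. I would prove this by choosing a unisolvent set $\{x_1,\dots,x_N\}\subset X$, i.e. one on which polynomial interpolation in $\mathcal{P}_D$ is unique; such a set exists because $X$ is infinite (in the one-variable setting one just needs $D+1$ distinct points). Each $p_n$ equals the interpolant of its own values on this set, and the map sending the value vector $(p_n(x_1),\dots,p_n(x_N))$ to the coefficients of $p_n$ is a fixed linear, hence continuous, isomorphism. Pointwise convergence at the $N$ interpolation nodes therefore forces the coefficient vectors to converge, so $p_n\to q$ pointwise everywhere for some fixed $q\in\mathcal{P}_D$; combined with $p_n\to f$ this yields $f=q\in\mathcal{P}_D$.

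With the lemma in hand the conclusion is immediate. For any fixed finite architecture the degree bound $D$ is finite, so I would take the continuous target $f(x)=x^{D+1}$ restricted to the domain, which is not an element of $\mathcal{P}_D$: indeed $q(x)-x^{D+1}$ would be a nonzero polynomial of degree $D+1$ vanishing at more than $D+1$ points, impossible. By the closure lemma no sequence of functions realizable by the architecture can converge to $f$ pointwise, so the architecture fails to be a universal pointwise approximator. Since this holds for every finite width and depth, no finite NN with an activation satisfying $\frac{d^k\sigma}{dx^k}=0$ can pointwise approximate all continuous functions on an unbounded domain. The degenerate cases $k=0$ and $\deg\sigma\leq 1$ are swept up for free, since they only produce constant or affine maps.

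I expect the main obstacle to be the closure lemma, specifically verifying that the unbounded domain genuinely contains a unisolvent set so that the interpolation argument applies and that the value-to-coefficient map is the relevant continuous linear isomorphism; once that is secured, the degree bound on the realizable polynomials and the choice of $f=x^{D+1}$ are routine bookkeeping.
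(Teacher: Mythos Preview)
Your argument is correct and shares the paper's core idea---$\sigma$ polynomial forces the network to be polynomial, so a higher-degree monomial defeats it---but your treatment is considerably more careful than the paper's one-line proof, which reads in full: ``Consider $f(x)=x^{k+1}$ such that $\frac{d^k f}{dx^k}\neq 0$. The NN cannot asymptotically approximate $f$.'' Two differences are worth noting. First, you track the realizable degree bound $D$ correctly as roughly $(\deg\sigma)^{\text{depth}}$ and pick the counterexample $x^{D+1}$; the paper's choice $x^{k+1}$ (with $k$ the order at which $\sigma$'s derivative vanishes) is actually too small once the depth exceeds one, since a depth-$L$ network with $\deg\sigma=k-1\geq 2$ can already \emph{realize} $x^{k+1}$ exactly. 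Second, you supply the closure lemma that pointwise limits of degree-$\leq D$ polynomials on an infinite set remain in $\mathcal{P}_D$, via the unisolvent-set/interpolation argument; the paper hides this step behind the word ``asymptotically.'' So your route is the same in spirit but fills genuine gaps; the interpolation argument is the right tool and your sketch of it is sound in the one-variable case (and the multivariate version only needs the existence of a unisolvent set in the unbounded domain, which you correctly flag as the one point needing care).
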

%\subsubsection{Proof of Theorem \ref{thm:nofinitenoneinifintediff}}
\begin{proof}
Consider $f(x) = x^{k+1}$ such that $\frac{d^k f}{x^k} \neq 0$. The NN cannot asymptotically approximate $f$.
\end{proof}

\begin{theorem}[Bolzano-Weierstrass]
Every bounded sequence of real numbers has a convergent subsequence.
\label{thm:bw}
\end{theorem}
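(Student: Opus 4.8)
The plan is to rely on the completeness of $\mathbb{R}$, which is the essential ingredient: a bounded sequence need not itself converge, but completeness guarantees enough structure to extract a convergent subsequence. I would proceed via the monotone-subsequence route, splitting the argument into two clean steps so that all of the ``analysis'' is isolated in the final step.

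First I would show that \emph{every} sequence $(a_n)_{n=1}^{\infty}$ of real numbers, bounded or not, admits a monotone subsequence. Call an index $N$ a \emph{peak} if $a_N \geq a_n$ for all $n > N$. There are two cases. If infinitely many peaks $N_1 < N_2 < \cdots$ exist, then by definition $a_{N_1} \geq a_{N_2} \geq \cdots$, so $(a_{N_k})$ is a non-increasing subsequence. If only finitely many peaks exist, pick an index $n_1$ larger than all of them; since $n_1$ is not a peak there is $n_2 > n_1$ with $a_{n_2} > a_{n_1}$, and since $n_2$ is not a peak there is $n_3 > n_2$ with $a_{n_3} > a_{n_2}$, and so on, yielding a strictly increasing subsequence. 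Either way a monotone subsequence exists.

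Second, I would specialize to the bounded sequence at hand. Its monotone subsequence $(a_{n_k})$ is also bounded, and a bounded monotone sequence converges: in the non-decreasing case the limit is the supremum of its terms, and in the non-increasing case the infimum, whose existence is exactly the least-upper-bound property of $\mathbb{R}$. This completes the extraction of a convergent subsequence and hence the proof.

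The main obstacle is foundational rather than combinatorial: every honest proof must somewhere invoke completeness of the reals, since the statement fails over $\mathbb{Q}$ (e.g.\ a bounded rational sequence approaching $\sqrt{2}$ has no convergent subsequence within $\mathbb{Q}$). In the monotone-subsequence argument this dependence enters only at the very last step, through the monotone convergence theorem. An alternative bisection proof, repeatedly halving an enclosing interval $[a,b]$ and retaining a half containing infinitely many terms before appealing to the nested-interval property, relocates the same completeness requirement to the nonemptiness of the intersection of the nested intervals, but it does not avoid it.
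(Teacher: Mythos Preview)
Your argument is correct and complete: the peak-index lemma extracts a monotone subsequence, and the monotone convergence theorem (which is where completeness of $\mathbb{R}$ enters) finishes the job. Your closing remarks on why completeness is unavoidable, and on the equivalent bisection route, are also accurate.

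By contrast, the paper does not prove this theorem at all: it simply records it as a well-known classical result and points the reader to an external reference. That is a reasonable choice in context, since Bolzano--Weierstrass is used only as background analysis to justify a remark about convergent subsequences in $\operatorname{im}(Alg)$, not as a contribution of the paper. So your route is genuinely different in that it is self-contained and elementary, whereas the paper's ``route'' is a citation. What your approach buys is an honest proof; what the paper's approach buys is brevity appropriate to a machine-learning venue.
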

\begin{proof}
Well-known result, see \href{https://en.wikipedia.org/wiki/Bolzano%E2%80%93Weierstrass_theorem}{Wikipedia} or your favorite analysis book.
\end{proof}

\subsubsection{Proof of Theorem \ref{thm:universal}}
\begin{proof}
Proof can be found in \cite{SONODA2017233} and \cite{universalthesis} for a large family of activation functions.
\end{proof}

%\subsubsection{Proof of Theorem \ref{thm:bw}}

\subsubsection{Proof of Theorem \ref{thm:2ndapproxthm}}
\begin{proof}
If $X$ is closed, it follows immediately from Theorem \ref{thm:universal}. Suppose $X$ is open, then we know by Theorem \ref{thm:universal} that $\varphi$ can pointwise approximate $f$ on a compact set, but since $f$ is bounded we know that each limit point is finite. Thus, we may just add them and define $g$ as $f$ extended with the limit points. Then $g$ is continuous on a compact $\overline{X}$, so $\varphi$ pointwise approximates $g$, but this means it also pointwise approximates $f$.
\end{proof}

\subsection{Algorithmic Idea}

\subsubsection{Proof of Theorem \ref{thm:outline2}}
\begin{proof} 
Suppose Algorithm \ref{alg:outline2} is run on graphs $G$ and $G^*$. Suppose also that the assumptions of the theorem holds for both runs and that $c(S_{1,2})=c(S^{*}_{1,2})$ with $S_{1,2} \subset G, S^*_{1,2} \subset G^*$. This means, since $p>1$ that we can split up in the following way, $S_{1,2} = S_1 \cup S_2$ with $S_1,S_2 \in A \subset G$ and $S^*_{1,2} = S^*_1 \cup S^*_2$ with $S^*_1, S^*_2 \in A^* \subset G^*$. We want to show that $S_{1,2} \simeq S^*_{1,2}$. 

We know since $r$ is injective that 
\begin{align}
 c(S_1)=c(S^{*}_1)&, \ c(S_2)=c(S^{*}_2), \\
 \{ l(v) \ | \ v \in V(S_1) \cap V(S_2) \} &= \{ l(v) \ | \ v \in V(S^*_1) \cap V(S^*_2) \} 
 \label{eq:p11}
\end{align}
(If instead $c(S_1) = c(S^{*}_2), c(S_2)=c(S^{*}_1)$ we can just relabel) This means that there exists isomorphisms $\phi_1:S_1 \rightarrow S^*_1$ and $\phi_2:S_2 \rightarrow S^*_2$.

Consider the following map:
\begin{equation}
    \phi(v) =
    \begin{cases*}
      \phi_1(v) & if $v \in V(S_1)$ \\
      \phi_2(v)  & otherwise
    \end{cases*}
\end{equation}

We set $I = V(S_1) \cap V(S_2)$. Now, since both $\phi_1$ and $\phi_2$ are isomorphisms we know that $\phi$ respects $l$-values, and the only part of the domain where $\phi$ might not respect edges is in $I$. Now let $I^* = V(S^*_1) \cap V(S^*_2)$. 

All values in $l(I)$ are unique among $l(V(S_1)\cup V(S_2))$, all values in $l(I^*)$ are unique among $l(V(S^*_1)\cup V(S^*_2))$. From Equation \ref{eq:p11} we know that $l(I)=l(I^*)$. Suppose $v \in I$ then $\phi_1(v)=\phi_2(v)$ because else $l(\phi_1(v))\neq l(\phi_2(v))\rightarrow l(v) \neq l(v)$ by the stated uniqueness of the $l$-values of $I$ and $I^*$. Since, $\phi_1$ and $\phi_2$ agree on the intersection $I$ we know that all edges must be respected by $\phi$ by construction.

Now we want to show that $\phi$ is a bijection. From construction we know that $\phi$ is a bijection on $V(S_1) \rightarrow V(S^*_1)$. Now $V(S_1) \cup V(S_2) = V(S_1) \sqcup (V(S_2)-I) $ and $V(S^*_1) \cup V(S^*_2) = V(S^*_1) \sqcup (V(S^*_2)-I^*)$. From before we know that $\phi(I) = I^*$. Thus, we know that $\phi$ is injective map on $V(S_2)-I \rightarrow B \subset V(S^*_2)-I^*$ because $\phi$ is equivalent to $\phi_2$ on that domain. To see this, suppose $v \in V(S_2)-I$ and $\phi(v) \in V(S^*_1)$, then we must have $\phi(v) \in I^*$ (since $\phi(v)=\phi_2(v) \in V(S^*_2)$), but this would mean that $v \in I$ (else $l$-value cannot be respected by uniqueness) and we would get a contradiction. Lastly, since $|V(S_2)-I| = |V(S_2)|-|I|$, $|V(S_2)| = |V(S^*_2)|, |I|=|I^*|$, and $|V(S^*_2)-I^*|=|V(S^*_2)|-|I^*|$ we have 
$$|V(S_2)-I| = |V(S^*_2)-I^*|$$
and $\phi$ must be bijective on $V(S_2)-I \rightarrow V(S^*_2)-I$. Thus, $\phi$ is a bijection on $V(S_1)\cup V(S_2) \rightarrow V(S^*_1)\cup V(S^*_2)$.

We are done.

\end{proof}

\section{Method}

\subsection{Algorithm}

%\subsubsection{Proof of Lemma \ref{lemma:disjoint}}
\begin{proof}[Proof of Lemma \ref{lemma:disjoint}]
Since the algorithm processes subgraphs by adding one edge at a time, the theorem follows from proving that at any step in the algorithm, each subgraph in $A_i$ is disjoint and connected, then an edge can only be between two disjoint connected subgraphs or within the same connected subgraph. We prove this by induction on the number of processed edges.

\textit{Base case}: $i=1$. Clearly, all subgraphs consisting of a single vertex are disjoint and each such subgraph is trivially connected.

\textit{Inductive case}: Assume true for $i \geq 1$, we want to show it is true for $i+1$. Now at step $i+1$, by our inductive hypothesis, all subgraphs in $A_i$ are disjoint. The next set of subgraphs $A_{i+1} = (A_i - \{S_1,S_2\}) \cup S_{1,2}$ where $S_{1,2} = S_1 \cup S_2 \cup (v_a,v_b)$, $v_a \in V(S_1)$, and $v_b \in V(S_2)$, is constructed by processing an edge $(v_a,v_b)$. Regardless of whether this edge connects two disjoint subgraphs or is within the same subgraph, in the next step, all subgraphs in $A_{i+1}$ will still be disjoint. This is because we add the new subgraph $S_{1,2}$ to form $A_{i+1}$ but remove the single subgraph (if $S_1 = S_2$) or the two subgraphs (if $S_1 \neg S_2$), to form $A_{i+1}$, that $S_{1,2}$ was connected to by the processed edge. I.e. we remove all subgraphs from $A_i$ (to form $A_{i+1}$) that the new subgraph in $A_{i+1}$ connects to. Also, since each graph $S_1$ and $S_2$ is connected, so must $S_{1,2}$ be by virtue of edge $(v_a,v_b)$.

The lemma follows.
\end{proof}

\begin{remark}
NPA produces a sequence of encodings for a graph $G$ but when finished, set $A_{m+1}$ contains each of the largest (by inclusion) disjoint connected subgraphs of $G$. Since NPA builds encodings recursively from disjoint subgraphs, NPA constructs encodings for each such largest subgraph independently as if it is run once for each of them. Thus, proving that NPA produces iso-injective encodings for connected graphs, implies each multiset $W(G)$ and $C(G)$ is iso-injective also for disconnected graphs. 
\end{remark}

\begin{lemma}
\label{lemma:hvaluesnotchange}
For any graph $S$ encoded at step $i$ on run $G$ on NPA, the function $h^j$ restricted to $V(S)$ does not change from $j=i+1$ up to and including step $k$ (i.e. $j=k$) where $S$ is still a member of $A_{k}$. % and used to encode graph $S_{1,2} = S \cup S^* \cup (v_a,v_b)$.  
\end{lemma}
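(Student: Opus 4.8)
The plan is to prove the lemma by tracking exactly which nodes have their $h$-value modified at each iteration of the second for-loop of Algorithm \ref{alg:NPA}, and to show that no vertex of $S$ is ever among them while $S$ survives in the working set $A$. The one structural fact I will lean on is that in iteration $j$ the assignment $h^{j+1} = h^{j}$ is subsequently overwritten only on the vertex set $V(S_{1,2})$ of the newly-formed subgraph; hence $h^{j+1}$ and $h^{j}$ agree everywhere outside $V(S_{1,2})$.

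First I would record a monotonicity observation: the sets $A_j$ evolve only by deleting the one or two subgraphs merged at a step and inserting a strictly larger subgraph in their place. Thus a fixed subgraph $S$ (a specific vertex/edge subset of $G$, not an isomorphism class), once deleted, can never reappear, since any later subgraph containing those vertices strictly contains $S$. Consequently the hypothesis $S \in A_k$ forces $S \in A_j$ for every $j$ with $i+1 \le j \le k$, and in particular $S$ is never one of the merged subgraphs $S_1, S_2$ at any intermediate iteration $j$ with $i < j < k$ (otherwise it would be removed and could not lie in $A_k$).

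Next I would invoke Lemma \ref{lemma:disjoint}. At iteration $j$ the subgraphs $S_1, S_2$ are the unique members of $A_j$ containing the two endpoints of the processed edge, and $V(S_{1,2}) = V(S_1) \cup V(S_2)$. Because all subgraphs in $A_j$ are pairwise disjoint and $S \in A_j$ is distinct from both $S_1$ and $S_2$, we obtain $V(S) \cap V(S_{1,2}) = \emptyset$. Combined with the overwriting observation, this yields $h^{j+1}|_{V(S)} = h^{j}|_{V(S)}$ for each such $j$. Chaining these equalities over $j = i+1, \dots, k-1$ gives $h^{i+1}|_{V(S)} = \cdots = h^{k}|_{V(S)}$, which is the claim.

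The argument is essentially a clean induction, so I do not anticipate a serious obstacle; the only point demanding care is the bookkeeping around the internal-edge case $S_1 = S_2$ and the indexing convention for a subgraph ``encoded at step $i$'' (a single node when $i=0$, a merged subgraph otherwise). In both situations the governing invariant, namely that a surviving $S$ is disjoint from whatever subgraph is being rebuilt at step $j$, continues to hold by Lemma \ref{lemma:disjoint}, so these potential edge cases collapse into the main line of reasoning.
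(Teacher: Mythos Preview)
Your proposal is correct and follows essentially the same argument as the paper: both use Lemma \ref{lemma:disjoint} to conclude that while $S$ survives in $A_j$ it is disjoint from the subgraph $S_{1,2}$ being (re)built at step $j$, so the overwrite of $h^{j+1}$ on $V(S_{1,2})$ never touches $V(S)$. Your explicit monotonicity observation (that $S$, once removed, cannot reappear, forcing $S \in A_j$ for all intermediate $j$) is a nice bit of bookkeeping that the paper leaves implicit, but otherwise the two proofs coincide.
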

\begin{proof}
From the description of NPA we can tell that when a graph $S$ is encoded at step $i$ on run $G$, all $h^i$-values of $V(S)$ are updated to $h^{i+1}$-values, while all $h^{i+1}$-values of $V(G)-V(S)$ are inherited from $h^{i}$, and $S$ is added to $A_{i+1}$. Since all graphs in $A_k$ are disjoint (Lemma \ref{lemma:disjoint}), the next time $h$-values of $V(S)$ will change is at step $k'$ when NPA picks $S$ from $A_{k'}$ to encode some subgraph $S_{k'} = S \cup S_2 \cup (v_a,v_b)$, updates $h^{k'+1}$-values of $V(S_{k'})$ with $V(S) \subset V(S_{k'})$, and does not include $S$ in set $A_{k'+1}$ (and never will again). On the other hand, if $S$ is not picked from $A_{k'}$ to encode $S_{k'}$ we know that $V(S_{k'}) \cap V(S) = \emptyset$ by Lemma \ref{lemma:disjoint} so that $h$-values of $V(S)$ do not change, i.e. $h^{k'+1}|_{V(S)}=h^{k'}|_{V(S)}$, and that $S \in A_{k'+1}$.
\end{proof}

\subsubsection{Proof of Theorem \ref{thm:main}}
\begin{proof}[Proof of Theorem \ref{thm:main}]
So we want to show that any two graphs $S_{1,2}$ run $G$ and $S^*_{1,2}$ run $G^*$ with $c(S_{1,2})=c(S^{*}_{1,2})$ are ismorphic.
We prove this by double induction on the number of steps of the algorithm. This is because we need to be able to compare $c$-values that are produced at different runs of the algorithm. I.e. we want to prove a property $P(i,j)$ for all $i,j \in \mathbb{N}$, where $i$ and $j$ reflects step $i$ on first run ($G$) and step $j$ on second run ($G^*$) respectively. By the symmetry of the property, we only need to prove $P(1,1)$ and $P(i,j)\rightarrow P(i+1,j)$. 

To be exact, the property $P(i,j)$ that we will prove consists of the following: that for any subgraph $S$ encoded at step $i' \leq i$ on run $G$ and any sugraph $S^*$ encoded at step $j' \leq j$ on run $G^*$ with $c(S)=c(S^*)$ there exists an isomorphism that
\begin{enumerate}
     \item respects edges,
     \item respects the \textit{initial} $h^1$-values,
     \item maps identical values between $h^{i'+1}(V(S))$ and $h^{j'+1}(V(S^*))$ to each other, and
     \item is a bijection $V(S) \rightarrow V(S^*)$.
\end{enumerate}

Since $h^1$-values are simply injective encodings of node labels, by proving this, we know the isomorphism will respect both edges and labels, and thus be a graph isomorphism.

% Note that $c$-enconding for a graph must be created at some step of the algorithm and does not change after it has been created.

\textit{Base Case}: $P(0,0)$. In this case $S_{1,2}, S^*_{1,2}$ are simply vertices, and $c(S_{1,2}) = c(S^*_{1,2})$ if they have the same $h^1$-values, which means they are isomorphic in terms of $h^1$-values and edges as well as bijective. Furthermore, the isomorphism maps same values between $h^1(V(S_{1,2}))$ and $h^1(V(S^*_{1,2}))$ to each other. 

\textit{Inductive Case}: $P(i,j) \rightarrow P(i+1,j)$.

So assume we at step $i+1 >0$ on $G$ have $S_{1,2}=S_1 \cup S_2 \cup (v_a,v_b)$, where $S_{1,2}$ is being encoded at step $i+1$. % and $A_{i+1}=(A_i - \{S_1,S_2\}) \cup \{S_{1,2}\}$.  

We need to prove that for any graph $S^*_{1,2}$ encoded at step $j' \leq j$ on run $G^*$ with $c(S_{1,2})=c(S^*_{1,2})$ we have a bijective graph isomorphism between $S_{1,2}$ and $S^*_{1,2}$ that respects the edges, initial $h^1$-values, and that maps identical values between $h^{i+2}(V(S_{1,2}))$ and $h^{j'+1}(V(S^*_{1,2}))$ to each other. The reason why we only need to focus on $S_{1,2}$ is because for all other graphs encoded at step $i'<i+1$ on $G$, their $c$-values and $h^{i'+1}$-values have not changed so they are covered by our inductive hypothesis $P(i,j)$.

%it is the only graph added to $A_i$ to construct $A_{i+1}$, and the $h^{i+1}$-values of all vertices of other graphs in $A_{i+1}$ are the same as the $h^{i}$-values and their $c$-values has not changed either. Thus, the case for all subgraphs in sets $A_{i'}, i'<i+1$ and all other subgraphs in $A_{i+1}$ already follows from inductive hypothesis $P(i,j)$.

Now we know that $|E(S^*_{1,2})|>0$ and $j'>0$ because $c(S_{1,2})$ does not include the special $zero$-symbol, and therefore, neither does $c(S^*_{1,2})$. Therefore, we can also write $S^*_{1,2}=S^*_1 \cup S^*_2 \cup (v^*_a,v^*_b)$ (specifically, $(v^*_a,v^*_b)$ is the edge used to encode $S^*_{1,2}$ from the encodings of $S^*_1$ and $S^*_2$). From Lemma \ref{lemma:disjoint} we know $S_1,S_2,S^*_1,S^*_2$ are connected graphs.
\begin{align*}
    c(S_{1,2})& = r(\{(c(S_1),h^{i+1}(v_a)), \\ &(c(S_2),h^{i+1}(v_b))\}, \\
    & \mathbbm{1}_{S_1 = S_2}) \\
    c(S^*_{1,2})& = r(\{(c(S^*_1),h^{j'}(v^*_a)),\\
    &(c(S^*_2),h^{j'}(v^*_b))\}, \\
    & \mathbbm{1}_{S^*_1 = S^*_2})
\end{align*}
By injectivity:
\begin{align*}
    &\big(\{(c(S_1),h^{i+1}(v_a)), (c(S_2),h^{i+1}(v_b))\}, \ \mathbbm{1}_{S_1 = S_2}\big) \\
    = &\big(\{(c(S^*_1),h^{j'}(v^*_a)), (c(S^*_2),h^{j'}(v^*_b))\}, \ \mathbbm{1}_{S^*_1 = S^*_2}\big)
\end{align*}
and we may assume without loss of generality that
\begin{align*}
(c(S_1),h^{i+1}(v_a))=(c(S^*_1),h^{j'}(v^*_a)) \\
(c(S_2),h^{i+1}(v_b))=(c(S^*_2),h^{j'}(v^*_b))
\end{align*}
else we can just relabel the graphs.

$S_1,S_2$ are encoded before step $i+1$ on $G$ (say steps $i_1$ and $i_2$ respectively) and $S^*_1,S^*_2$ are encoded before step $j'$ on $G^*$ (say steps ${j'}_1$ and ${j'}_2$ respectively). In addition, since $S_1,S_2 \in A_{i+1}$ their $h^{i_1+1}$ and $h^{i_2+1}$ values cannot have changed before step $i+1$ (because then they would have been removed already, see Lemma \ref{lemma:hvaluesnotchange}), so $h^{i+1}|_{V(S_1)}=h^{i_1+1}|_{V(S_1)}$ and $h^{i+1}|_{V(S_2)}=h^{i_2+1}|_{V(S_2)}$ (The same holds for $S^*_1, S^*_2$). Then, we have by our inductive hypothesis two bijective isomorphisms
\begin{align*}
    \phi_1:S_1 \rightarrow S^*_1, \ \ \ \phi_2:S_2 \rightarrow S^*_2
\end{align*}
with respect to edges and $h^{1}$-values, that maps identical values between $h^{i+1}(V(S_1))$ and $h^{j'}(V(S_1^*))$ (and between $h^{i+1}(V(S_2))$ and $h^{j'}(V(S_2^*))$) to each other, we must have
$$ \forall v \in S_1, \forall v^* \in S^*_1, h^{i+1}(v)=h^{j'}(v^*) \rightarrow \phi_1(v)=v^* $$
(and similarly for $\phi_2$).

% We know that 
% $$h^{i+2}(V(S_1))=r_v(c(S_{1}),h^{i+1}(V(S_1)))$$ and 
% $$h^{j'+1}(V(S^*_1))=r_v(c(S^*_{1}),h^{j'}(V(S^*_1)))$$
% Since $c:=c(S_1)=c(S^*_1)$ we also know since $r_v(c,\cdot)$ is injective across domains $h^{i+1}(V(S_1))$ and  $h^{j'}(V(S^*_1))$, that for any $v \in V(S_1), w \in V(S^*_1)$ with $h^{i+1}(v)=h^{j'}(w)$ we have $h^{i}(v)=h^{j'-1}(w)$. The same holds for $S_2$ and $S^*_2$.

% Specifically then, since 
% $$h^{i+1}(v_a)=h^{j'}(v^*_a), \ h^{i+1}(v_b)=h^{j'}(v^*_b)$$ we have 
% $$h^{i}(v_a)=h^{j'-1}(v^*_a), \ h^{i}(v_b)=h^{j'-1}(v^*_b)$$
% and thus
% $$ \phi_1(v_a)=v^*_a, \ \ \ \phi_2(v_b)=v^*_b $$

Specifically, since $h^{i+1}(v_a)=h^{j'}(v^*_a), h^{i+1}(v_b)=h^{j'}(v^*_b)$, we have
$$ \phi_1(v_a)=v^*_a, \ \ \ \phi_2(v_b)=v^*_b $$

Also we know that for all edges $(v_1,v_2) \in E(S_1)$, $(w_1,w_2) \in E(S_2)$ we have 
$$ (\phi_1(v_1), \phi_1(v_2)) \in E(S^*_1), \ \ \ (\phi_2(w_1),\phi_2(w_2)) \in E(S^*_2) $$
and the only new edge in $S_{1,2}$ is $(v_a,v_b)$, $v_a \in V(S_1), v_b \in V(S_2)$, and the only new edge in $S^*_{1,2}$ is $(v^*_a,v^*_b)$, $v^*_a \in V(s^*_1), v^*_b \in V(S^*_2)$.

Consider:
\begin{equation}
    \phi(v) =
    \begin{cases*}
      \phi_1(v) & if $v \in V(S_1)$ \\
      \phi_2(v)  & otherwise
    \end{cases*}
\end{equation}

% Since $\phi_1$ and $\phi_2$ respects $h^1$-values, so must $\phi$. We prove this by splitting into two cases:
We split into two cases:

\textit{Case 1}: ($ \mathbbm{1}_{S_1 = S_2} = False$). This implies that $S_1 \neq S_2$ and $S^*_1 \neq S^*_2$ (where $=$ is stronger than isormorphic). By Lemma \ref{lemma:disjoint} we have $V(S_1) \cap V(S_2)= V(S^*_1) \cap V(S^*_2) = \emptyset$. Since $\phi$ corresponds to a graph isomorphism on the disjoint $S_1\rightarrow S^*_1,S_2\rightarrow S^*_2$ and the new edge is respected, $\phi$ is a graph isomorphism between $S_{1,2}$ and $S^*_{1,2}$.

In addition, since $h^{i+2}$ and $h^{j'+1}$ are injective across domains $h^{i+1}(V(S_{1,2}))$ and $h^{j'}(V(S^*_{1,2}))$ it also means that $h^{i+2}$ and $h^{j'+1}$ are injective across domains $h^{i+1}(V(S_1))$ and $h^{j'}(V(S^*_1))$. Thus, if $h^{i+2}(v)=h^{j'+1}(w)$ with $v \in V(S_1), w \in V(S^*_1)$, then $h^{i+1}(v)=h^{j'}(w)$ such that by inductive hypothesis $\phi_1(v)=w$ and thus $\phi(v)=w$ (and similarly for $S_2,S^*_2$, and $\phi_2$). 

However, if there exists $v \in V(S_1), w \in V(S_2), u \in V(S^*_{1,2})$ with $h^{i+2}(v)=h^{i+2}(w)=h^{j'+1}(u)$ we need to make sure $\phi(v)=\phi(w)=u$ (to always map identical values to each other), but then $\phi$ would not be a graph isomorphism since $v\neq w$ (we know $S_1 \cap S_2 = \emptyset$). This could also be the case for $S^*_1,S^*_2,S_{1,2}$. But by uniqueness from $r_v$ we know $h^{i+2}(V(S_1)) \cap h^{i+2}(V(S_2)) = \emptyset$ and $h^{j'+1}(V(S^*_1)) \cap h^{j'+1}(V(S^*_2)) = \emptyset$, so this cannot happen, and we can conclude that identical values across $h^{i+2}(V(S_{1,2}))$ and $h^{j'+1}(V(S^*_{1,2}))$ are always mapped to each other.

\textit{Case 2}: ($ \mathbbm{1}_{S_1 = S_2} = True$). Which implies that $S_1 = S_2$ and $S^*_1 = S^*_2$ (in a stronger sense than isomorphic). This means $\phi = \phi_1$. Which means that $\phi$ is bijection (no new vertices are added, only an edge), and the new edge is also respected, so $\phi$ is a graph isomorphism between $S_{1,2}\rightarrow S^*_{1,2}$ that respects $h^1$-values and edges, because $\phi_1$ does so. 

In addition, $h^{i+2}$ and $h^{j'+1}$ are injective across domains $h^{i+1}(V(S_{1,2}))$ and $h^{j'}(V(S^*_{1,2}))$ with $V(S_{1,2})=V(S_1), V(S^*_{1,2})=V(S^*_1)$. Thus, if $h^{i+2}(v)=h^{j'+1}(w)$ with $v \in V(s_1), w \in V(S^*_1)$, then $h^{i+1}(v)=h^{j'}(w)$ such that by inductive hypothesis $\phi_1(v)=w$ and thus $\phi(v)=w$. Since $S_1 = S_2$ and $S^*_1 = S^*_2$ we can conclude that identical values across $h^{i+2}(V(S_{1,2}))$ and $h^{j'+1}(V(S^*_{1,2}))$ are mapped to each other.

By Lemma \ref{lemma:disjoint} we know these two cases are exhaustive. Thus, $\phi$ is a bijective isomorphism between $S_{1,2}$ and $S^*_{1,2}$ with respect to edges and $h^1$-values. Furthermore, the isomorphism maps identical values across $h^{i+2}(V(S_{1,2}))$ and $h^{j'+1}(V(S^*_{1,2}))$ to each other.

Since $h^1$-values are injective with respect to node labels, we are done.

% Since we assumed the $w$-values are an injective function of the $c$-values we are done.

\end{proof}

\subsection{Existence of Required Functions}

We start by proving that there exists no continuous injective function from $\mathbb{R}^2$ to $\mathbb{R}$.
\begin{theorem}
There exists no continuous injective function $f : \mathbb{R}^2 \to \mathbb{R}$.
\end{theorem}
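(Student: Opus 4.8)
The plan is to prove this by a connectivity (topological) argument, exploiting the fact that removing a single point disconnects $\mathbb{R}$ but does not disconnect $\mathbb{R}^2$. Suppose, for contradiction, that $f:\mathbb{R}^2 \to \mathbb{R}$ is continuous and injective. First I would pick any point $p \in \mathbb{R}^2$ and consider its image $f(p) \in \mathbb{R}$. The key observation is that $\mathbb{R}^2 \setminus \{p\}$ is path-connected (indeed connected), since any two points in the plane with a single point removed can be joined by a path avoiding that point. Consequently, its continuous image $f(\mathbb{R}^2 \setminus \{p\})$ must be a connected subset of $\mathbb{R}$, hence an interval.

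Now I would bring in injectivity. Because $f$ is injective, $f(p)$ is not in the image $f(\mathbb{R}^2 \setminus \{p\})$; that is, $f$ sends the punctured plane into $\mathbb{R} \setminus \{f(p)\}$, and it does so without ever taking the value $f(p)$. The set $\mathbb{R} \setminus \{f(p)\}$ is \emph{disconnected}, splitting into the two open rays $(-\infty, f(p))$ and $(f(p), +\infty)$. So I would compare the two facts just established: on one hand $f(\mathbb{R}^2 \setminus \{p\})$ is connected; on the other hand it lies inside the disconnected set $\mathbb{R}\setminus\{f(p)\}$. A connected set contained in $\mathbb{R}\setminus\{f(p)\}$ must lie entirely within one of the two rays.

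To extract a contradiction, I would argue that the image cannot fit inside a single ray while $f$ remains injective on all of $\mathbb{R}^2$. Here is the cleanest route: the whole plane $\mathbb{R}^2$ is connected, so $f(\mathbb{R}^2)$ is an interval $J \subseteq \mathbb{R}$ containing $f(p)$. Removing the single point $p$ from the plane leaves the image missing at most the single value $f(p)$ (by injectivity, no other point maps there), so $f(\mathbb{R}^2\setminus\{p\}) = J \setminus \{f(p)\}$. But $J$ is an interval containing $f(p)$, and because $\mathbb{R}^2\setminus\{p\}$ is nonempty and $f$ is nonconstant, $f(p)$ cannot be an endpoint of $J$ — any neighborhood of $p$ maps near $f(p)$ on \emph{both} sides by the same punctured-plane connectivity argument applied to small punctured disks. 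Thus $J\setminus\{f(p)\}$ is disconnected, contradicting that it equals the connected set $f(\mathbb{R}^2\setminus\{p\})$.

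I expect the main obstacle to be the final step: carefully ruling out the degenerate case where $f(p)$ is an endpoint of the interval $J$. The slick way to dispose of it is to apply the punctured-plane argument locally — take a small open disk $D$ around $p$; then $D\setminus\{p\}$ is still connected, its image is connected and avoids $f(p)$, yet by continuity points on both ``sides'' of $p$ must map arbitrarily close to $f(p)$ from both directions, forcing $f(p)$ to be strictly interior to the local image and hence interior to $J$. Making that ``both sides'' claim rigorous is the only delicate piece; everything else is a direct application of the theorem that continuous images of connected sets are connected.
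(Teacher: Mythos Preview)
Your approach is the same connectivity argument the paper uses: the punctured plane is connected, its image under a continuous injective $f$ is the interval $J=f(\mathbb{R}^2)$ with a single point removed, and that should be disconnected. The one place you make extra work for yourself is the endpoint case. Rather than picking an arbitrary $p$ and then trying to argue locally that $f(p)$ cannot be an endpoint of $J$, the paper simply \emph{chooses} $p$ so that $f(p)$ lies in the interior of $J$ from the start. This is always possible: injectivity makes $f$ non-constant, so $J$ is a non-degenerate interval and has interior points, each of which is hit by $f$. With that choice the contradiction is immediate and your punctured-disk detour (whose ``both sides'' step you correctly flag as not yet rigorous) becomes unnecessary.
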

\begin{proof}
Suppose $f:\mathbb{R}^2\rightarrow \mathbb{R}$ is continuous. Then the image (which is an interval in $\mathbb{R}^2$) of any connected set in $\mathbb{R}^2$ under $f$ is connected. Note that this is a non-degenerate interval (a degenerate interval is any set consisting of a single real number) since the function is injective. Now, if you remove a point from $\mathbb{R}^2$ it remains connected, but if we remove a point whose image is in the interior of the interval then the image cannot be still connected if the function is injective.
\end{proof}

We add some lemmas before we prove the main theorem of this section. All statements will be concerning NPA using the functions put forward in Section \ref{sec:reqfuncs}.

\begin{lemma}
\label{lemma:mhvaluesN0}
For NPA, all $m^1,m^2$ and $h$-values that appear are in $\mathbb{N}_0$
\end{lemma}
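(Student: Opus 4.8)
The plan is to prove Lemma~\ref{lemma:mhvaluesN0} by induction on the step number of NPA, tracking the values of $h$, $m^1$, and $m^2$ simultaneously as new subgraphs are encoded. I first observe what the claim is asserting: that throughout the entire run of NPA using the functions defined in Section~\ref{sec:reqfuncs}, every $h$-value attached to a vertex and every $m^1,m^2$ coordinate appearing in a $c$-encoding $c(S_k)=(y_k,m^1_k,m^2_k)$ is a nonnegative integer. Since all these quantities are produced by the explicit update rules $h_{init}$, $c_{init}$, $r_v$, and the $m$-coordinate part of $r_c$, the strategy is simply to verify that each of these rules maps $\mathbb{N}_0$-inputs to $\mathbb{N}_0$-outputs.

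For the base case I would examine the first for-loop of NPA. Here $h^1(v_i)=h_{init}(l(v_i))=l(v_i)$, and since labels live in $\mathbb{N}_+\subset\mathbb{N}_0$ the initial $h$-values are nonnegative integers. The initial $c$-encoding is $c(v_i)=c_{init}(h^1(v_i))=(0,0,h^1(v_i)+1)$, so its $m^1$-coordinate is $0\in\mathbb{N}_0$ and its $m^2$-coordinate is $h^1(v_i)+1\in\mathbb{N}_0$; both are nonnegative integers. This settles the claim at step $0$.

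For the inductive step I would assume that at the start of step $i$ all $h^i$-values and all $m^1,m^2$ coordinates of the encodings in $A_i$ lie in $\mathbb{N}_0$, and then check the two update rules applied when edge $e_i$ is processed. First, the new encoding $c(S_{1,2})$ has $m^1$-coordinate $m^1_1+m^2_1$ — wait, reading the definition in Section~\ref{sec:reqfuncs} carefully, the second coordinate is $m^2_1+m^2_2+1$ and the third is $2m^2_1+2m^2_2+2$; since by hypothesis $m^2_1,m^2_2\in\mathbb{N}_0$, both of these are sums and integer multiples of nonnegative integers plus a positive constant, hence again in $\mathbb{N}_0$. Second, the vertex update $r_v$ either leaves $h(v)$ unchanged (when $\mathbbm{1}_{v\in V(S_1)}=0$), preserving membership in $\mathbb{N}_0$, or sends $h(v)\mapsto h(v)+m^1_{1,2}$ (when the indicator is $1$), which is a sum of two nonnegative integers and therefore in $\mathbb{N}_0$. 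Since every newly produced value is covered by one of these cases, and inherited values are unchanged, the claim holds at step $i+1$.

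I anticipate that the main subtlety — rather than a genuine obstacle — is bookkeeping: making sure I have enumerated \emph{every} place where an $h$-, $m^1$-, or $m^2$-value is created or modified, so that no update rule is overlooked, and confirming that the $m^1$-coordinate fed into $r_v$ (namely $m^1_{1,2}$ from the freshly computed $c(S_{1,2})$) is itself already shown to be in $\mathbb{N}_0$ before it is used. The argument is otherwise a routine closure-under-the-operations check; there is no hard analytic content, only the need to respect the order in which the two update rules fire within a single step so that the inductive hypothesis legitimately applies to each input.
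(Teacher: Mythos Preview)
Your proposal is correct and follows essentially the same approach as the paper: an induction on the step number, verifying that the initial values from $h_{init}$ and $c_{init}$ lie in $\mathbb{N}_0$ and that the update rules for $m^1,m^2$ and $h$ preserve membership in $\mathbb{N}_0$. If anything, your version is slightly more careful than the paper's self-described ``informal induction argument,'' particularly in noting that $m^1_{1,2}$ is established as a nonnegative integer before it is fed into $r_v$.
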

\begin{proof}
We show this through an informal induction argument. Since $h_{init}(v)=l(v) \in \mathbb{N}_+$ and $c_{init}(h)=(0,0,h+1)$ we know that all $h^1$-values are in $\mathbb{N}_0$, and for all $c$-values created at step $0$ we have $m^1,m^2 \in \mathbb{N}_0$. Now since new $m$-values are created from $m^1_1,m^2_1,m^1_2,m^2_2 \in \mathbb{N}_0$ through $m^1_{1,2} = m^2_1+m^2_2+1 \in \mathbb{N}_0, m^2_{1,2} = 2(m^2_1+m^2_2+1) \in \mathbb{N}_0$ it is not hard to see that all $m^1,m^2$ that appear will be in $\mathbb{N}_0$. Similarly, new $h^{i+1}$-values are created from $h^{i}$-values through $h^{i+1}(v)=h^{i}(v) \in \mathbb{N}_0$ or $h^{i+1}(v)=h^{i}(v) + m^1 \in \mathbb{N}_0$ (since $m^1 \in \mathbb{N}_0$), so all $h$-values will be in $\mathbb{N}_0$.
\end{proof}

\begin{lemma}
For any graph $s_k$ encoded by Algorithm \ref{alg:NPA} at step $i$ on run $G$ we have $m^2_k > \max(h^{i+1}(V(S_k)))$ and each value in $h^{i+1}(V(S_k)) = r_v(c(S_k),h^{i}(V(S_k)))$ is unique.
\label{lemma:preweak}
\end{lemma}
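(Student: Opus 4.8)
The plan is to prove both claims simultaneously by strong induction on the step $i$ at which the subgraph $S_k$ is encoded, driving everything from the explicit recursions of Section \ref{sec:reqfuncs}. Recall that a single node gets $c_{init}(h) = (0,0,h+1)$, and that for a merge $S_{1,2} = S_1 \cup S_2 \cup (v_a,v_b)$ we have $m^1_{1,2} = m^2_1 + m^2_2 + 1$ and $m^2_{1,2} = 2m^2_1 + 2m^2_2 + 2 = 2m^1_{1,2}$, while $r_v$ adds $m^1_{1,2}$ to every $h$-value of $V(S_1)$ and leaves the $h$-values of $V(S_2)$ untouched. The coupling of the two assertions is essential: the uniqueness claim will be forced by the bound $m^2 > \max h$, so I will carry both as a single invariant rather than prove them in isolation.

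For the base case ($i = 0$, a single node $v$), the only $h$-value is $h^1(v) = l(v) \in \mathbb{N}_+$ and $m^2 = l(v)+1 > l(v)$, so both claims hold (one value is vacuously unique). For the inductive step I would write $S_{1,2}$ as a merge of $S_1$ (encoded at some $i_1 < i$) and $S_2$ (at some $i_2 < i$). By Lemma \ref{lemma:hvaluesnotchange} the restrictions $h^i|_{V(S_1)}$ and $h^i|_{V(S_2)}$ coincide with $h^{i_1+1}|_{V(S_1)}$ and $h^{i_2+1}|_{V(S_2)}$, so the inductive hypothesis applies verbatim: $m^2_1 > \max h^i(V(S_1))$, $m^2_2 > \max h^i(V(S_2))$, and each of these two sets of $h$-values is distinct. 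By Lemma \ref{lemma:disjoint} there are exactly two cases.

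If $S_1 = S_2$ (an edge inside one subgraph), then $V(S_{1,2}) = V(S_1)$ and every vertex is shifted by the common amount $m^1_{1,2}$, so distinctness is inherited from the hypothesis, and $\max h^{i+1}(V(S_{1,2})) < m^2_1 + m^1_{1,2} < 2m^1_{1,2} = m^2_{1,2}$, where the first inequality uses $h^i < m^2_1$ (nonnegativity via Lemma \ref{lemma:mhvaluesN0}) and the second uses $m^2_1 < m^1_{1,2}$. If $S_1 \neq S_2$ (disjoint, $V(S_1) \cap V(S_2) = \emptyset$), then the shifted $S_1$-values lie in $[m^1_{1,2},\infty)$ while the untouched $S_2$-values satisfy $h^{i+1}(w) = h^i(w) < m^2_2 < m^2_1 + m^2_2 + 1 = m^1_{1,2}$; thus the two blocks occupy disjoint intervals, and together with within-block distinctness (a common shift preserves it on $V(S_1)$) the entire set $h^{i+1}(V(S_{1,2}))$ is distinct. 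The maximum is attained on the shifted block and bounded by $m^2_1 + m^1_{1,2} = 2m^2_1 + m^2_2 + 1 < 2m^2_1 + 2m^2_2 + 2 = m^2_{1,2}$, closing the induction.

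The main obstacle is precisely the between-block uniqueness in the disjoint case: it reduces to the interval-separation estimate $\max h^i(V(S_2)) < m^1_{1,2} \le \min h^{i+1}(V(S_1))$, which is what pushes the shifted and unshifted vertex-labels into disjoint ranges. It is exactly here that the invariant $m^2 > \max h$ (and not merely distinctness) is indispensable, which is why the two statements cannot be decoupled and must be established in a single joint induction. Everything else is routine arithmetic on the recursions above, relying only on the nonnegativity of all $m$- and $h$-values from Lemma \ref{lemma:mhvaluesN0}.
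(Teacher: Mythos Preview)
Your proof is correct and follows essentially the same approach as the paper: strong induction on the step, invoking Lemma~\ref{lemma:hvaluesnotchange} to transport the hypothesis to the current $h^i$-values, then splitting via Lemma~\ref{lemma:disjoint} into the $S_1=S_2$ and disjoint cases and closing each with the same arithmetic bounds $\max h^{i+1} < m^2_1 + m^1_{1,2} \le 2m^2_1 + m^2_2 + 1 < m^2_{1,2}$. Your interval-separation phrasing of the between-block uniqueness is a bit crisper than the paper's, but the argument is the same.
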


\begin{proof}
We will prove this by strong induction on the number of steps $i$ of the algorithm on run $G$. Property $P(i)$ is that any graph $S_k$ encoded at step $i$ on run $G$:
\begin{itemize}
    \item $m^2_k > \max(h^{i+1}(V(S_k)))$, and
    \item each value in $h^{i+1}(V(S_k)) = r_v(c(S_k),h^{i}(V(S_k)))$ is unique
\end{itemize}

\textit{Base Case}: $P(0)$. This means $S_k$ consists of a single vertex $v$. Thus, $h^1(V(S_k))=\{ l(v) \} \subset \mathbb{N}_+$ and it is unique. Consequently, $m_k^2=l(v)+1 > 0$, such that $m_k^2 > \max(h^1(V(S_k))=l(v)$. We also note that $m^1_k = 0$.
%Trivially, $r_v(c(s_k),h^1(V(S_k)))$ is unique, since it is not evaluated.

\textit{Inductive Case}: $ (\forall i' \leq i, P(i')) \rightarrow P(i+1)$.

Since $i+1>0$ we have $|E(s_k)|>0$ so we can write $V(S_{1,2}):=V(S_k)=V(S_1)\cup V(S_2)$, where $S_1,S_2$ were encoded before step $i+1$, say step $i_1$ and $i_2$ respectively. By inductive hypothesis, this means that all values in $h^{i_1+1}(V(S_1))$ and all values in $h^{i_2+1}(V(S_2))$ are unique, and since $S_1,S_2 \in A_{i+1}$, by Lemma \ref{lemma:hvaluesnotchange}, these $h$-values cannot have changed before step $i+1$ (i.e. $h^{i_1+1}|_{V(S_1)}=h^{i+1}|_{V(S_1)}, h^{i_2+1}|_{V(S_2)}=h^{i+1}|_{V(S_2)}$). Thus, each value in $h^{i+1}(V(S_1))$ and each value in $h^{i+1}(V(S_2))$ is unique. By injective hypothesis we also know that 
$$m^2_1 > \max(h^{i+1}(V(S_1))), \ m^2_2 > \max(h^{i+1}(V(S_2)))$$

From Lemma \ref{lemma:mhvaluesN0}, we know $m^2_1,m^2_2 \in \mathbb{N}_0$ and all $h$-values in $\mathbb{N}_0$, i.e. they are non-negative.

Now we have, with $m^1_{1,2}=m^2_1+m^2_2+1 > 0$, that
\begin{align*}
h^{i+2}&(V(S_{1,2})):=r_v(c(S_{1,2}),h^{i+1}(v)) = \\
&\left\{\begin{array}{lr}
        h^{i+1}(v)+m^1_{1,2}, \ \ & \text{if } v \in V(S_1) \\
        h^{i+1}(v), \ \ & \text{else} 
        \end{array} \right\}
\end{align*}
This means now that each value in $h^{i+2}(V(S_1))$ and each value in $h^{i+2}(V(S_2))$ is unique. This is easier to see for $h^{i+2}(V(S_1))$ because $r_v$ is an injective function on the values of $h^{i+1}(V(S_1))$ which we know are all unique. However, since 
$$m^1_{1,2} > \max(h^{i+1}(V(S_2))), \ \min(h^{i+1}(V(S_1))) \geq 0$$
$r_v$ is also injective on $h^{i+1}(V(S_2))$. To prove this, suppose $r_v(h^{i+1}(v))=r_v(h^{i+1}(w))$ with $v,w\in V(S_2)$, then $h^{i+1}(v)=h^{i+1}(w)$ unless, w.l.o.g, $v \in V(S_1), w \notin V(S_1)$ from which we reach a contradiction since $\min(h^{i+1}(V(S_1))) + m^1_{1,2} > \max(h^{i+1}(V(S_2)))$. 

Since $\max(h^{i+1}(V(S_1))) + m_2^2 + 1 > \max(h^{i+1}(V(S_2)))$ we have
\begin{align*}
\max(h^{i+2}(V(S_{1,2}))) =& \max(h^{i+1}(V(S_1))) \\
&+ m^2_1+m^2_2 + 1 \\ 
&< 2m^2_1 + m^2_2 + 1  
\end{align*}
Since $m^2_{1,2}=2m^2_1+2m^2_2+2 > 0$ this means that $\max(h^{i+2}(V(S_{1,2}))) < m^2_{1,2}$. We can also conclude $m^1_{1,2}, m^2_{1,2} \in \mathbb{N}_+$.

By Lemma \ref{lemma:disjoint}, we know that either $S_1 = S_2$ or $S_1 \cap S_2 = \emptyset$. If $S_1 = S_2$, then $V(S_{1,2})=V(S_1)=V(S_2)$ such that $h^{i+2}|_{V(S_{1,2})}=h^{i+1}|_{V(S_1)} + m^1_{1,2}$, which means that each value in $h^{i+2}(V(S_{1,2}))$ is unique because each value in $h^{i+1}(V(S_1))$ is unique. Thus we are done, and we now assume that $S_1 \cap S_2 = \emptyset$.

This means that $V(S_1) \cap V(S_2) = \emptyset$ and 
$$h^{i+2}(V(S_1)) \cap h^{i+2}(V(S_2)) = \emptyset$$
since $m^1_{1,2} > m^2_1 + m^2_2 > \max(h^{i+1}(V(S_2)))$, $\max(h^{i+2}(V(S_2))) = \max(h^{i+1}(V(S_2)))$. Thus, all values in
$$h^{i+2}(V(S_{1,2})) = h^{i+2}(V(S_1)) \sqcup h^{i+2}(V(S_2))$$
are unique. 

Thus we have proved $P(i+1)$.
\end{proof}

\begin{corollary}
This also means that $m^1_k=0$ if and only if $|E(S_k)|=0$ (i.e. in the base case). Thus, it serves as the required $zero$-symbol.
\end{corollary}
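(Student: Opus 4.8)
The plan is to prove the biconditional $m^1_k = 0 \iff |E(S_k)| = 0$ directly from the explicit formulas for $m^1$ given in Section \ref{sec:reqfuncs}, splitting into the two implications and reusing the sign information already established in the proof of Lemma \ref{lemma:preweak} together with Lemma \ref{lemma:mhvaluesN0}. The two cases are keyed to the two for-loops of Algorithm \ref{alg:NPA}, which is the only structural fact I need.

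First I would handle $|E(S_k)| = 0 \Rightarrow m^1_k = 0$. If $S_k$ has no edges it is a single vertex, hence encoded in the first for-loop via $c_{init}$; since $c_{init}(h) = (0,0,h+1)$ we read off $m^1_k = 0$ immediately. For the converse I would argue the contrapositive: assume $|E(S_k)| > 0$ and show $m^1_k \neq 0$. Any subgraph containing an edge is produced in the second for-loop as $S_{1,2} = S_1 \cup S_2 \cup (v_a,v_b)$, so $m^1_k = m^1_{1,2} = m^2_1 + m^2_2 + 1$. By Lemma \ref{lemma:mhvaluesN0} all $m^2$-values lie in $\mathbb{N}_0$, so $m^2_1, m^2_2 \geq 0$ and therefore $m^1_{1,2} \geq 1 > 0$. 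Combining the two directions yields the equivalence.

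The argument is essentially immediate once non-negativity of the $m^2$-values is in hand, so there is no real obstacle; the only point needing a moment of care is the structural observation that the two cases — $S_k$ a bare vertex versus $S_k$ built from an edge — are exhaustive and mutually exclusive, which is exactly the dichotomy enforced by the two for-loops of NPA. The conclusion about the $zero$-symbol then follows: since $m^1_k$ vanishes precisely on edgeless subgraphs, testing $m^1_k = 0$ detects whether an encoded subgraph has zero edges, which is exactly the role the special $zero$-symbol was required to play when one wants to deconstruct an encoding into two earlier encodings joined by an edge.
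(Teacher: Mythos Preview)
Your proof is correct and follows exactly the approach implicit in the paper: the corollary is stated without a separate proof, but the two facts you use---$m^1_k=0$ in the base case of Lemma~\ref{lemma:preweak} and $m^1_{1,2}=m^2_1+m^2_2+1\in\mathbb{N}_+$ in its inductive case (via Lemma~\ref{lemma:mhvaluesN0})---are precisely what the paper records inside that proof. There is nothing to add.
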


% \begin{lemma}
% \label{lemma:mhvaluesN0}
% All $h$-values are in $\mathbb{N}_0$.
% \end{lemma}
% \begin{proof}
% Since $h_{init}(l(v))=l(v) \in \mathbb{N}_+$ and all $m^1,m^2 \in \mathbb{N}_0$, we have $h^{i+1}(v) = r_v(c, h^{i}) \in \mathbb{N}_0, \forall i \in \mathbb{N}_0$; the lemma follows.
% \end{proof}

Armed with this lemma we will now prove the following:

\begin{lemma}
For all graphs $S,S^*$ encoded at step $i$ run $G$ and $j$ run $G^*$ respectively with $c:=c(S)=c(S^*)$, $r_v(c,\cdot)$ is injective across domains $h^{i}(V(S))$ and $h^{j}(V(S^*))$.
\end{lemma}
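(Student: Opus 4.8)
The plan is to reduce this cross-run injectivity claim to a threshold argument governed by the second coordinate of the shared encoding $c$. Writing $c = (y, m^1, m^2)$ and unfolding the explicit $r_v$ from Section \ref{sec:reqfuncs}, the map acts by $r_v(c, \tilde h, b) = \tilde h + b\, m^1$; concretely, for $v \in V(S)$ it sends $h^i(v)$ to $h^{i+1}(v) = h^i(v) + \mathbbm{1}_{v \in V(S_1)}\, m^1$, and ``injective across the two domains'' is the assertion that $h^{i+1}(v) = h^{j+1}(w)$ forces $h^i(v) = h^j(w)$ for $v \in V(S)$, $w \in V(S^*)$. Since $c(S) = c(S^*)$, the shift $m^1$ is literally the same constant on both sides, which is the reason a single threshold can separate the two runs simultaneously.

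First I would dispatch the degenerate case $m^1 = 0$: by the corollary to Lemma \ref{lemma:preweak} this happens exactly when $S$ and $S^*$ have no edges, so $r_v(c, \cdot)$ is the identity and the claim is trivial. Hence I may assume $m^1 > 0$, so both graphs carry an edge and decompose as $S = S_1 \cup S_2 \cup (v_a, v_b)$ and $S^* = S^*_1 \cup S^*_2 \cup (v^*_a, v^*_b)$ with $m^1 = m^2_1 + m^2_2 + 1$, the $m^2_k$ being the third coordinates of $c(S_1), c(S_2)$ (and analogously for $S^*$).

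The heart of the argument is the dichotomy: for each $v \in V(S)$, one has $h^{i+1}(v) \ge m^1$ when $\mathbbm{1}_{v \in V(S_1)} = 1$ and $h^{i+1}(v) < m^1$ when $\mathbbm{1}_{v \in V(S_1)} = 0$. The upper branch is immediate from $h^i(v) \ge 0$ (Lemma \ref{lemma:mhvaluesN0}). For the lower branch, an indicator-$0$ node must lie in $V(S_2)$ with $S_1 \neq S_2$ (Lemma \ref{lemma:disjoint}); since $S_2 \in A_i$ its $h$-values are frozen at their encoding-time values (Lemma \ref{lemma:hvaluesnotchange}), and Lemma \ref{lemma:preweak} then gives $h^i(v) < m^2_2 \le m^2_1 + m^2_2 < m^1$. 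The identical statement holds for $S^*$ because its threshold equals the same $m^1$. I expect this step to be the main obstacle, and within it the strict bound $h^i(v) < m^1$ for the unshifted nodes, which is exactly where Lemmas \ref{lemma:preweak}, \ref{lemma:hvaluesnotchange}, \ref{lemma:mhvaluesN0} and the identity $m^1 = m^2_1 + m^2_2 + 1$ are jointly consumed.

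To finish, suppose $h^{i+1}(v) = h^{j+1}(w)$. Their common value is either $\ge m^1$ or $< m^1$, and the dichotomy (applied to $S$ and to $S^*$ with the shared threshold) forces the two indicators to agree; cancelling the equal term $\mathbbm{1}_{v \in V(S_1)}\, m^1 = \mathbbm{1}_{w \in V(S^*_1)}\, m^1$ leaves $h^i(v) = h^j(w)$, which is precisely injectivity across the domains $h^i(V(S))$ and $h^j(V(S^*))$.
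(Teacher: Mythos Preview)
Your proof is correct and follows essentially the same threshold argument as the paper: both use that the shared $m^1$ cleanly separates the shifted (indicator-$1$) values from the unshifted (indicator-$0$) ones via Lemmas \ref{lemma:preweak}, \ref{lemma:hvaluesnotchange}, and \ref{lemma:mhvaluesN0}. Your presentation is slightly more economical in that the unified dichotomy absorbs the $S_1 = S_2$ case automatically (no indicator-$0$ nodes exist there), whereas the paper treats that case separately and invokes $\mathbbm{1}_{S_1=S_2}=\mathbbm{1}_{S_1^*=S_2^*}$ explicitly.
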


\begin{remark}
We reiterate, with a function $f:X \rightarrow Y$ being injective across domain $X_1$ and $X_2$ with $X_1, X_2 \subset X$, we mean that for all $x_1 \in X_1, x_2 \in X_2$ with $f(x_1)=f(x_2)$ we have $x_1=x_2$.
\end{remark}

\begin{proof}
% We prove this by double induction on $i$ and $j$. No, we don't

First if $i=0$ or $j=0$ we know that both $i=j=0$ due to the $zero$-symbol, and then it is vacuously true, because $h^0$ does not exist and $r_v$ is not applied. So we assume $i,j>0$.

Since $i,j>0$ we have $V(S)=V(S_1)\cup V(S_2)$, $V(S^*)=V(S^*_1)\cup V(S^*_2)$. We also know $(m^1,m^2)=(m^1_{*},m^2_{*})$. By Lemma \ref{lemma:disjoint} we know that either $S_1 = S_2$ or $S_1 \cap S_2 = \emptyset$.

If $S_1 = S_2$, then since $c(S)=c(S^*)$ we also have $S^*_1 = S^*_2$, which means that $V(S)=V(S_1)=V(S_2)$ and $V(S^*)=V(S^*_1)=V(S^*_2)$. This means that $r_v(c,h)=h+m^1=h+m^1_*$, which then is injective and in particular injective across $h^i(V(S))$ and $h^j(V(S^*))$. Thus, we now assume that $S_1 \cap S_2 = \emptyset$.

% This means that $V(S_1)\cap V(S_2) = \emptyset$ and by uniquness (Lemma \ref{lemma:preweak}) of values in $h^{i+1}(V(S))$ we have that 
% $$h^{i+1}(V(S_1)) \cap h^{i+1}(V(S_2)) = \emptyset$$
% (and same holds for $S^*_1,S^*_2$).

This means that $V(S_1)\cap V(S_2) = \emptyset$. Now suppose
$$r_v(c,h^{i}_a)=r_v(c,h^{j}_b)$$
with $h^{i}_a \in h^{i}(V(S)), h^{j}_b \in h^{j}(V(S^*))$. Consider two cases:

\textit{Case 1}: $h^{i}_a \in h^{i}(V(S_1))$. Then $$r_v(c,h^{i}_a)=h^{i}_a+m^1=h^{i}_a+m^1_{*}$$
Since $m^1_{*} > \max(h^{j}(V(S^*_2))) \geq 0$ and $h^{i}_a \geq 0$ (Lemma \ref{lemma:preweak} and \ref{lemma:mhvaluesN0}) we must have $h^{j}_b \in h^{j}(V(S^*_1))$ such that $$r_v(c,h_b^{j})=h_b^{j}+m^1_{*}$$
Because else
$$r_v(c,h^{j}_b)=h^{j}_b < m^1_* < r_v(c,h^{i}_a)$$
This implies that $h^{i}_a=h^{j}_b$.

\textit{Case 2}: $h^{i}_a \notin h^{i}(V(S_1))$ which means that $h^{i}_a \in h^{i}(V(S_2))$. Suppose by contradiction that $h^j_b \in h^j(V(S^*_1))$ then
$$ r_v(c,h^i_a) = h^i_a = r_v(c,h^i_b) = h^i_b + m^1_* = h^i_b + m^1 $$
But since $m^1 > \max(h^i(V(S_2)) \geq 0$ and $h^i_b \geq 0$ (Lemma \ref{lemma:preweak} and \ref{lemma:mhvaluesN0}) we get a contradiction. This means $h^{j}_b \notin h^{j}(V(S^*_1)), h^{j}_b \in h^{j}(V(S^*_2))$ such that $$r_v(c,h^{i}_a)=h^{i}_a=r_v(c,h^{j}_b)=h^{j}_b$$

We are done.
\end{proof}

Consider the following functions:
\begin{align*}
\tau(i,j)= \frac{(i+j)(i+j+1)}{2}+j, \ \
\rho(i,j)=(i+j,ij)
\end{align*}

\begin{lemma} 
\label{lemma:threeclaims}
Two claims:
\begin{itemize}
    \item $\tau:\mathbb{R}\times\mathbb{R} \rightarrow \mathbb{R}$ is continuous and injective in $\mathbb{N} \times \mathbb{N} \rightarrow \mathbb{N}$.
    \item $\rho:\mathbb{R}\times \mathbb{R} \rightarrow \mathbb{R}$ is continuous and injective in $\{\{i,j\} \ | \ i,j \in \mathbb{N} \} \rightarrow \mathbb{N}^2$.

\end{itemize}
\end{lemma}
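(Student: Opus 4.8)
The plan is to dispatch continuity immediately and then treat the two injectivity claims separately, each by an elementary decomposition argument.

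Continuity is the easy part: both $\tau(i,j)=\tfrac{(i+j)(i+j+1)}{2}+j$ and each coordinate of $\rho(i,j)=(i+j,ij)$ are polynomials in the two real arguments, hence continuous on all of $\mathbb{R}^2$. So I would state this in one line and move on to the substantive content, which is injectivity on the integer lattice.

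For $\tau$ (this is the \emph{Cantor pairing function}) I would argue via triangular numbers. Writing $s=i+j$ and $T_s=\tfrac{s(s+1)}{2}$, one has $\tau(i,j)=T_s+j$ with $0\le j\le s$. The key computation is the pair of bounds $T_s\le\tau(i,j)\le T_s+s=T_{s+1}-1<T_{s+1}$, where the identity $T_s+s=T_{s+1}-1$ is a direct algebraic check. This shows $\tau(i,j)$ lands in the half-open block $[T_s,T_{s+1})$. Since these blocks are disjoint and cover $\mathbb{N}_0$ as $s$ ranges over $\mathbb{N}_0$, the value $\tau(i,j)$ determines $s$ uniquely (the unique $s$ with $T_s\le\tau(i,j)<T_{s+1}$), then $j=\tau(i,j)-T_s$, and finally $i=s-j$. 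Hence equal outputs force equal inputs, so $\tau$ is injective on $\mathbb{N}\times\mathbb{N}$; that it maps into $\mathbb{N}$ is immediate since $T_s\in\mathbb{N}_0$ and $j\in\mathbb{N}_0$.

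For $\rho$ restricted to unordered pairs $\{i,j\}$ with $i,j\in\mathbb{N}$, I would invoke Vieta's formulas: $(i+j,ij)$ are exactly the elementary symmetric functions of the multiset $\{i,j\}$, so $i$ and $j$ are the two roots of $x^2-(i+j)x+ij=0$. Given an output $(s,p)=(i+j,ij)$, the roots are $\tfrac{s\pm\sqrt{s^2-4p}}{2}$, which is a uniquely determined unordered pair; thus two unordered pairs with the same sum and product must coincide, giving injectivity, and the codomain claim holds because $i+j,ij\in\mathbb{N}$. I do not anticipate a serious obstacle: the only point requiring genuine care is the bound $T_s+s=T_{s+1}-1$ in the $\tau$ argument, i.e.\ verifying that consecutive triangular-number blocks tile $\mathbb{N}_0$ with neither gaps nor overlaps, since this is precisely what makes the successive recovery of $s$, then $j$, then $i$ well defined.
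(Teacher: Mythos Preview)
Your proof is correct and follows essentially the same approach as the paper: for $\rho$ you both recover $\{i,j\}$ as the roots of $x^2-(i+j)x+ij=0$ via the quadratic formula, and for $\tau$ the paper simply cites the Cantor pairing function's well-known bijectivity (deferring to Wikipedia), whereas you supply the standard triangular-number block argument that such a citation points to. Your treatment is cleaner and more self-contained, but not a different route.
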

\begin{proof}
$\tau$ is the well-known Cantor Pairing Function, see for example \href{https://en.wikipedia.org/wiki/Pairing_function}{Wikipedia} for proof of its bijective properties on $\mathbb{N}^2 \rightarrow \mathbb{N}$, it is clearly continuous on $\mathbb{R}^2 \rightarrow \mathbb{R}$.

$\rho$ is cleary continuous in $\mathbb{R}^2 \rightarrow \mathbb{R}^2$ and if $i,j \in \mathbb{N}$ then $\rho(i,j) \in \mathbb{N}^2$. We will prove that it is injective in $\{\{i,j\} \ | \ i,j \in \mathbb{N} \} \rightarrow \mathbb{N}^2$: 

Suppose $(i+j,ij)=(x,y)$ we want to express $i$ and $j$ in terms of $x$ and $y$. Rearranging and substituting, we get $i=x-j \Rightarrow (x-j)j=y \Rightarrow j^2-xj+y = 0$. Using the quadratic formula, and by symmetry, we get
$$ j = \frac{x \pm \sqrt{x^2 - 4y}}{2}, \ \ i = \frac{x \pm \sqrt{x^2 - 4y}}{2} $$
If $j=\frac{x + \sqrt{x^2 - 4y}}{2}, i = \frac{x - \sqrt{x^2 - 4y}}{2}$ (or other way around) the conditons $i+j=x,ij=y$ holds. But if $j=\frac{x + \sqrt{x^2 - 4y}}{2} = i = \frac{x + \sqrt{x^2 - 4y}}{2}$ then $i+j = x+\sqrt{x^2-4y}$ and $ij=\frac{x^2}{4} + x\sqrt{x^2-4y} + \frac{x^2-4y}{4}$ and conditions hold iff $x^2=4y$ which takes us back to our previous case. Similarly, if $j=\frac{x - \sqrt{x^2 - 4y}}{2} = i = \frac{x - \sqrt{x^2 - 4y}}{2}$ then $i+j = x-\sqrt{x^2-4y}, ij = \frac{x^2}{4} - x\sqrt{x^2-4y} - \frac{x^2-4y}{4}$ and conditions hold iff $x^2=4y$ which again takes us back to our first case. Thus, we have proved that $\rho$ is injective.

\end{proof}

\begin{lemma}
In the above setup, there exists a continuous and bounded function $r:\mathbb{R}^9 \rightarrow \mathbb{R}$ that is injective in $\{ \mathbb{N}^4, \mathbb{N}^4 \}\times \mathbb{N}$. Namely,
\begin{align*}
r(&y_1,h_1,m_1,n_1,y_2,h_2,m_2,n_2,b) = \tau\big(\tau\big(\rho(\tau^4(y_1,h_1,m_1,n_1), \tau^4(y_2,h_2,m_2,n_2))\big), b\big)
\end{align*}
\label{lemma:injective_r}
\end{lemma}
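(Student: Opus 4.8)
The plan is to treat $r$ as a composition of the building blocks $\tau$ and $\rho$ from Lemma \ref{lemma:threeclaims}, and to establish continuity, injectivity, and boundedness separately, propagating the relevant restricted domains through each stage. Continuity is the easiest part: since $\tau:\mathbb{R}^2\to\mathbb{R}$ and $\rho:\mathbb{R}^2\to\mathbb{R}^2$ are continuous by Lemma \ref{lemma:threeclaims}, and $\tau^4$ is a finite nested composition of $\tau$ (read as $\tau^4(a,b,c,d)=\tau(\tau(\tau(a,b),c),d)$), the map $r$ is a finite composition of continuous maps $\mathbb{R}^9\to\mathbb{R}$ and hence continuous.

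For injectivity on $\{\mathbb{N}^4,\mathbb{N}^4\}\times\mathbb{N}$, I would chain the injectivity statements of Lemma \ref{lemma:threeclaims}, verifying at each stage that the output lands in the domain on which the next stage is injective. First, by an easy induction on the nesting, $\tau^4$ is injective from $\mathbb{N}^4\to\mathbb{N}$: $\tau$ is bijective $\mathbb{N}^2\to\mathbb{N}$, so it maps naturals to naturals and keeps every intermediate argument in $\mathbb{N}$. Set $a=\tau^4(y_1,h_1,m_1,n_1)$ and $a^{*}=\tau^4(y_2,h_2,m_2,n_2)$, both in $\mathbb{N}$. Next, $\rho(a,a^{*})=(a+a^{*},aa^{*})\in\mathbb{N}^2$ is injective as a function of the unordered pair $\{a,a^{*}\}$; crucially, $\rho$ is symmetric in its two arguments, which is exactly what makes $r$ well defined (order-independent) on $\{\mathbb{N}^4,\mathbb{N}^4\}$. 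Finally, the two remaining applications of $\tau$ act on arguments in $\mathbb{N}^2$ and are therefore injective.

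To conclude injectivity, suppose two inputs produce the same $r$-value. Peeling off the outer $\tau$ (injective on $\mathbb{N}^2$) forces equality of the last coordinate $b$ and of $\tau(\rho(\cdot))$; peeling off the next $\tau$ forces equality of the $\rho$-values as elements of $\mathbb{N}^2$; injectivity of $\rho$ on unordered pairs then gives equality of the multisets $\{a,a^{*}\}$; and injectivity of $\tau^4$ finally forces equality of the underlying $4$-tuples as a multiset. Hence the two inputs agree as elements of $\{\mathbb{N}^4,\mathbb{N}^4\}\times\mathbb{N}$. For boundedness, the formula as written is unbounded, so I would obtain the bounded version the same way the paper does elsewhere: post-compose with the Sigmoid $\sigma(x)=1/(1+e^{-x})$. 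Since $\sigma$ is continuous, strictly monotone (hence injective), and valued in $(0,1)$, the map $\sigma\circ r$ is continuous, bounded, and injective on exactly the same domain.

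The main obstacle will be the careful domain-tracking in the injectivity argument: each of $\tau$, $\tau^4$, and $\rho$ is only injective on a restricted set (the naturals, or the unordered naturals), so the crux is checking that every intermediate output lands in $\mathbb{N}$ or $\mathbb{N}^2$ and that the symmetrization through $\rho$ simultaneously guarantees well-definedness on the multiset $\{\mathbb{N}^4,\mathbb{N}^4\}$ and injectivity across it. Everything else reduces to composing results already established in Lemma \ref{lemma:threeclaims} together with the Sigmoid remark.
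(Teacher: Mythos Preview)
Your proposal is correct and follows essentially the same approach as the paper, which simply states that the result follows from Lemma~\ref{lemma:threeclaims}. Your version is in fact more thorough: you explicitly track the domains through each stage of the composition and you address the boundedness claim (which the paper's one-line proof does not), resolving it via the Sigmoid post-composition that the paper invokes elsewhere.
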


%We prove Lemma \ref{lemma:injective_r}
\begin{proof} %[Proof of Lemma \ref{lemma:injective_r}]
The proof follows from Lemma \ref{lemma:threeclaims}.
%$r$ is the following function:
%\begin{align*}
%r(&y_1,h_1,m_1,n_1,y_2,h_2,m_2,n_2,b) = \\
%& \tau\big(\tau\big(\rho(\tau^4(y_1,h_1,m_1,n_1), %\tau^4(y_2,h_2,m_2,n_2))\big), b\big)
%\end{align*}
%Thus, the proof follows from Lemma \ref{lemma:threeclaims}.
\end{proof}

\begin{lemma}
\label{lemma:reqfuncsintegers}
For the functions defined in Section \ref{sec:reqfuncs} and in this section, when used in NPA, we always have (i) $h^j(v) \in \mathbb{N}_0$ and (ii) $c(S_k)=(y_k, m^1_k, m^2_k) \in \mathbb{N}_0 \times \mathbb{N}_0 \times \mathbb{N}_0 = \mathbb{N}_0^3$.
\end{lemma}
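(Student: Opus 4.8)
The plan is to reduce the entire statement to two results already in hand: Lemma \ref{lemma:mhvaluesN0}, which guarantees that every $m^1$, $m^2$, and $h$-value arising in NPA lies in $\mathbb{N}_0$, and Lemma \ref{lemma:threeclaims}, whose codomain assertions say that $\tau$ carries $\mathbb{N}\times\mathbb{N}$ into $\mathbb{N}$ and $\rho$ carries pairs of naturals into $\mathbb{N}^2$. Part (i), $h^j(v)\in\mathbb{N}_0$, is then nothing more than the $h$-value clause of Lemma \ref{lemma:mhvaluesN0}, so I would dispatch it in one line.

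For part (ii) I would first note that the second and third components $m^1_k$ and $m^2_k$ of $c(S_k)$ are in $\mathbb{N}_0$ directly by Lemma \ref{lemma:mhvaluesN0}, leaving only the first component $y_k$. I would handle $y_k$ by induction on the step at which $S_k$ is encoded. The base case is the single-vertex encoding $c_{init}(h)=(0,0,h+1)$, where $y_k=0\in\mathbb{N}_0$. For the inductive step, $S_k=S_{1,2}$ has $y_k=r(y_1,h(v_a),m^1_1,m^2_1,y_2,h(v_b),m^1_2,m^2_2,b)$ with $b=\mathbbm{1}_{S_1=S_2}\in\{0,1\}$; the inductive hypothesis gives $y_1,y_2\in\mathbb{N}_0$, Lemma \ref{lemma:mhvaluesN0} gives the $h$- and $m$-arguments in $\mathbb{N}_0$, and $b\in\{0,1\}\subset\mathbb{N}_0$, so every argument fed to $r$ is a non-negative integer. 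Since $r$ is built as $\tau\big(\tau\big(\rho(\tau^4(\cdot),\tau^4(\cdot))\big),b\big)$, each nested $\tau$ and the single $\rho$ maps naturals to naturals by Lemma \ref{lemma:threeclaims}, so the output $y_k$ is again a non-negative integer.

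The argument is essentially bookkeeping rather than anything deep, so I do not anticipate a genuine obstacle. The one point requiring care is to invoke Lemma \ref{lemma:threeclaims} for its codomain statements (that $\tau$ and $\rho$ land in the naturals) rather than merely for injectivity, and to make sure the convention there takes $\mathbb{N}$ to include $0$, consistent with $\tau(0,0)=0$; the induction must also be organized on the encoding step so that the $y$-arguments of the already-encoded subgraphs $S_1,S_2$ are covered by the hypothesis.
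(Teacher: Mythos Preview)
Your proposal is correct and follows essentially the same approach as the paper's proof: invoke Lemma~\ref{lemma:mhvaluesN0} for part (i) and for the $m^1,m^2$ components, then handle the $y$-component by induction on the encoding step, using the base case $c_{init}$ and the fact that $r$ sends tuples of non-negative integers to non-negative integers. The only difference is that you justify the latter via the codomain assertions of Lemma~\ref{lemma:threeclaims}, whereas the paper simply says it ``can be seen by inspection''; your version is, if anything, slightly more careful.
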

\begin{proof}
(i) $h^j(v) \in \mathbb{N}_0$ follows immediately from Lemma \ref{lemma:mhvaluesN0}. Note that (ii) is true for all $c$-values encoded at step $0$ in NPA via $c_{init}$ since all $h$-values are in $\mathbb{N}_0$, also we know that all $m_k^1,m_k^2 \in \mathbb{N}_0$ from Lemma \ref{lemma:mhvaluesN0}. Thus, the only thing we need to consider is the subsequent application of $r$, and it is applied to $h$-values, $c$-values, and $\{0,1\}$-indicators, all of which are in $\mathbb{N}_0$, to create new $c$-values. Since $r$ takes $(\mathbb{N}_0)^*$ to $(\mathbb{N}_0)^*$, which can be seen by inspection, the lemma follows.
\end{proof}

\begin{lemma}
The $r_c$ function with the $r$-function from Lemma \ref{lemma:injective_r} is injective in all its variables.
\end{lemma}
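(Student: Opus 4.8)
The plan is to reduce injectivity of $r_c$ entirely to the injectivity of $r$ established in Lemma \ref{lemma:injective_r}, by observing that the \emph{first} coordinate of the output of $r_c$ is, up to a harmless reordering of arguments, exactly the function $r$. Recall that, writing $c(S_k) = (y_k, m^1_k, m^2_k)$ and $b = \mathbbm{1}_{S_1 = S_2}$,
$$ r_c(\{(c(S_1), h(v_a)), (c(S_2), h(v_b))\}, b) = \big( r(\{(y_1, h(v_a), m^1_1, m^2_1), (y_2, h(v_b), m^1_2, m^2_2)\}, b),\ m^2_1 + m^2_2 + 1,\ 2m^2_1 + 2m^2_2 + 2 \big). $$
Thus the first coordinate already carries all of the input information, and the remaining two coordinates are irrelevant for injectivity.

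First I would note that within each summand the passage from the pair $((y_k, m^1_k, m^2_k), h(v))$ to the $4$-tuple $(y_k, h(v), m^1_k, m^2_k)$ is a fixed permutation of coordinates, hence a bijection; consequently it carries the unordered input pair of $r_c$ bijectively onto an unordered pair in $\mathbb{N}_0^4$, matching the domain $\{\mathbb{N}^4, \mathbb{N}^4\}$ of Lemma \ref{lemma:injective_r}. Next, by Lemma \ref{lemma:reqfuncsintegers} all quantities $y_k, m^1_k, m^2_k$ and every $h$-value lie in $\mathbb{N}_0$, and $b \in \{0,1\} \subset \mathbb{N}$; therefore every argument fed to $r$ lies in the domain $\{\mathbb{N}^4, \mathbb{N}^4\} \times \mathbb{N}$ on which Lemma \ref{lemma:injective_r} asserts injectivity.

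Then I would argue as follows: suppose two inputs to $r_c$ yield the same output. Comparing first coordinates gives equal values of $r$. By Lemma \ref{lemma:injective_r}, the two unordered pairs of $4$-tuples are equal and the two bits $b$ coincide. Undoing the coordinate permutation shows the two unordered pairs $\{(c(S_1), h(v_a)), (c(S_2), h(v_b))\}$ agree (after the usual possible relabeling of which summand is which, which is precisely what the unordered-pair structure permits), and the indicators $\mathbbm{1}_{S_1 = S_2}$ agree. Hence the full inputs to $r_c$ coincide, and $r_c$ is jointly injective in all its arguments.

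I do not expect a genuine obstacle here, since the statement is essentially a repackaging of Lemma \ref{lemma:injective_r}. The only points requiring care are bookkeeping ones: confirming via Lemma \ref{lemma:reqfuncsintegers} that all arguments are natural numbers so that Lemma \ref{lemma:injective_r} applies, and correctly tracking the unordered (set) structure of the argument pair, i.e. that injectivity is meant with respect to the unordered pair together with the bit rather than with respect to a fixed ordering of $S_1$ and $S_2$, so that the ``relabel $S_1,S_2$'' ambiguity is built into the intended codomain of the claim and is not a violation of injectivity.
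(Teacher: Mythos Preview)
Your proposal is correct and follows essentially the same route as the paper: reduce to the injectivity of $r$ on $\{\mathbb{N}^4,\mathbb{N}^4\}\times\mathbb{N}$ via Lemma \ref{lemma:injective_r}, after invoking Lemma \ref{lemma:reqfuncsintegers} to guarantee all arguments lie in $\mathbb{N}_0$. The paper's proof is the same argument written out with explicit equality of the two $r_c$-outputs; your additional remark that only the first output coordinate matters, and your explicit handling of the coordinate permutation and the unordered-pair structure, are minor clarifications rather than a different approach.
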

\begin{proof}
Suppose 
$$ r_c(\{(c^1_1,h^1_1),(c^1_2,h^1_2)\},b_1) = r_c(\{(c^2_1,h^2_1),(c^2_2,h^2_2)\},b_2)  $$
Where 
\begin{align*}
c^1_1 &= ( y^1_1,m^1_1,n^1_1), \ c^1_2 = ( y^1_2,m^1_2,n^1_2)  \\ 
c^2_1 &= ( y^2_1,m^2_1,n^2_1), \ c^2_2 = ( y^2_2,m^2_2,n^2_2)
\end{align*}
This means that 
\begin{align*}
\big(r&(y^1_1, h^1_1, m^1_1, n^1_1, y^1_2, h^1_2, m^1_2, n^1_2, b_1), \\
&n^1_1+n^1_2+1, \ 2n^1_1+2n^1_1+2 \big) = \\
&\big( r(y^2_1, h^2_1, m^2_1, n^2_1, y^2_2, h^2_2, m^2_2, n^2_2, b_2), \\
&n^2_1+n^2_2+1, \ 2n^2_1+2n^2_2+2 \big)
\end{align*}
Thus, from Lemma \ref{lemma:injective_r} we know $r$ is injective in 
$\{ \mathbb{N}^4, \mathbb{N}^4 \}\times \mathbb{N}$. 
%Since $r$ maps $\mathbb{N}^9$ to $\mathbb{N}$, is applied recursively, and $y_{init} \in \mathbb{N}$ we know all input to $r$ are integers. 
By Lemma \ref{lemma:reqfuncsintegers} we know all input to $r$ are in $\mathbb{N}_0$, thus, $r$ is injective, which gives us 
$$ \big( \{(c^1_1,h^1_1),(c^1_2,h^1_2)\},\ b_1 \big) = \big(\{(c^2_1,h^2_1),(c^2_2,h^2_2)\},\ b_2 \big) $$
and we are done.
\end{proof}

\begin{lemma}
For Algorithm \ref{alg:NPA} there exists functions $r_{v}$, $r_c$, $h_{init}, c_{init}$ that satisfies the requirements put forward in Theorem \ref{thm:main}.
\label{lemma:exist}
\end{lemma}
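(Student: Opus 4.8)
The plan is to recognize that Lemma~\ref{lemma:exist} is an assembly statement: I would exhibit the concrete functions from Section~\ref{sec:reqfuncs} as witnesses and verify that they meet each of the three clauses of Theorem~\ref{thm:main}, namely (a) injectivity of $h_{init}$, $c_{init}$, and $r_c$; (b) the per-step local uniqueness of the values $r_v(c(S_{1,2}),\tilde h,\mathbbm{1}_{v\in V(S_1)})$ over $\tilde h\in h^{i}(V(S_1)\cup V(S_2))$; and (c) cross-domain injectivity of $r_v(c,\cdot,\mathbbm{1}_{v\in V(S_1)})$ whenever two encoded subgraphs share a $c$-value. The candidates are $h_{init}(l(v))=l(v)$, $c_{init}(h)=(0,0,h+1)$, the piecewise $r_v$ that offsets by $m^1_{1,2}$ on $V(S_1)$, and the $r_c$ built from the pairing maps $\tau,\rho$ via the $r$ of Lemma~\ref{lemma:injective_r}.

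First I would dispatch clause (a). Injectivity of $h_{init}$ is immediate since it is the identity on $\mathbb{N}_+$, and $c_{init}(h)=(0,0,h+1)$ is an injective affine embedding in its last coordinate. For $r_c$ I would simply invoke the preceding lemma establishing that $r_c$ (with the $r$ of Lemma~\ref{lemma:injective_r}) is injective in all its variables; that lemma already reduced genuine real-injectivity to integer-injectivity using Lemma~\ref{lemma:reqfuncsintegers} (all inputs lie in $\mathbb{N}_0$) together with the injectivity of $\tau$ and $\rho$ on the relevant integer domains from Lemma~\ref{lemma:threeclaims}. Next, clause (b) is exactly the uniqueness statement proved in Lemma~\ref{lemma:preweak}, where $S_k=S_{1,2}$ and $V(S_k)=V(S_1)\cup V(S_2)$, so the set $h^{i+1}(V(S_k))=r_v(c(S_k),h^{i}(V(S_k)))$ being all-distinct is precisely the required local uniqueness. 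Finally, clause (c) is precisely the unnamed cross-domain lemma proved just before Lemma~\ref{lemma:threeclaims}: for $c:=c(S)=c(S^*)$ encoded at step $i$ on $G$ and step $j$ on $G^*$, the map $r_v(c,\cdot)$ is injective across $h^{i}(V(S))$ and $h^{j}(V(S^*))$.

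Since every clause has already been established, I do not expect a substantive obstacle at this stage; the proof is bookkeeping that collects the cited lemmas. The conceptually delicate point on which the whole argument rests is clause (c): ensuring $r_v$ stays injective when comparing $h$-values arising from two \emph{different} runs that happen to produce the same $c$-encoding. This is what forces the specific design of $r_v$ and $r_c$ — the offset $m^1_{1,2}=m^2_1+m^2_2+1$ is chosen to strictly exceed the maximum $h$-value on the complementary part, and the invariant $m^2_k>\max\bigl(h^{i+1}(V(S_k))\bigr)$ from Lemma~\ref{lemma:preweak} together with integrality from Lemma~\ref{lemma:mhvaluesN0} guarantees the offset cannot collide across the two domains. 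I would therefore present clauses (a) and (b) briefly and emphasize that clause (c) is supplied verbatim by that cross-domain lemma, concluding that the functions of Section~\ref{sec:reqfuncs} satisfy all requirements of Theorem~\ref{thm:main}.
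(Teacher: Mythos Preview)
Your proposal is correct and follows essentially the same approach as the paper: the paper's proof is a one-liner (``Consider the functions defined in Section~\ref{sec:reqfuncs} and in this section, as well as the results. The lemma follows.''), and you have simply made explicit which preceding lemma handles which clause of Theorem~\ref{thm:main}. Your mapping of clause (a) to the injectivity lemmas for $h_{init},c_{init},r_c$, clause (b) to Lemma~\ref{lemma:preweak}, and clause (c) to the cross-domain injectivity lemma is exactly the intended assembly.
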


\begin{proof} %[Proof of Lemma \ref{lemma:exist}]
Consider the functions defined in Section \ref{sec:reqfuncs} and in this section, as well as the results. The lemma follows.
\end{proof}

% \begin{corollary}
% We only need to apply a function $r_{v}$ to the $h$-values of $V(s_1)$ (since the other function can be the identity function) and only where the $h$-values are identical, i.e. $\forall v_1 \in V(s_1), \forall v_2 \in V(s_2), h(v_1)=h(v_2)$. 
% \end{corollary}

\subsection{Corollaries}

We add a remark about the subgraphs that are encoded during runs of NPA on a graph $G$.

\begin{remark}
On one run of NPA on graph $G$, the multiset $W(G)$ encodes a collection of subgraphs of $G$, for example, these subgraphs always include the vertices and the largest (by inclusion) connected subgraphs. The order in which edges are processed determines which other subgraphs that are encoded, but it is not too hard to see that if NPA is run on all possible orders on edges, and without NPA changing the order, it will encode each combination of disjoint connected subgraphs. Since any subraph consists of a collection of disjoint connected subgraphs, it will indirectly encode all possible subgraphs. 
%Now, the encodings can also depend on the order of nodes within edges, but if NPA is run on every possible indexing of the nodes of $G$, i.e. on all $H \in [G]$, then it will encode all graphs $S \subset H, H \in [G]$, which is the multiset-set of encodings $\bigcup_{H \in [G]}W(H)$. %=\bigcup_{W(H) \in \bm{W}([G])}W(H)$.
\end{remark}

Full proof of Lemma \ref{lemma:fcountinjective}
\begin{proof}
\label{proof:lemmapow}
% Mapping $Z : \mathcal{X} \rightarrow \mathbb{N}$, and $|X| < N$. Consider $f(x) = N^{-Z(x)}$. See Appendix.
(From \cite{powerful}). We first prove that there exists a mapping $f$ so that $\sum_{x \in X} f (x)$ is unique for each multiset $X$ bounded size. Because $\mathcal{X}$ is countable, there exists a mapping $Z : \mathcal{X} \rightarrow \mathbb{N}$ from $x \in \mathcal{X}$ to natural numbers. Because the cardinality of multisets $X$ is bounded, there exists a number $N \in \mathbb{N}$ so that $|X| < N$ for all $X$. Then an example of such $f$ is $f(x) = N^{-Z(x)}$. This $f$ can be viewed as a more compressed form of an one-hot vector or $N$-digit presentation. Thus, $h(X) = \sum_{x \in X} f(x)$ is an
injective function of multisets. $\phi( \sum_{x \in X} f (x))$ is permutation invariant so it is a well-defined multiset function. For any multiset function $g$, we can construct such $\phi$ by letting $\phi(\sum_{x \in X}f(x))= g(X)$. Note that such $\phi$ is
well-defined because $h(X) = \sum_{x \in X} f(x)$ is injective.
\end{proof}

\begin{corollary}
There exists a function $f$ such that any two graphs $G$ and $H$ in $\mathcal{G}_b$ are isomorphic if $ \sum_{w \in W(G)}f(w) = \sum_{w \in W(H)}f(w)$.
\label{cor:sumbound}
\end{corollary}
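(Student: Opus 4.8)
The plan is to reduce the statement to Lemma \ref{lemma:fcountinjective} and then to invoke the iso-injectivity of the whole multiset $W(G)$. Throughout I would fix the concrete functions furnished by Theorem \ref{thm:NPAtheorem}, so that every encoding produced by NPA is iso-injective.

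First I would pin down the ground set of encodings. Running NPA with the functions of Section \ref{sec:reqfuncs}, Lemma \ref{lemma:reqfuncsintegers} shows that every produced $c$-encoding lies in $\mathbb{N}_0^{d_c}$; hence the set $\mathcal{X} := \bigcup_{G \in \mathcal{G}} \{\, w : w \in W(G)\,\}$ of all encodings that can ever appear is a subset of $\mathbb{N}_0^{d_c}$ and is therefore countable. Next I would bound the sizes of the multisets in play: NPA outputs $|W(G)| = n + m = |V(G)| + |E(G)|$, and for $G \in \mathcal{G}_b$ this is at most $b$, so every $W(G)$ with $G \in \mathcal{G}_b$ is a multiset over $\mathcal{X}$ of cardinality at most $b$. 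These two observations are exactly the hypotheses of Lemma \ref{lemma:fcountinjective}: a countable ground set and a uniform bound on multiset size.

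I would then apply Lemma \ref{lemma:fcountinjective} with bound $N = b+1$ to obtain a function $f : \mathcal{X} \to \mathbb{R}^n$ for which $\sum_{w \in X} f(w)$ is unique across all multisets $X \subseteq \mathcal{X}$ of size at most $b$. In particular, for $G,H \in \mathcal{G}_b$ the equality $\sum_{w \in W(G)} f(w) = \sum_{w \in W(H)} f(w)$ forces the multiset equality $W(G) = W(H)$. It then remains only to deduce $G \simeq H$ from $W(G) = W(H)$, and here I would use that the multiset-valued map $G \mapsto W(G)$ is iso-injective: each encoding is iso-injective by Theorem \ref{thm:NPAtheorem}, the sub-multiset $C(G) \subseteq W(G)$ records precisely the encodings of the connected components of $G$, and the disconnected case reduces to the connected one because NPA builds each component independently. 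This yields the desired $f$.

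The hard part will be this final implication $W(G) = W(H) \Rightarrow G \simeq H$, not the bookkeeping that feeds Lemma \ref{lemma:fcountinjective}. The subtlety is that $W(G)$ is a flat multiset that discards the inclusion structure of the construction, so one cannot naively read the component sub-multiset $C(G)$ off $W(G)$; the argument must lean on the already-established iso-injectivity of $W$ (equivalently, recover $C(G)$ as the sub-multiset of inclusion-maximal connected encodings and then apply the canonization property that $C(G) = C(H)$ implies $G \simeq H$). I would also check that using the integer-valued encodings rather than their Sigmoid-bounded variants is legitimate here, which it is, since this corollary concerns only iso-injectivity and not approximation by NNs.
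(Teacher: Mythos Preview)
Your proposal is correct and matches the paper's intended argument. The paper states Corollary~\ref{cor:sumbound} without an explicit proof, treating it as immediate from Lemma~\ref{lemma:fcountinjective} together with the Remark following the proof of Lemma~\ref{lemma:disjoint}, which asserts that the multiset-valued map $G \mapsto W(G)$ is iso-injective. Your write-up spells out precisely these two ingredients: countability of the encoding space (via Lemma~\ref{lemma:reqfuncsintegers}), the bound $|W(G)| = |V(G)|+|E(G)| \le b$, and then the appeal to iso-injectivity of $W$; this is exactly the route the paper has in mind.

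The one place you add content beyond the paper is your discussion of the ``hard part'', namely $W(G)=W(H)\Rightarrow G\simeq H$. The paper simply asserts this in the cited Remark, so you are right that the justification there is terse. Your idea of recovering $C(G)$ from $W(G)$ as the inclusion-maximal encodings is sound, and can be made precise because $r_c$ is injective: from any $c(S_{1,2})$ one recovers $c(S_1),c(S_2)$, so the full parse tree---and hence the entire sub-multiset $W(G_i)$ contributed by each component $G_i$---is determined by the component encoding $c(G_i)$. Picking an encoding in $W(G)$ of maximal $|V|+|E|$ therefore identifies a component encoding, lets you strip off its contribution, and recurse. That fills in the Remark; but for the corollary itself you may, as the paper does, simply cite it.
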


\begin{remark}
\label{remark:correctionremark}
%For a $[G]$ there is a finite number of encodings $W(H), H \in [G]$ where each such $W(H)$ is a finite multiset of encodings corresponding to subgraphs of $H$. 
Given a graph isomorphism class $[S]$ and assuming NPA does not change the order of the edges, there is a Turing-decidable function $f_{[S]}:\mathcal{G} \rightarrow [0,1]$ that on input $G$ returns $1$ if there exists $S \in [S], H \in [G]$ with $S \subset H$ and $0$ otherwise; in pseudo-code:
\begin{align*}
&f_{[S]} \text{ on input } G, \\
&\hspace{0.5cm} \forall H \in [G], \forall S \in [S], \\
&\hspace{1.0cm} \text{if }W(S) \subset W(H) \text{ return } 1, \\
&\hspace{0.5cm} \text{return }0
\end{align*}
which is Turing-decidable since for any $G\in \mathcal{G}$ all such sets $[G], [S], W(H), W(S)$ are finite. However, a similar function for detecting the presence of a subgraph in isomorphism class $[S]$ in graph $G$ given we only have one encoding $E(G)$ for all of $G$ must not exist. Without some subset-information in the encoding we are left to (pseudo-code):
\begin{align*}
&f_{[S]} \text{ on input } G, \\
&\hspace{0.5cm}\forall H \in \mathcal{G}, \exists S \in [S], S \subset H, \\
&\hspace{1.0cm}\text{if }E(G) = E(H)\text{ return } 1, \\
&\hspace{0.5cm}\text{return }0
\end{align*}
which is Turing-recognizable but not Turing-decidable, because the number of graphs $H \in \mathcal{G}$ that contain subgraphs in $[S]$ is infinite. This points to the strength of having the encoding of a graph $G$ coupled with encodings of its subgraphs.
\end{remark}

\subsection{Use of Neural Networks}

We make use of the following functions:
\begin{align*}
    c&_{init}(i) = (0,0,i+1)\\ 
    f&_1(i,j) = i+j+1 \\
    f&_2(i,j) = 2i+2j+2 \\
    r&(y_1,h_1,m_1,n_1,y_2,h_2,m_2,n_2,b) = \\
& \tau\big(\tau\big(\rho(\tau^4(y_1,h_1,m_1,n_1), \tau^4(y_2,h_2,m_2,n_2))\big), b\big) \nonumber \\
r&_v(\dots,m,h,\mathbbm{1}_{ind}) = h+\mathbbm{1}_{ind}m 
\end{align*}
Where
\begin{align*}
\tau(i,j)= \frac{(i+j)(i+j+1)}{2}+j, \ \
\rho(i,j)=(i+j,ij)
\end{align*}

To a lesser extent we use
$$f_3(i) = N^{-i}$$

By Theorem \ref{thm:finiteuniversal}, NNs can perfectly approximate any function on a finite domain so the case of $\mathcal{G}_b$ is straightforward. 
% We know from Theorem [] that NNs can perfectly approximate any function on a finite domain. Since the set of bounded graphs $\mathcal{G}_b$ is finite, we know from previous results that NNs can be used to perfectly approximate any functions on $\mathcal{G}_b$, including $c_{init}$, $f_1$, $f_2$, $r$, $r_v$, and $f_3$.
% \begin{remark}
% For unbounded graphs, we will assume that a bounded injective function like Sigmoid or Tanh can be pointwise approximated by a NN. This is because they are often used as activation functions and help us bound our domains.
% \end{remark}
However, for countably infinite $\mathcal{G}$ the situation is different. %Consider functions from Section \ref{sec:reqfuncs} and \ref{sec:corollaries} (Lemma \ref{lemma:fcountinjective}).
Note that these functions are continuous (in $\mathbb{R}^*$) but not bounded and that we are applying these functions recursively and would want both the domain and the image to be bounded iteratively.
%(Also note $f_3:\mathbb{R} \rightarrow \mathbb{R}^+$ is injective, since the exponential is injective with $f_3^{-1}:\mathbb{R}^+ \rightarrow \mathbb{R}, f_3^{-1}(x) =-\ln(x)$)
Without losing any required properties we can compose these functions, $f$, with an injective, bounded, and continuous function with continuous inverse such as Sigmoid, $\sigma$, in the following way $f^*=\sigma \circ f \circ \sigma^{-1}$, and use $h_{init}(l(v)) = \sigma(l(v))$. Then these functions can be pointwise approximated by NNs. 

\begin{lemma}
$\sigma:\mathbb{R}\rightarrow (0,1)$, $\sigma(x)=\frac{1}{1+e^x}$ is continuous, bounded, and injective. Also, its inverse $\sigma^{-1}:(0,1)\rightarrow \mathbb{R}$ is continuous and injective. 
\end{lemma}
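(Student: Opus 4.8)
The plan is to verify each of the five asserted properties in turn, all of which follow from elementary real analysis. First I would establish continuity of $\sigma(x)=\frac{1}{1+e^x}$: the exponential $x\mapsto e^x$ is continuous on $\mathbb{R}$, hence so is $x\mapsto 1+e^x$, and since this denominator satisfies $1+e^x>1>0$ for every $x$, the quotient $\sigma$ is continuous as a ratio of continuous functions with nonvanishing denominator. The same inequality $1+e^x>1$ immediately yields boundedness with the sharper claim on the codomain: because $e^x>0$ we have $1+e^x>1$, so $0<\sigma(x)<1$, placing $\operatorname{im}(\sigma)\subset(0,1)$.

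Next I would prove injectivity by showing $\sigma$ is strictly monotone. Computing
$$\sigma'(x)=\frac{-e^x}{(1+e^x)^2}<0 \quad\text{for all }x\in\mathbb{R},$$
shows $\sigma$ is strictly decreasing, and a strictly monotone function is injective. (Note that with this sign convention $\sigma$ decreases from the limit $1$ to the limit $0$, unlike the standard logistic sigmoid, but monotonicity is all that injectivity requires.) To pin down that the image is exactly $(0,1)$, I would combine continuity and strict monotonicity with the limits $\lim_{x\to-\infty}\sigma(x)=1$ and $\lim_{x\to+\infty}\sigma(x)=0$, so by the intermediate value theorem $\sigma$ is a continuous strictly decreasing bijection $\mathbb{R}\to(0,1)$.

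For the inverse $\sigma^{-1}:(0,1)\to\mathbb{R}$, I would exhibit it explicitly. Solving $y=\frac{1}{1+e^x}$ gives $e^x=\frac{1-y}{y}$, hence
$$\sigma^{-1}(y)=\ln\!\left(\frac{1-y}{y}\right),$$
which is well defined for $y\in(0,1)$ since then $\frac{1-y}{y}>0$. This formula is continuous on $(0,1)$ as a composition of the continuous map $y\mapsto\frac{1-y}{y}$ (with strictly positive values there) and the continuous logarithm, and it is injective because it is the two-sided inverse of the bijection $\sigma$. Alternatively one can skip the explicit formula and simply invoke the standard theorem that the inverse of a continuous strictly monotone bijection between real intervals is itself continuous and strictly monotone, hence injective. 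There is no genuine obstacle here; the only point demanding a moment's care is confirming that the stated $\sigma$ is strictly monotone so that injectivity and a well-defined continuous inverse both follow, and that its image is precisely $(0,1)$ rather than merely contained in it.
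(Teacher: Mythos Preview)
Your argument is correct and follows essentially the same route as the paper: continuity from that of the exponential, the explicit inverse $\sigma^{-1}(y)=\ln\bigl(\tfrac{1-y}{y}\bigr)$, and continuity of the inverse via continuity of $\ln$. The only cosmetic difference is that you establish injectivity of $\sigma$ by computing $\sigma'<0$, whereas the paper infers it directly from exhibiting the two-sided inverse; both are fine and your treatment of the image being exactly $(0,1)$ is in fact more careful than the paper's.
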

\begin{proof}
$\sigma$ is continuous since the exponential function is continuous, and it is clearly bounded with $\operatorname{im}(\sigma) = (0,1)$. Furthermore, its inverse is $\sigma^{-1}(x) = \ln(\frac{1-x}{x}) : (0,1) \rightarrow \mathbb{R}$, thus it is injective. Since $\ln$ is continuous so is $\sigma^{-1}$, and since $\sigma^{-1}$ is the inverse of a function, it is injective.
\end{proof}
The required functions then become:
\begin{align*}
&c^*_{init}:(0,1) \rightarrow (0,1), \ c^*_{init}=\sigma \circ c_{init} \circ \sigma^{-1}\\
&f^*_1:(0,1)^2\rightarrow (0,1), \ f^*_1=\sigma \circ f_1 \circ \sigma^{-1}\\
&f^*_2:(0,1)^2\rightarrow (0,1), \ f^*_2=\sigma \circ f_2 \circ \sigma^{-1} \\
&r^*:\{(0,1)^4,(0,1)^4\}\times (0,1) \rightarrow (0,1), \ r^*=\sigma \circ r \circ \sigma^{-1} \\
&r_v^*: (0,1)^3 \rightarrow (0,1), \ r_v^*=\sigma \circ r_v \circ \sigma^{-1} 
% &f_3^*:(0,1) \rightarrow (0,1), \ f_3^* = \sigma \circ f_3 \circ \sigma^{-1} 
\end{align*}

It follows from the setup and Lemma \ref{lemma:reqfuncsintegers} that if $\operatorname{im}(h_{init}) \subset \{ \sigma(i) \ | \ i \in \mathbb{N}\} $ then all these functions maintain their required properties. All these functions are continuous and bounded (iteratively on $(0,1)$ by $(0,1)$) in $\mathbb{R}^*$. Thus, by Theorem \ref{thm:2ndapproxthm}, they can be pointwise approximated by a NN. Yet, for $f_3$ the situation is a little different because we care about the sum $\sum_{x \in X} f_3(x)$ over a bounded multiset $X$. However, note that all the domain consists of $\mathbb{N}_0$ so $f_3$ is bounded by $(0,1]$. Thus we can pointwise approximate 
$$f^*_3:(0,1)\rightarrow (0,1]: f^*_3 = f_3 \circ \sigma^{-1}$$
which suffices, and if $X$ is bounded, so is the sum. 

However, it also follows, due to the use of $\sigma$, that the pointwise approximation error is going to be more likely to cause problems for large values.

% They are injective because each domain $(0,1)=\sigma(\mathbb{N})$, and for all $c$-values we have $c \in (0,1)^3$. 
% \begin{theorem}
% Functions $r_v, r_c, h_{init}, c_{init}$ in Algorithm \ref{alg:NPA} that satisfies requirements of Theorem \ref{thm:main}, and function $f_3(i)=N^{-i}$ from Section \ref{sec:corollaries}, can be perfectly approximated by NNs for graphs in $\mathcal{G}_b$ and pointwise approximated for graphs in $\mathcal{G}$.
% \end{theorem}

% \begin{proof}
% Proof follows from discussion above.
% \end{proof}
\subsubsection{Approximation Error and its Accumulation}

Recursive application of a NN might increase the approximation error. We have the following equations describing successive compositions of a NN $\varphi$:
\begin{align*}
    &||f(f(x))-\varphi(\varphi(x))|| \\ 
    &= ||f(f(x)) - \varphi(f(x) + \epsilon)|| \\
    &= ||f(f(x)) - f(f(x)+\epsilon)+ \epsilon||
\end{align*}
Future work should investigate the effects of this likely accumulation.

\subsection{Class-Redundancy, Sorting, Parallelize, and Subgraph Dropout}

Again, the class-redundancy in the algorithm and functions we propose enters at the sort functions $s_e$ (sorts edges) and $s_v$ (sorts nodes within edges). 
%Since identical ordering of the edges and the nodes within edges give the same encoding, 
Thus, a loose upper bound on the \textit{class-redundancy} is $O((m!)2^m)$.
%Alternatively, we can simply use an ordering on the nodes $0(n!)$ if it is smaller. 
However, a more exact upper bound is $O((t_1!)(t_2!)\dots(t_k!)(2^p))$, where $t_i$ are the sizes of the consecutive ties for the sorted edges, and $p$ (bounded by $m$) is the number of ties for the sorting of nodes within edges. 
An even better upper bound is $$O((t_{1,1}!)\dots(t_{1,l_1}!)(t_{2,1}!)\dots(t_{k,l_k}!)(2^p))$$
where each $t_{i,j}$ is the number of ties within group $j$ of groups of subgraphs that could be connected within the tie $i$. The order in between disconnected tied subgraph groups does not affect the output. 
%Again, $p$ is the number of edges where $s_v$ has ties .
% See Appendix for \#edge-orders, i.e. $O((t_{1,1}!)\dots(t_{1,l_1}!)(t_{2,1}!)\dots(t_{k,l_k}!))$, on some datasets. 

In Table \ref{table:redundancy} you can find \#edge-orders, that is $O((t_{1,1}!)\dots(t_{1,l_1}!)(t_{2,1}!)\dots(t_{k,l_k}!))$, and \#levels on some datasets.

\begin{table*}%[h]
\caption{Edge-orders and levels.}
\begin{center}
\begin{tabular}{l l l l l l}
\hline
Datasets: &  & NCI1 & MUTAG & PROTEINS & PTC  \\
Avg \# nodes: & & 30 & 18 & 39 & 26 \\
Avg \# edges: & & 32 & 20 & 74 & 26 \\
\hline
%$O($avg \# edge-orders$)$: & 10^{46} & 10^8 & 10^{143} & 10^{69} \\
%$O($avg \# class-redundancy$)$: & 10^{54} & 10^{13} & 10^{211} & 10^{79} \\
% $O($avg \# edge-orders$)$: & 10^{30} & 10^9 & 10^{143} & 10^{25} \\
% $O($avg \# class-redundancy$)$: & 10^{39} & 10^{14} & 10^{211} & 10^{37} \\
$O($median \# edge-orders$)$: & degs-and-labels & $10^7$ & $10^5$ & $10^{13}$ & $10^5$ \\
$O($median \# edge-orders$)$: & two-degs & $10^9$ & $10^7$ & $10^{23}$ & $10^6$ \\
$O($median \# edge-orders$)$: & one-deg & $10^{20}$ & $10^{14}$ & $10^{36}$ & $10^{16}$ \\
$O($median \# edge-orders$)$: & none & $10^{31}$ & $10^{17}$ & $10^{62}$ & $10^{23}$ \\
\hline
Avg samples \# levels: & degs-and-labels & 12 & 11 & 41 & 9 \\
Avg samples \# levels: & two-degs & 12 & 10 & 41 & 9 \\
Avg samples \# levels: & one-deg & 14 & 11 & 41 & 13 \\
Avg samples \# levels: & none & 12 & 14 & 39 & 13 \\
%O($median \# class-redundancy$)$: & $all$ & 4096 & 256 & 10^{6} & 128 \\

\hline
\end{tabular}
\end{center}

\label{table:redundancy}
\end{table*}

\subsection{Neural Networks}

% We use NNs for all non-sort functions. For NPA and NPBA we use for $r_c$ a tree-LSTM similar to those in \cite{treelstm} and for $r_v$ we use a LSTM. See Appendix for details. 

For NPBA we let $c(S_i) = (c^1_i,c^2_i)$ be the encoding for a subgraph $S_i$ and use for $r_c$:
\begin{align*}
    i &= \sigma(W_{i} (c^2_0 + c^2_1) + b_{i}) \\
    f_1 &= \sigma(W_{f}c^2_0 + b_{f}) \\
    f_2 &= \sigma(W_{f}c^2_1 + b_{f}) \\
    g &= \tanh(W_{g} (c^2_0 + c^2_1) + b_{g}) \\
    o &= \sigma(W_{o} (c^2_0 + c^2_1) + b_{o}) \\
    c^1_{1,2} &= f_1 * c^1_{0} + f_2 * c^1_{1} + i * g \\
    c^2_{1,2} &= o * \tanh(c^1_{1,2}) 
\end{align*} 

For the NPA we use for $r_c(\{(c(S_1),h_1), (c(S_2),h_2)\}, s:=\mathbbm{1}_{S_1 = S_2})$:
\begin{align*}
    i &= \sigma(W_{i,h}(h_1+h_2) + W_{i,c} (c^2_1 + c^2_2) + W_{i,s}s + b_{i}) \\
    f_1 &= \sigma(W_{f,h}h_1 + W_{f,c}c^2_1 +  W_{f,s}s + b_{f}) \\
    f_2 &= \sigma(W_{f,h}h_2 + W_{f,c}c^2_2 + W_{f,s}s + b_{f}) \\
    g &= \tanh(W_{g,h}(h_1+h_2) + W_{g,c} (c^2_1 + c^2_2) + W_{g,s}s + b_{g}) \\
    o &= \sigma(W_{o,h}(h_1+h_2) + W_{o,c} (c^2_1 + c^2_2) + W_{o,s}s + b_{o}) \\
    c^1_{1,2} &= f_1 * c^1_{1} + f_2 * c^1_{2} + i * g \\
    c^2_{1,2} &= o * \tanh(c^1_{1,2}) 
\end{align*} 
Where $s= \mathbbm{1}_{S_1 = S_2}$ and the encoding for a subgraph $S_i$ is $c(S_i) = (c^1_i, c^2_i)$ and the $h$-value of a node $v_j$ is encoded by $h_j$ (so $h_1$ and $h_2$ above encode $h(v_a)$ and $h(v_b)$ respectively).

For $r_{v}(c(S_{1,2}), h_v, t:= \mathbbm{1}_{v \in V(s_1)} )$ we use (with a different set of weights)
\begin{align*}
    i &= \sigma(W_{i,c}c^2_{1,2} + W_{i,t}t + b_{i}) \\
    f &= \sigma(W_{f,c}c^2_{1,2} + W_{f,t}t + b_{f}) \\
    g &= \tanh(W_{g,c}c^2_{1,2} + W_{g,t}t +b_{g}) \\
    o &= \sigma(W_{o,c}c^2_{1,2} + W_{o,t}t + b_{o}) \\
    h_v &= f * h_v + i * g 
    %c^*_1 &= f * c_{1,2}+ i * g \\
    %x_1 &= o * \tanh(f * h_{1,2} + i * g) \\
\end{align*} 
Where $t=\mathbbm{1}_{v \in V(s_1)}$.
Intuitively, we make it easy for the label to flow through. %one encoding for the component and for each node -- the node's significance or place in the component.

%%%%%%%%%%%%%%%%%%%%%%%%%%%%%%%
%%%%%%%%%%%%%%%%%%%%%%%%%%%%%%%
%%%%%%%%%%%%%%%%%%%%%%%%%%%%%%%
%%%%%%%%%%%%%%%%%%%%%%%%%%%%%%%
%%%%%%%%%%%%%%%%%%%%%%%%%%%%%%%
%%%%%%%%%%%%%%%%%%%%%%%%%%%%%%%

\section{Experiments}

% \subsection{Class-Redundancy and Levels}
% \label{appendix:classredundancylevels}

\subsection{Synthetic Graphs}

The ordering of the nodes of a graph $G$ are randomly shuffled before $G$ is feed to NPA and the output depends to some extent on this order. This makes it hard for a NN to overfit to the features that NPA produces on a training set. For datasets where the class-redundancy is large (e.g regular graphs) NPA might never produce the same encoding between the gradient steps and the training accuracy evaluation. This may cause NNs to overfit to the encodings NPA produces during the batch updates and underfit the encodings produced for evaluation of training accuracy. Even during training, NPA (and NPBA) might never produce the same representation for the same graph twice.

\subsection{Experiment Details}

We try and compare algorithms at the task of classifying graphs. Every dataset maps each of its graphs to a ground-truth class out of two possible classes.

We report the average and standard deviation of validation accuracies across the 10 folds within the cross-validation. We use the Adam optimizer with initial learning rate 0.01 and decay the learning rate by 0.5 every 50 epochs. We tune the number of epochs as a hyper-parameter, i.e., a single epoch with the best cross-validation accuracy averaged over the 10 folds was selected.
%Note that due to the small dataset sizes, an alternative setting, where hyper-parameter selection is done using a validation set, is extremely unstable, e.g., for MUTAG, the validation set only contains 18 data points.

In the experiments, the $W(G)$ features are summed and passed to a classify-NN consisting of either one fully-connected layer and a readout layer (for MUTAG, PTC, and PROTEINS) or two fully-connected layers and a readout layer (for NCI1), where the hidden-dim of the fully connected layers is of size $d_{hidden}$. For $h_{init}$ we use a linear-layer followed by a batchnorm (for MUTAG, PTC, and PROTINES) or a linear-layer followed by activation function and batchnorm (for NCI1). In addition, for NCI1 we used dropout=0.2 after each layer in the classify-net and on the vectors of $W(G)$ before summing them.

Also, in our experiments we skipped including the $w_i$ features for the single nodes. In fact, all datasets consist of connected graphs.

For the NPBA tree-lstm the dimensions of $c^1$ and $c^2$ is $d_{hidden}$. For the NPA the dimensions of $c^1$ and $c^2$ is $d_{hidden}$ and the dimension of $h$ is $d_{hidden}/2$. 

We used the following settings for $d_{hidden}$ and batch size:
\begin{itemize}
    \item PTC, PROTEINS, and MUTAG we used $d_{hidden}=16$, and batch-size=$32$.
    \item NCI1 we used $d_{hidden}=64$, and batch-size=$128$.
\end{itemize}

\end{document}